\definecolor{MyDarkBlue}{rgb}{0,0.08,0.45}
\numberwithin{equation}{section}
\newtheorem{theorem}{{\bf\sc Theorem}}
\newtheorem{lemma}{{\bf\sc Lemma}}
\newtheorem{corollary}{{\bf\sc Corollary}}
\newtheorem{proposition}{{\bf\sc Proposition}}
\newtheorem{lem}{{\bf \sc Lemma}}[section]
\newtheorem{prop}{{\bf\sc Proposition}}[section]
\DeclareMathOperator*{\argmin}{argmin}
\DeclareMathOperator*{\argmax}{argmax}
\DeclareMathOperator*{\cvgto}{\stackrel{\mathrm{P}}{\longrightarrow}}
\DeclareMathOperator*{\wkto}{\rightsquigarrow}
\providecommand{\E}{\mathrm{E}}
\providecommand{\var}{\mathrm{var}}
\providecommand{\cov}{\mathrm{cov}}
\providecommand{\Prob}{\mathrm{P}}
\providecommand{\sign}{{\rm sign}}
\providecommand{\Q}{{\overline{q}}}
\renewcommand{\Pr}{\Prob}
\renewcommand{\bar}{\overline}
\renewenvironment{proof}[1][Proof]{\noindent\text{#1.} }{\ \rule{0.5em}{0.5em}}
\providecommand{\cvgto}{\longrightarrow^{\rm P}}
\providecommand{\wkto}{\rightsquigarrow}
\providecommand{\betaRML}{\hat \beta_R^{\mathrm{ML}}}
\providecommand{\betaRMD}{\hat \beta_R^{\mathrm{MD}}}
\providecommand{\betaRMDL}{\hat \beta_{R,L}^{\mathrm{MD}}}
\providecommand{\betaRMDT}{\hat \beta_{R,T}^{\mathrm{MD}}}
\providecommand{\betanDC}{\breve{\beta}_n^{\mathrm{DC}}}
\providecommand{\betanMD}{\breve{\beta}_n^{\mathrm{MD}}}
\providecommand{\bnDC}{\breve{b}_n^{\mathrm{DC}}}
\providecommand{\bnMD}{\breve{b}_n^{\mathrm{MD}}}
\providecommand{\betanDCell}{\breve{\beta}_{n,\ell}^{\mathrm{DC}}}
\providecommand{\betanMDell}{\breve{\beta}_{n,\ell}^{\mathrm{MD}}}
\providecommand{\bnMDell}{\breve{b}_{n,\ell}^{\mathrm{MD}}}
\begin{document}

\title{A Network Formation Model Based on Subgraphs}

\author{Arun G. Chandrasekhar$^{\ddagger}$}
\author{Matthew O. Jackson$^{\star}$ }
\date{Revision: \today}

\thanks{This grew out of a paper: ``Tractable and Consistent Random Graph Models,'' (\url{http://arxiv.org/abs/1210.7375}), which we have now split
into two pieces.  This part contains the material on subgraph generation models and includes new results on identification, asymptotic normality, and estimation via minimum distance
that were not part of the original paper.
We thank Alberto Abadie, Isaiah Andrews, Emily Breza, Aureo de Paula,
Paul Goldsmith-Pinkham,
Bryan Graham, Han Hong, Guido Imbens, Michael Leung, Shane Lubold, Elena Manresa, Tyler McCormick,
Angelo Mele, Joe Romano, Elie Tamer,  and the referees, as well as seminar participants, for helpful comments and suggestions.  We thank Shreya Chaturvedi,  Vasu Chaudhary, Shobitha Cherian,  Andres Drenik, Anoop Singh Rawat, and Meghna Yadav for valuable research assistance.
Chandrasekhar is grateful for support from the NSF Graduate Research Fellowship Program,
NSF grant  SES-1156182, and the Alfred P. Sloan Foundation.
Jackson gratefully acknowledges financial support from the NSF under grants SES-1629446  and  SES-2018554 and from grant FA9550-12-1-0411
from the AFOSR and DARPA, and ARO MURI award No. W911NF-12-1-0509. }
\thanks{$^{\ddagger}$Department of Economics, Stanford University; member of the NBER; member of J-PAL}
\thanks{$^{\star}$Department of Economics, Stanford University; external faculty member of the Santa Fe Institute}

\begin{abstract}
We develop a new class of random graph models for the statistical estimation of network formation---subgraph generated models (SUGMs).
Various subgraphs---e.g., links, triangles, cliques, stars---are generated and their union results in a network. We show that SUGMs are identified and establish the consistency and asymptotic distribution of parameter estimators in empirically relevant cases. We show that a simple four-parameter SUGM matches basic patterns in empirical networks more closely than four standard models (with many more dimensions): (i) stochastic block models; (ii) models with node-level unobserved heterogeneity; (iii) latent space models; (iv) exponential random graphs.
We illustrate the framework's value via several applications using networks from rural India. We study whether network structure helps enforce risk-sharing and  whether cross-caste interactions are more likely to be private. 
We also develop a new central limit theorem for correlated random variables,
which is required to prove our results and is of independent interest.

\textsc{JEL Classification Codes:} D85, C51, C01, Z13.

\textsc{Keywords:} Subgraphs, Random Networks, Random Graphs, Exponential Random Graph Models, Exponential Family, Social Networks, Network Formation, Consistency, Central Limit Theorem, Sparse Networks, Multiplex, Multigraphs
\end{abstract}

\thispagestyle{empty}
\setcounter{page}{0}

\maketitle

\newpage

\section{Introduction}

Networks of interactions impact many economic behaviors including  insuring one's self (e.g., \citet*{caijs2015}), participating in microfinance (e.g., \citet{banerjee2013diffusion}),
educating one's self (e.g., \citet*{calvoarmengolpz2009,carrellsw2013}),
and engaging in criminal behavior (e.g., \citet*{glaeserss1996,patacchiniz2008}).
Networks of interactions are also essential to understanding financial contagions
(e.g., \citet*{gaik2010,elliottgj2014,acemogluot2015}), as well as world trade
(e.g., \citet{chaney2016}), inter-state war (e.g., \citet*{jacksonn2015,koenigrtz2015}), and a host of other economic phenomena.
As such, the structure that a network takes has profound consequences---changing the possibility of contagions, the decisions that people make, and the beliefs that people hold---and so it essential to understand and estimate network formation.
Moreover, networks are of interest precisely because there are externalities---one agent's behavior impacts the welfare and behaviors of others.\footnote{For detailed discussions see \citet*{jacksonrz2016} and \citet*{jackson2017}.}
This feature means that connections between agents are not independent,  and so appropriate models of network formation must admit correlations in connections.

Despite the importance of network formation, general, flexible, and tractable econometric models for the estimation of network formation are lacking.   This stems from two challenges:  the aforementioned dependence in connections and the fact that many studies involve one (large) network.  Thus, one is often confronted with estimating a model of formation by taking advantage of the large number of connections, but having them all be dependent observations.
Despite the dependence, it is possible that the many relationships in a network still provide rich enough information to  consistently estimate the parameters of a network model and test hypotheses from a single observed network, at least hypothetically.
Here we develop a class of models that admit
correlations in links and also provide practical techniques of estimating the models, showing that they are estimable, even with just a single network.

Before describing our model, it is useful to discuss some of the alternative approaches.

\subsection{Alternative Models of Network Formation}

The most basic models are what are known as `stochastic block models', in which links may depend on node characteristics but are (conditionally) independent of each other.
That approach
requires correlation between links
to be well-approximated by observable characteristics, and may not be sufficient for most
applications.\footnote{A variation on this is community detection where nodes are estimated to
belong to certain groups, though this calculation is NP-hard.
See \cite{bickel2011} for a ``non-parametric view'' of network formation, and \cite{jacksons2017} for an approach to estimating the blocks even if they are latent.}
In particular, stochastic block models are not a good option for estimation in
applications in which there is substantial clustering (triangles) or other cliques in the network.
In fact, in Section \ref{sec:Applications} we show that our model (even with only four parameters) models the graph structure of real-world data better than a stochastic block model even when the block model admits a rich set of covariates and unobserved node level heterogeneity (fixed effects) \citep{chatterjee2010random,graham2014empirical}.\footnote{In fact, correlations can be viewed as  driven by
unobserved heterogeneity \citep*{chatterjee2010random}, which  has links be uncorrelated
conditional on all (observed and unobserved) characteristics
(as extended by \cite{graham2014empirical}). See
	also \cite{charbonneau2017multiple} for related work that is a directed networks version of \cite{graham2014empirical}.
	Such models have been  studied in the mathematics and
	statistics literatures
	(e.g., \cite{holland1981exponential,park2004statistical,blitzstein2011sequential}).}
Although there are challenges in taking such models to data,
they are useful if link correlation is not a concern.

Given the importance of clustering and other local network architectures in many applications, a literature spanning several disciplines (sociology, statistics, economics, and computer science) turned to using exponential random graph models---henceforth ``ERGMs''.
ERGMs admit link interdependencies and have become the workhorse models for estimating network formation.\footnote{These grew from work on what were known as
Markov models (e.g., \citet{frank1986markov}) or $p*$ models (e.g., \citet{wasserman1996logit}).  An alternative  is to work with regression models at the link  level, but to allow for dependent error terms, as in the ``MRQAP'' approach (e.g.,  \citet{krackhardt1988mrqap}).
}
However, from the onset of the use of ERGMs, researchers realized that the parameter estimators could be unstable on all except excessively small networks.  It has been shown that
maximum likelihood and Bayesian estimators are not computationally tractable (the required Gibbs sampler will take exponential time to mix) nor consistent for important classes of such models---and in particular for the ERGMs that
include many link dependencies of interest (and neither parameter estimators nor standard errors can be trusted).
For details see \citet*{bhamidi2008mixing,shalizi2012,chandrasekhar-jackson}.\footnote{Recent work has made progress on both
the speed of convergence of estimation algorithms as well as characterizing the
asymptotic distribution of sufficient statistics in some classes of ERGMs that avoid extensive
link dependencies
(see e.g., \cite*{mele2017structuraldense,mele2017structural,mele2017approximate}).}

A set of models that allow for link dependencies and are estimable is the class of models
 based on explicit link formation algorithms (e.g., \citet*{barabasi1999emergence,jacksonwatts2001exist,jackson2007meeting,currarinijp2009,currarinijp2010,christakis2010empirical,bramoulle-etal}).
These models  can be estimated since the algorithms are particular enough so that one can directly derive how parameters in the model translate into aggregate network statistics, such as the degree distribution or homophily levels.   The advantage of such models is that a specific algorithm allows for estimation.
The disadvantage is that the specificity of the algorithms also necessarily results in highly-structured models.
Thus, these approaches are useful in some contexts, but they are not designed, nor intended, for
general statistical testing of a wide variety of network formation models and hypotheses.
For instance, such models cannot generate considerable   triadic closure (where  links correlated across
triples of nodes---so  if two people have a friend in common, are they more likely to be friends with
each other than if link formation were independent).\footnote{The \citet{jackson2007meeting} model does
have a parameter that affects triadic closure, but in that model closure cannot be separated from the shape
of the degree distribution. So, it is best suited for growing random networks where new nodes are born over time.}

Another approach has roots in the spatial statistics  literature.
Such models organize nodes such that pairs can be evaluated in terms of distance, with  linking probabilities decaying in distance. The distance may be latent (unobserved) or in observed characteristic space (such as geography or demographics). Such models have foundations in the mathematics literature on random geometric graphs \citep*{penrose2003rgg}---where nodes are distributed in a latent space according to some Poisson point process and linking is much more likely among proximate nodes---and have been analyzed in the statistics literature in work on latent space models such as in \cite{hoff2002latent}.
Links between distant-enough pairs of nodes are asymptotically independent and such models have been developed in more detail in the econometrics literature (e.g., \cite{boucher2012my,leung2014}).  This approach holds promise for some enormous networks with appropriate spatial structures---in which the graph can almost be decomposed into independent pieces.\footnote{\citet*{mccormick2015latent} merge the insights from the unobserved heterogeneity  and the latent space distance models, and \citet*{breza2017using} evaluate its empirical performance.}
However, there are many applications for which these latent space (and generally spatial) models---particularly the geometry
of the space the nodes---overly dictates and limits the structure of
link correlation.
Using these models may in fact require estimating an
unobserved manifold, which presents its own challenges.\footnote{ Estimating the latent geometry by using the network data to identify the underlying metric signature is the subject of one of the authors' related work in  \citep{lubold2020identifying}.}
Our model dispenses with these problems in a straightforward way, allowing correlations across nodes
but not forcing correlations generated through distances in unobserved or characteristic space.

Finally, there is a large literature on the theory of network formation from a strategic
perspective (for references, see \cite{jackson2005,jackson2008social}).  Since the first writing of
this paper, researchers have started to derive versions of such models that can be taken to data.
One approach builds
 upon the relationship between certain classes of strategic network formation models and potential games; some of which leverage subgraphs, but in a rather different way from us (\cite{butts2009using,mele2017structuraldense,badev2013}).
 Another derives restrictions on parameters of an observed network under the presumption that it is in equilibrium (pairwise stable) (\citet*{de2014identi,sheng2013}). The latter makes the observation that by using pairwise stability restrictions of \cite{jackson1996strategic} on subnetworks, one can partially identify preference parameters in the model, whereas doing so on the full graph can be computationally infeasible.\footnote{For a  recent overview of the recent literature, see \citet{depaula2015}.}${}^{,}$\footnote{In both the potential games and the partial identification literatures, subgraphs play very different roles from their role here.}
Although the progress to date requires strong restrictions on
 how links can enter agent's payoffs, they provide important first steps in deriving implications of the arsenal of strategic network formation models.
 Below, we also provide ways to incorporate strategic formation in SUGMs, thus in part bridging our approach here and the strategic formation approach.

\subsection{Our Subgraph Model Approach}

Our approach is  distinct from all of the above, both in terms of the approach (working with subgraphs as the basic
building blocks) and the technicalities of allowing nontrivial conditional correlations. Our contribution is to develop models of network formation that admit considerable interdependency without spatial restrictions, and still prove consistency and asymptotic normality of parameter estimators.  As part of this, we develop a new central limit theorem for non-trivially correlated random variables that moves away from relying on spatial-style mixing arguments that force decaying dependence in distance.

The paucity of flexible models that are computable and can be used across many applications for hypothesis testing and inference is what motivates our work here. Although our models are simple conceptually, we provide  different applications that illustrate how
such models admit strategic network formation, general covariates, and generate rich network features.

In Section \ref{sec:model} we introduce \emph{subgraph generated models}---henceforth SUGMs.
In these models, various subgraphs (e.g., links, triangles, cliques, and stars) are generated directly.
For instance, students may form friendships with their roommate(s), members of a study group, teammates,
band members, etc.; researchers may form collaborations on writing papers in pairs, or triples, or
quadruples, etc; villagers may form specific bilateral or multilateral agreements independently,
each to sustain some collection of favors between those individuals involved in the agreement.
This results in links and those links are then naturally correlated since they are formed in combinations.
The union of all these subgraphs results in a network. In this section, we also introduce three 
motivating applications to demonstrate how this model could be used: (i) descriptively modeling network structure, (ii) motives for risk-sharing, and (iii) incentives to link across social boundaries.

The statistical challenge is that often only the final network is observed:  a survey may ask people to list their friends or acquaintances, or links may be observed on a social platform, or emails or phone calls are observed, etc., but the original formation process is often not observed. Subgraphs may overlap and may incidentally generate new subgraphs: e.g.,  three links may form and result in a triangle.
Thus, the true rate of formation of the subgraphs cannot generally be inferred just by counting their presence in the resulting network.\footnote{The closest work to ours is a \cite{bollobas2011sparse} piece on random graph theory, which looks at percolation processes, giant component structure, and degree distributions in a model where the observed graph is generated by a set of atoms (subgraphs in our language). That paper focuses on a specific rate of arrival of subgraphs (to maintain a sparsity where a core problem we study is ruled out) and is not interested in statistical estimation.}

Despite this, in Section \ref{sec:Identification} we prove that every subgraph generated model is identified.  That is, if we consider a SUGM---a collection of subgraphs that can potentially form together with a set of parameters governing the probabilities of each subgraph forming---then any two distinct set of parameters necessarily has two distinct set of distributions over the set of possible networks. Furthermore, we explore specific cases that are of empirical relevance---for instance, links and triangles models---and demonstrate that not only are the distributions generally distinct, but that  simple statistics (such as the share of links or triangles that form) allow us to identify the parameters of interest.

Next we turn to estimation of the underlying parameters describing subgraph formation rates in Section \ref{sec:Asymptotics}. We show that we can consistently estimate the parameters and
we derive the asymptotic distribution of the estimators so we
can conduct inference.  There are two situations that a researcher may face.

In the first case, the
researcher has access to ``many networks''. This could be because they have collected network data
from numerous schools, many villages, or so on. Here we demonstrate (using standard
results) that the parameters governing the SUGM can be estimated consistently  with maximum likelihood  estimators that are
asymptotically normally distributed. For some
empirically relevant classes of models, we demonstrate that
there are computationally simple, minimum distance estimators which satisfy consistency and asymptotic normality.

The second case is where the researcher has one (or just a few) ``large network''. This could be because they have collected very rich network data with resource constraints in just a few communities, or because they are looking at a single market, or because they are looking at one social media platform, etc. In this case, the  asymptotics are more technically challenging  for two reasons. First, the network cannot be too sparse, as enough subgraphs must form to make estimation possible, nor too dense  because it becomes impossible to distinguish which subgraph likely generated a candidate link. 
So formally, we have ``rate requirements'' on the parameters governing the probabilities of subgraphs forming, although these turn out to be quite accomodating.  Second,
 existing central limit theorems from the spatial and time-series econometrics literatures do not
 apply to our setting, as we need to allow subgraphs to form on arbitrary groups of nodes, which
 then results in correlation patterns across all links in the network.  We overcome this problem by
 developing a new central limit theorem and use it to characterize when certain classes of SUGMs have
 estimators that are consistent and asymptotically normally
 distributed.\footnote{An interesting consideration for future work is to employ the
 techniques in \cite{bhattacharyya2015subsampling},
 who develop a bootstrapping method to estimate the distribution of empirical counts of different subgraphs in
 enormous networks.}

With the statistical properties established, we turn to our empirical applications in
Section \ref{sec:Applications}. In each application we use the detailed network data we collected in
75 villages in Karnataka, India \citep{banerjee2014gossip}. We begin by comparing SUGMs to
four archetypical models from the literature in terms of how well they model real-world data.
Specifically, we fit each model to the data and then draw from the distribution at the estimated
parameters for each model. We are interested in a variety of economically relevant network features
(none of which are directly used to estimate any of the models). We find that across the board a four
parameter SUGM outperforms a stochastic block model with flexible covariates; a model of unobserved
heterogeneity at the node level as well as rich covariates; a latent space model with unobserved
locations and heterogeneity as well as observed covariates; and an exponential random graph model with rich covariates.
Only the SUGM comes close to capturing the average path length, homophily, maximal eigenvalue,
size of the giant component, isolates, and clustering. Having established this, the second example
turns to whether the structure of the networks is consistent with the idea that there are stronger
incentives to have supported relationships for risk sharing links rather than informational
links \citep{jackson2012quilts} and we find evidence consistent with this. The third example
explores whether linking across social boundaries---here links between upper caste and lower
caste (Dalit communities)---is more
likely to form  in private (bilateral) rather than group (triadic) settings and we find exactly this. Together, these examples demonstrate the utility of our general framework.

In Section \ref{sclt} we return to state our Central Limit Theorem, which is  of independent   interest.  We provide covariance conditions that are high-level but also straightforward to interpret, check, and micro-found. We use  a powerful lemma from \cite{stein1986approximate} in our proof. Many CLTs build upon Stein's method,\footnote{For instance, \cite{bolthausen1982} uses a pre-cursor lemma from \cite{stein1972}
to derive CLTs from some mixing conditions. In time-series and spatial econometrics,
a non-exhaustive but illustrative list of papers using \cite{bolthausen1982}
include \citet*{conley1999gmm}, \citet*{jenishp2009}, \citet*{bester2011inference},
among others.} but we allow for much richer dependence---all random variables can have non-zero correlation---which admits the correlations in subgraph counts that can arise due to the incidental generation discussed above; and we also allow for triangular arrays.
We discuss the relationship of our Central Limit Theorem and its proof to precursors in Section \ref{sclt}.

\section{A Model of Network Formation via Subgraphs}\label{sec:model}

\subsection{Networks}

$n\geq 3$ is the number of nodes on which a network is formed.  Nodes may have characteristics, such as age, profession, gender, race, caste, etc.,
that we denote by the vector $X_i$ for a generic $i\in \{1,\ldots, n\}$. The $X_i$ have finite support.\footnote{This is a limitation since there are network models that do not require discrete covariates. While continuous variables can be discretized, this is a trade-off.}
As such nodes can be classified by a finite set of observable
types.\footnote{We conjecture that our results extend to allow for continuous
covariates as well, though that requires specifying parametric functions for the probability of subgraphs as a function of covariates and so remains beyond the scope
of this paper.  If one expands the set of covariates, the number of parameters to fit increases.  In the limit, if one allows continuous covariates, one then has to fit functions for every type of subgraph (e.g., probability of a triangle as a function of the covariates of the three nodes).  That is only simplified if one imposes restrictions on those functions (e.g., some form of linearity).}

We denote a network by $g$, the collection
of subsets of $ \{1,\ldots, n\}$ of size 2
that lists the edges or links that are present in its graph.  So, $g=\{\{1,3\}, \{2,5\}\}$ indicates
the network that has links between nodes 1 and 3 and between nodes 2 and 5.  For notational ease, we simply write $g=\{ 13, 25\}$, and  write $ij\in g$ to denote that link $ij$ is present in network $g$.
Our model easily accommodates directed graphs, and all of the definitions below extend directly, in which case instead of pairs of nodes, these would be ordered pairs so that $ij$ and $ji$ would differ.  However, for ease of exposition, most of the examples and discussion refer to the undirected case. $\mathcal{G}^n$ denotes the set of all networks on $n$ nodes (which given our definition of a network above, is the set of all labeled (undirected) graphs on the set $ \{1,\ldots, n\}$).

\subsection{Subgraphs and SUGMs}

In a subgraph generation model, subgraphs are each directly generated, and then the resulting network is the union of all of the links in all of the subgraphs.  Degenerate examples of this are Erdos-Renyi random networks, and the generalization of that model,  stochastic-block models, in which links are formed with probabilities based on nodes' attributes.
The more interesting classes of SUGMs include richer subgraphs, and hence involve dependencies in link formation.
It might be that people of the same caste meet more frequently or are more likely to form a relationship when they do meet, as in a stochastic block model, but it could also be
that  groups of three (or more)  meet and can decide whether to form a triangle, with the meeting probability and decision potentially driven by their castes and/or other characteristics.   The model can then be described by a list of probabilities, one for each type of subgraph, where subgraphs can be based on the subgraph shape as well as the nodes' characteristics.

A SUGM is formally defined as follows.

 \begin{itemize}
\item There are finitely many types of nonempty  subgraphs, indexed by $\ell\in \{1,\ldots, k\}$.
\item Each of the $k$ subgraph types corresponds to a set $(G_\ell)_{\ell\in \{1,\ldots,k\}}$,
where each $G_\ell \subset \mathcal{G}^n$ is a set of possible subgraphs on $m_\ell\leq n$ nodes.
\item For each $\ell$ and pair of subgraphs  $g'\in G_\ell$ and $g''\in G_\ell$ there exists a bijection $\pi$ on $\{1,\ldots,n\}$ such that
$ij\in g'$ if and only if $\pi(i)\pi(j)\in g''$.
\item No subgraph is contained in two different sets:
if $\ell\neq \ell'$  then $g \in G_\ell$ implies that $g \notin G_{\ell'}$.
\end{itemize}

This definition does not admit isolates since we define subgraphs to be nonempty and connected, but isolates are easily admitted with notational complications, and are illustrated in some of our examples below as well as the supplementary appendix.
As an example, in the links and triangles case $\ell\in \{L,T\}$.  Then, for $n=4$ and $\ell=T$, $G_T$ is the set of triangles,  where $m_\ell=3$ and
$G_T=\{  \{12,23,31\}, \{12,24,41\}, \{13,34,41\}, \{23,34,42\}\}$.

Note, however, that the definition does not require that $G_\ell$ contain all triangles.  In examples in which node characteristics matter, different triangles could be categorized into different $G_\ell$s.
In particular, definitions of the subgraph types can have restrictions based on node characteristics,
for instance, requiring that the characteristics $X_i$ and $X_{\pi(i)}$ be the same---e.g.,  $G_\ell$ for some $\ell$ could be the set of
 ``triangles that involve one child and two adult nodes''.
As another example, the set $G_\ell$ for some $\ell$ could be all stars with one central node and four other nodes, and another $G_{\ell'}$ could be all the links that involve people of different castes, and so forth.

A few examples are pictured in Figure \ref{fig:ells}.

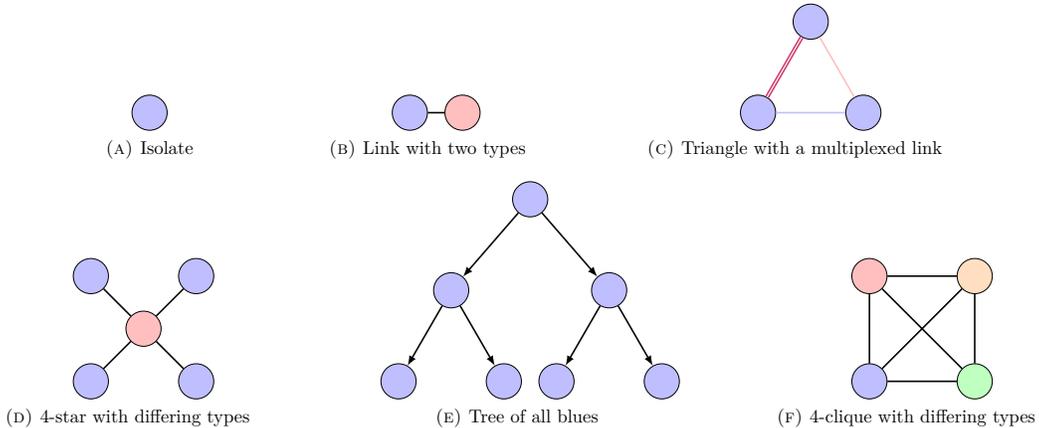
\begin{figure}[h]
\centering
\scalebox{0.7}{\subfloat[Isolate]{
\begin{tikzpicture}
\def \n {5}
\def \radius {2cm}
\def \margin {8}
\node[draw, circle,fill = blue!25,   minimum size=19pt] at (0,0) (v1){$$};
\node[  minimum size=19pt] at (-2,0) (v10){$$};
\node[  minimum size=19pt] at (2,0) (v11){$$};
\end{tikzpicture}
}
\subfloat[Link with two types]{
\begin{tikzpicture}
\def \n {5}
\def \radius {2cm}
\def \margin {8}
\node[draw, circle, fill = blue!25, minimum size=19pt] at (0,0) (v1){$b$};
\node[draw, circle,fill = red!25,  minimum size=19pt] at (1,0) (v2){$r$};
\node[  minimum size=19pt] at (-2,0) (v10){$$};
\node[  minimum size=19pt] at (2.67,0) (v11){$$};
\draw[line width = 0.3mm, >=latex] (v1) to (v2);
\end{tikzpicture}
}
\subfloat[Triangle with a multiplexed link]{
\begin{tikzpicture}
\def \n {5}
\def \radius {2cm}
\def \margin {8}
\node[draw, circle,fill = blue!25,   minimum size=19pt] at (0,0) (v1){$$};
\node[draw, circle, fill = blue!25, minimum size=19pt] at (2,0) (v2){$$};
\node[draw, circle,fill = blue!25,  minimum size=19pt] at (1,1.73) (v3){$$};
\node[  minimum size=19pt] at (-3,0) (v10){$$};
\node[  minimum size=19pt] at (4.4,0) (v11){$$};
\draw[line width = 0.3mm, color = blue!25 >=latex] (v1) to (v2);
\draw[line width = 0.3mm, color = red!25 >=latex] (v2) to (v3);
\draw[line width = 0.3mm, double, color = purple!75 >=latex] (v1) to (v3);
\end{tikzpicture}
}}
\scalebox{0.7}
{\subfloat[4-star with differing types]{
\begin{tikzpicture}
\def \n {5}
\def \radius {2cm}
\def \margin {8}
\node[draw, circle,fill = blue!25,   minimum size=19pt] at (0,0) (v1){$b$};
\node[draw, circle, fill = blue!25, minimum size=19pt] at (2,0) (v2){$b$};
\node[draw, circle,fill = red!25,  minimum size=19pt] at (1,1) (v3){$r$};
\node[draw, circle,fill = blue!25,  minimum size=19pt] at (0,2) (v4){$b$};
\node[draw, circle,fill = blue!25,  minimum size=19pt] at (2,2) (v5){$b$};
\node[  minimum size=19pt] at (-2,0) (v10){$$};
\node[  minimum size=19pt] at (3.4,0) (v11){$$};
\draw[line width = 0.3mm, >=latex] (v1) to (v3);
\draw[line width = 0.3mm, >=latex] (v2) to (v3);
\draw[line width = 0.3mm, >=latex] (v4) to (v3);
\draw[line width = 0.3mm, >=latex] (v5) to (v3);
\end{tikzpicture}
}
\subfloat[Tree of all blues]{
\begin{tikzpicture}
\def \n {5}
\def \radius {2cm}
\def \margin {8}
\node[draw, circle,fill = blue!25,   minimum size=19pt] at (0,0) (v1){$$};
\node[draw, circle, fill = blue!25, minimum size=19pt] at (2,0) (v2){$$};
\node[draw, circle,fill = blue!25,  minimum size=19pt] at (1,1.73) (v3){$$};
\node[draw, circle,fill = blue!25,   minimum size=19pt] at (3,0) (v4){$$};
\node[draw, circle, fill = blue!25, minimum size=19pt] at (5,0) (v5){$$};
\node[draw, circle,fill = blue!25,  minimum size=19pt] at (4,1.73) (v6){$$};
\node[draw, circle,fill = blue!25,  minimum size=19pt] at (2.5,3.46) (v7){$$};
\node[  minimum size=19pt] at (-1.5,0) (v10){$$};
\node[  minimum size=19pt] at (6,0) (v11){$$};
\draw[line width = 0.3mm,<-, >=latex] (v1) to (v3);
\draw[line width = 0.3mm, <-, >=latex] (v2) to (v3);
\draw[line width = 0.3mm,<-,  >=latex] (v4) to (v6);
\draw[line width = 0.3mm, <-, >=latex] (v5) to (v6);
\draw[line width = 0.3mm, <-, >=latex] (v3) to (v7);
\draw[line width = 0.3mm, <-, >=latex] (v6) to (v7);
\end{tikzpicture}
}
\subfloat[4-clique with differing types]{
\begin{tikzpicture}
\def \n {5}
\def \radius {2cm}
\def \margin {8}
\node[draw, circle,fill = blue!25,   minimum size=19pt] at (0,0) (v1){$b$};
\node[draw, circle, fill = green!25, minimum size=19pt] at (2,0) (v2){$g$};
\node[draw, circle,fill = red!25,  minimum size=19pt] at (0,2) (v3){$r$};
\node[draw, circle,fill = orange!25,  minimum size=19pt] at (2,2) (v4){$y$};
\node[  minimum size=19pt] at (-2,0) (v10){$$};
\node[  minimum size=19pt] at (3.4,0) (v11){$$};
\draw[line width = 0.3mm, >=latex] (v1) to (v2);
\draw[line width = 0.3mm, >=latex] (v1) to (v3);
\draw[line width = 0.3mm, >=latex] (v1) to (v4);
\draw[line width = 0.3mm, >=latex] (v2) to (v3);
\draw[line width = 0.3mm, >=latex] (v2) to (v4);
\draw[line width = 0.3mm, >=latex] (v3) to (v4);
\end{tikzpicture}
}}
\caption{\label{fig:ells}Examples of subgraphs. Links could be directed or
undirected or even multiplexed (take on multiple edge types) and nodes can have
different characteristic combinations
(denoted by node colors
and labels).}
\end{figure}

The probability that various subgraphs form is described by a vector of parameters, denoted $\beta =(\beta_1, \ldots\beta_\ell, \ldots, \beta_k)\in \mathcal{B}^k$,
where $\mathcal{B}$ is (unless otherwise noted)
a compact subset of $[0,1)^k$.\footnote{We treat vectors as row or columns as is convenient in what follows.}
For instance, $\beta=(\beta_{L},\beta_{T})\in \mathcal{B}\subset [0,1)^2$ in a links and triangles example.\footnote{In some examples below, we expand this demonstrating how $\beta$ can have entries that are monotone functions of  preference parameters (or equilibrium behavior), which allows us to study certain economic questions. Estimating $\beta$ allows us to either recover the parameters or behavior of interest in some cases or conduct loose hypothesis testing using our estimates of $\beta$.}

A network $g$ on $n$ nodes is randomly formed as follows:
\begin{enumerate}
\item Each of the possible subgraphs $g_\ell\in G_\ell$ forms with probability $\beta_\ell$ independently  of all other subgraphs (including others in $G_\ell$).
\item The resulting network, $g$, is the union of all
the links that appear in any of the generated subgraphs.
\end{enumerate}

\subsection{An Example with Node Characteristics}
Suppose that nodes come in two colors: blue and red (for instance different genders, age groups,
religions, etc., and clearly this extends directly to more than two colors).
In our example of links and triangles, there are now three types of links:  (blue, blue),   (blue, red), (red, red);   and four types of triangles (blue,blue,blue),  (blue,blue,red), (blue,red,red), (red,red,red) which comprise the set of subgraphs indexed by $\ell$.

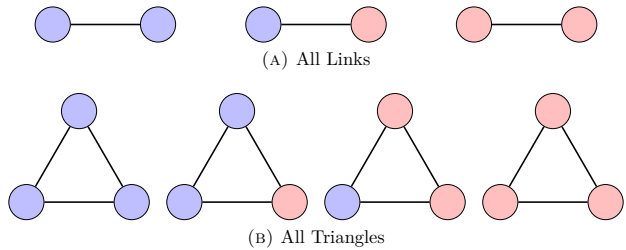
\begin{figure}[h]
\centering
\scalebox{0.6}{\subfloat[All Links]{
\begin{tikzpicture}
\def \n {5}
\def \radius {2cm}
\def \margin {8}
\node[draw, circle,fill = blue!25,   minimum size=19pt] at (0,0) (v1){$b$};
\node[draw, circle,fill = blue!25,    minimum size=19pt] at (2,0) (v2){$b$};
\node[draw, circle,fill = blue!25,    minimum size=19pt] at (4,0) (v3){$b$};
\node[draw, circle,fill = red!25,    minimum size=19pt] at (6,0) (v4){$r$};
\node[draw, circle,fill = red!25,    minimum size=19pt] at (8,0) (v5){$r$};
\node[draw, circle,fill = red!25,    minimum size=19pt] at (10,0) (v6){$r$};
\draw[line width = 0.3mm, >=latex] (v1) to (v2);
\draw[line width = 0.3mm, >=latex] (v3) to (v4);
\draw[line width = 0.3mm, >=latex] (v5) to (v6);
\end{tikzpicture}
}}
\hspace{1cm}
\scalebox{0.6}{\subfloat[All Triangles]{
\begin{tikzpicture}
\def \n {5}
\def \radius {2cm}
\def \margin {8}
\node[draw, circle,fill = blue!25,   minimum size=19pt] at (0,0) (v1){$b$};
\node[draw, circle, fill = blue!25, minimum size=19pt] at (2,0) (v2){$b$};
\node[draw, circle,fill = blue!25,  minimum size=19pt] at (1,1.73) (v3){$b$};
\node[draw, circle,fill = blue!25,   minimum size=19pt] at (3,0) (v4){$b$};
\node[draw, circle, fill = red!25, minimum size=19pt] at (5,0) (v5){$r$};
\node[draw, circle,fill = blue!25,  minimum size=19pt] at (4,1.73) (v6){$b$};
\node[draw, circle,fill = blue!25,   minimum size=19pt] at (6,0) (v7){$b$};
\node[draw, circle, fill = red!25, minimum size=19pt] at (8,0) (v8){$r$};
\node[draw, circle,fill = red!25,  minimum size=19pt] at (7,1.73) (v9){$r$};
\node[draw, circle,fill = red!25,   minimum size=19pt] at (9,0) (v10){$r$};
\node[draw, circle, fill = red!25, minimum size=19pt] at (11,0) (v11){$r$};
\node[draw, circle,fill = red!25,  minimum size=19pt] at (10,1.73) (v12){$r$};
\draw[line width = 0.3mm, >=latex] (v1) to (v2);
\draw[line width = 0.3mm, >=latex] (v2) to (v3);
\draw[line width = 0.3mm, >=latex] (v1) to (v3);
\draw[line width = 0.3mm, >=latex] (v4) to (v5);
\draw[line width = 0.3mm, >=latex] (v4) to (v6);
\draw[line width = 0.3mm, >=latex] (v5) to (v6);
\draw[line width = 0.3mm, >=latex] (v7) to (v8);
\draw[line width = 0.3mm, >=latex] (v7) to (v9);
\draw[line width = 0.3mm, >=latex] (v8) to (v9);
\draw[line width = 0.3mm, >=latex] (v10) to (v11);
\draw[line width = 0.3mm, >=latex] (v10) to (v12);
\draw[line width = 0.3mm, >=latex] (v11) to (v12);
\end{tikzpicture}
}}
\caption{\label{fig:Links-Triangles-Covars}Panel (A) shows all possible links and Panel (B) shows all possible triangles when a node has characteristic $X_i\in \{red,\ blue\}$.}
\end{figure}

Thus, in this example the sets of subgraphs are
$$G_{(blue,blue)}= \{ ij : X_i=blue, X_j=blue \}$$
and
$$G_{(blue,blue,red)}= \{ ijk : X_i=blue, X_j=blue, X_k=red \},$$
and so forth, as depicted in Figure \ref{fig:Links-Triangles-Covars}.
The
parameters
$$ \{ \beta_{(blue, blue)},   \beta_{(blue, red)},   \beta_{(red, red)},   \beta_{(blue,blue,blue)},   \beta_{(blue,blue,red)},   \beta_{(blue,red,red)},   \beta_{(red,red,red)} \},$$
are the probabilities that the corresponding subgraphs form.

One could restrict or enrich the model by having simpler or more complex sets of parameters -- for instance requiring that $\beta_{(blue, blue)}=   \beta_{(red, red)}$, or by having preference parameters
that govern the probabilities of various subgraphs forming, as we discuss below.

\subsection{Links and Triangles as Our Leading Example}

The bulk of our illustrations and applications are based on link and triangle SUGMs, though other subgraphs can be included
and are covered by our general results (e.g., Theorems \ref{newid}, \ref{thm:many_networks}, and \ref{thm:Largenetwork}).
Our illustrations focus on links and triangles for two reasons:
first, this case is simple to understand and illustrates the main points since it exhibits correlated links and incidental generation;
second, the link and triangle model already matches the moments that are of interest in many research
projects (larger cliques are rare). In fact, as we show below, simply looking at a links and
triangle SUGM tagged with whether the nodes involved are homogenous or heterogeneous in
demographics (e.g., just a 4 parameter model), replicates real-world network features better than
far-richer models. Still, we leave further specification to the researcher as it will depend on their
context and the phenomenon being modeled. If there are other the types of subgraphs that are
hypothesized to arise in some particular context, then that model can be constructed and estimated
in the ways we outline and are covered by our general results.\footnote{One could also have a list of subgraphs as a possible basis for the
SUGM with only a subset of them actually forming the true SUGM;
allowing the data to tell the researcher which to include.  Some of that can be done here, including the various subgraphs that might be involved
and then seeing which have nontrivial parameter estimates.   This marries SUGMs with model selection, a topic which could be explored further in future research.}

\section{Identification\label{sec:Identification}}

\subsection{The Challenge of Identification}

The researcher's goal is to use the observed data---from one or more networks---to recover the parameters of interest,
for example, the $(\beta_{L}, \beta_{T})$ in a SUGM of links and triangles.
If the researcher observed the links  and
triangles that were formed directly,
then estimation would be straightforward.
Indeed, in some instances a researcher has direct information on all the various groups a given individual is involved in: for instance in the case of a co-authorship network,
the researcher may observe all the papers a researcher has written and thus observes papers with two authors, three authors, and so forth.
Instead, for instance, it may be that there are groups of three people who commonly share favors and risks together---who really form a triangle, but the researcher
only has information from a survey asking with which alters a given person interacts
(as in networks derived from the Add Health data set as in \citet{currarinijp2009}),  or who borrows from whom and who lends kerosene and rice to whom and other bilateral nominations (as in our Indian village data \citet{banerjee2013diffusion}), or from observing that who are friends on a social
platform
(as in Facebook network data as in \citet{baileyetal2015,chettyetal2022I}), or from observing that two people phone each other or remit payments to each other (as in \citet{blumenstockef2011}).

Thus, the general problem is that
the formation of the subgraphs is not directly observed, and so must be inferred in order to estimate the parameters of interest.
For example, if three links are observed between $i$, $j$, and $k$,
is it the case that $ijk$ formed as a triangle, or that $ij$, $jk$ and $ik$ formed as links, or that $ij$ and $jk$ formed as links and $ik$ formed as part of a different triangle $ikm$, or some combination of these or other combinations? Figure \ref{fig:ident} provides an illustration.

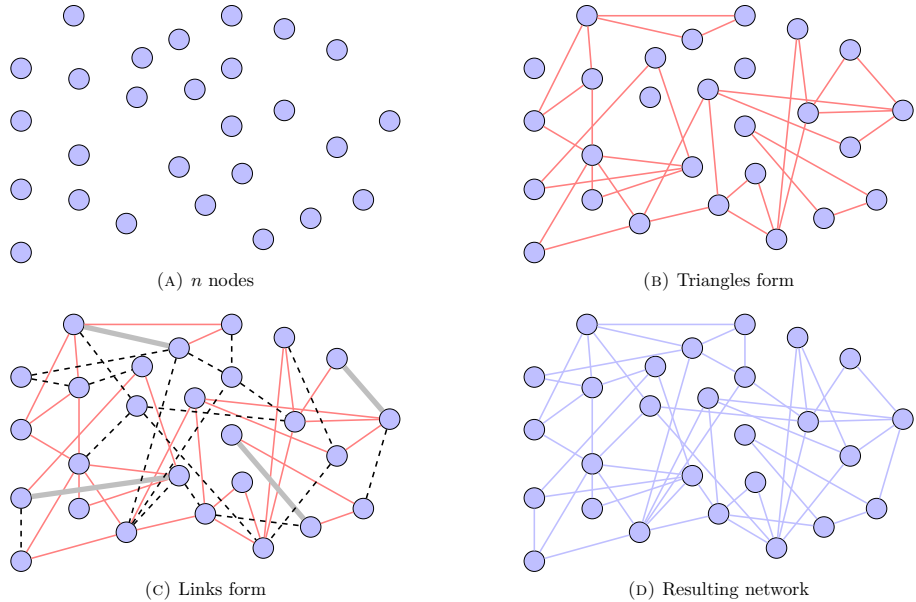
\begin{figure}[!h]
\centering
\scalebox{0.7}{\subfloat[$n$ nodes]{
\begin{tikzpicture}
\def \n {5}
\def \radius {2cm}
\def \margin {8}

\node[draw, circle,fill = blue!25,   minimum size=2pt] at (1,2) (v1){$$};
\node[draw, circle, fill = blue!25, minimum size=2pt] at (3,1.55) (v2){$$};
\node[draw, circle,fill = blue!25,  minimum size=2pt] at (4,2) (v3){$$};
\node[draw, circle,fill = blue!25,   minimum size=2pt] at (5,1.75) (v4){$$};
\node[draw, circle, fill = blue!25, minimum size=2pt] at (6,1.35) (v5){$$};

\node[draw, circle,fill = blue!25,   minimum size=2pt] at (0,1) (v6){$$};
\node[draw, circle, fill = blue!25, minimum size=2pt] at (1.1,0.8) (v7){$$};
\node[draw, circle,fill = blue!25,  minimum size=2pt] at (2.2,0.45) (v8){$$};
\node[draw, circle,fill = blue!25,   minimum size=2pt] at (2.3,1.2) (v9){$$};
\node[draw, circle, fill = blue!25, minimum size=2pt] at (3.3,0.6) (v10){$$};
\node[draw, circle, fill = blue!25, minimum size=2pt] at (4,1) (v11){$$};

\node[draw, circle,fill = blue!25,   minimum size=2pt] at (0,0) (v12){$$};
\node[draw, circle, fill = blue!25, minimum size=2pt] at (1.1,-0.65) (v13){$$};
\node[draw, circle,fill = blue!25,  minimum size=2pt] at (3,-0.875) (v14){$$};
\node[draw, circle,fill = blue!25,   minimum size=2pt] at (4,-0.1) (v15){$$};
\node[draw, circle, fill = blue!25, minimum size=2pt] at (5,0.2) (v16){$$};
\node[draw, circle, fill = blue!25, minimum size=2pt] at (6,-0.5) (v17){$$};
\node[draw, circle, fill = blue!25, minimum size=2pt] at (7,0) (v18){$$};

\node[draw, circle,fill = blue!25,   minimum size=2pt] at (0,-1.3) (v19){$$};
\node[draw, circle, fill = blue!25, minimum size=2pt] at (1.1,-1.5) (v20){$$};
\node[draw, circle,fill = blue!25,  minimum size=2pt] at (2,-1.95) (v21){$$};
\node[draw, circle,fill = blue!25,   minimum size=2pt] at (3.5,-1.6) (v22){$$};
\node[draw, circle, fill = blue!25, minimum size=2pt] at (4.2,-1) (v23){$$};
\node[draw, circle, fill = blue!25, minimum size=2pt] at (5.5,-1.85) (v24){$$};
\node[draw, circle, fill = blue!25, minimum size=2pt] at (6.5,-1.5) (v25){$$};

\node[draw, circle, fill = blue!25, minimum size=2pt] at (0,-2.5) (v26){$$};
\node[draw, circle, fill = blue!25, minimum size=2pt] at (4.6,-2.25) (v27){$$};

\node[ minimum size=2pt] at (-1,0) (v28){$$};
\node[ minimum size=2pt] at (8,0) (v29){$$};

\end{tikzpicture}
}}
\scalebox{0.7}{\subfloat[Triangles form]{
\begin{tikzpicture}
\def \n {5}
\def \radius {2cm}
\def \margin {8}

\node[draw, circle,fill = blue!25,   minimum size=2pt] at (1,2) (v1){$$};
\node[draw, circle, fill = blue!25, minimum size=2pt] at (3,1.55) (v2){$$};
\node[draw, circle,fill = blue!25,  minimum size=2pt] at (4,2) (v3){$$};
\node[draw, circle,fill = blue!25,   minimum size=2pt] at (5,1.75) (v4){$$};
\node[draw, circle, fill = blue!25, minimum size=2pt] at (6,1.35) (v5){$$};

\node[draw, circle,fill = blue!25,   minimum size=2pt] at (0,1) (v6){$$};
\node[draw, circle, fill = blue!25, minimum size=2pt] at (1.1,0.8) (v7){$$};
\node[draw, circle,fill = blue!25,  minimum size=2pt] at (2.2,0.45) (v8){$$};
\node[draw, circle,fill = blue!25,   minimum size=2pt] at (2.3,1.2) (v9){$$};
\node[draw, circle, fill = blue!25, minimum size=2pt] at (3.3,0.6) (v10){$$};
\node[draw, circle, fill = blue!25, minimum size=2pt] at (4,1) (v11){$$};

\node[draw, circle,fill = blue!25,   minimum size=2pt] at (0,0) (v12){$$};
\node[draw, circle, fill = blue!25, minimum size=2pt] at (1.1,-0.65) (v13){$$};
\node[draw, circle,fill = blue!25,  minimum size=2pt] at (3,-0.875) (v14){$$};
\node[draw, circle,fill = blue!25,   minimum size=2pt] at (4,-0.1) (v15){$$};
\node[draw, circle, fill = blue!25, minimum size=2pt] at (5.2,0.15) (v16){$$};
\node[draw, circle, fill = blue!25, minimum size=2pt] at (6,-0.5) (v17){$$};
\node[draw, circle, fill = blue!25, minimum size=2pt] at (7,0.2) (v18){$$};

\node[draw, circle,fill = blue!25,   minimum size=2pt] at (0,-1.3) (v19){$$};
\node[draw, circle, fill = blue!25, minimum size=2pt] at (1.1,-1.5) (v20){$$};
\node[draw, circle,fill = blue!25,  minimum size=2pt] at (2,-1.95) (v21){$$};
\node[draw, circle,fill = blue!25,   minimum size=2pt] at (3.5,-1.6) (v22){$$};
\node[draw, circle, fill = blue!25, minimum size=2pt] at (4.2,-1) (v23){$$};
\node[draw, circle, fill = blue!25, minimum size=2pt] at (5.5,-1.85) (v24){$$};
\node[draw, circle, fill = blue!25, minimum size=2pt] at (6.5,-1.5) (v25){$$};

\node[draw, circle, fill = blue!25, minimum size=2pt] at (0,-2.5) (v26){$$};
\node[draw, circle, fill = blue!25, minimum size=2pt] at (4.6,-2.25) (v27){$$};

\node[ minimum size=2pt] at (-1,0) (v28){$$};
\node[ minimum size=2pt] at (8,0) (v29){$$};

\draw[line width = 0.3mm, color=red!50, >=latex] (v1) to (v2);
\draw[line width = 0.3mm, color=red!50, >=latex] (v1) to (v3);
\draw[line width = 0.3mm, color=red!50, >=latex] (v2) to (v3);

\draw[line width = 0.3mm, color=red!50, >=latex] (v1) to (v12);
\draw[line width = 0.3mm, color=red!50, >=latex] (v1) to (v7);
\draw[line width = 0.3mm, color=red!50, >=latex] (v12) to (v7);

\draw[line width = 0.3mm, color=red!50, >=latex] (v13) to (v12);
\draw[line width = 0.3mm, color=red!50, >=latex] (v13) to (v7);

\draw[line width = 0.3mm, color=red!50, >=latex] (v5) to (v16);
\draw[line width = 0.3mm, color=red!50, >=latex] (v5) to (v18);
\draw[line width = 0.3mm, color=red!50, >=latex] (v16) to (v18);

\draw[line width = 0.3mm, color=red!50, >=latex] (v4) to (v16);
\draw[line width = 0.3mm, color=red!50, >=latex] (v4) to (v27);
\draw[line width = 0.3mm, color=red!50, >=latex] (v16) to (v27);

\draw[line width = 0.3mm, color=red!50, >=latex] (v10) to (v18);
\draw[line width = 0.3mm, color=red!50, >=latex] (v10) to (v17);
\draw[line width = 0.3mm, color=red!50, >=latex] (v17) to (v18);

\draw[line width = 0.3mm, color=red!50, >=latex] (v15) to (v24);
\draw[line width = 0.3mm, color=red!50, >=latex] (v24) to (v25);
\draw[line width = 0.3mm, color=red!50, >=latex] (v15) to (v25);

\draw[line width = 0.3mm, color=red!50, >=latex] (v22) to (v23);
\draw[line width = 0.3mm, color=red!50, >=latex] (v23) to (v27);
\draw[line width = 0.3mm, color=red!50, >=latex] (v22) to (v27);

\draw[line width = 0.3mm, color=red!50, >=latex] (v10) to (v22);
\draw[line width = 0.3mm, color=red!50, >=latex] (v10) to (v21);
\draw[line width = 0.3mm, color=red!50, >=latex] (v22) to (v21);

\draw[line width = 0.3mm, color=red!50, >=latex] (v21) to (v13);
\draw[line width = 0.3mm, color=red!50, >=latex] (v13) to (v26);
\draw[line width = 0.3mm, color=red!50, >=latex] (v21) to (v26);

\draw[line width = 0.3mm, color=red!50, >=latex] (v20) to (v13);
\draw[line width = 0.3mm, color=red!50, >=latex] (v13) to (v14);
\draw[line width = 0.3mm, color=red!50, >=latex] (v20) to (v14);

\draw[line width = 0.3mm, color=red!50, >=latex] (v9) to (v19);
\draw[line width = 0.3mm, color=red!50, >=latex] (v9) to (v14);
\draw[line width = 0.3mm, color=red!50, >=latex] (v19) to (v14);

\end{tikzpicture}
}}

\centering
\scalebox{0.7}{\subfloat[Links form]{
\begin{tikzpicture}
\def \n {5}
\def \radius {2cm}
\def \margin {8}

\node[draw, circle,fill = blue!25,   minimum size=2pt] at (1,2) (v1){$$};
\node[draw, circle, fill = blue!25, minimum size=2pt] at (3,1.55) (v2){$$};
\node[draw, circle,fill = blue!25,  minimum size=2pt] at (4,2) (v3){$$};
\node[draw, circle,fill = blue!25,   minimum size=2pt] at (5,1.75) (v4){$$};
\node[draw, circle, fill = blue!25, minimum size=2pt] at (6,1.35) (v5){$$};

\node[draw, circle,fill = blue!25,   minimum size=2pt] at (0,1) (v6){$$};
\node[draw, circle, fill = blue!25, minimum size=2pt] at (1.1,0.8) (v7){$$};
\node[draw, circle,fill = blue!25,  minimum size=2pt] at (2.2,0.45) (v8){$$};
\node[draw, circle,fill = blue!25,   minimum size=2pt] at (2.3,1.2) (v9){$$};
\node[draw, circle, fill = blue!25, minimum size=2pt] at (3.3,0.6) (v10){$$};
\node[draw, circle, fill = blue!25, minimum size=2pt] at (4,1) (v11){$$};

\node[draw, circle,fill = blue!25,   minimum size=2pt] at (0,0) (v12){$$};
\node[draw, circle, fill = blue!25, minimum size=2pt] at (1.1,-0.65) (v13){$$};
\node[draw, circle,fill = blue!25,  minimum size=2pt] at (3,-0.875) (v14){$$};
\node[draw, circle,fill = blue!25,   minimum size=2pt] at (4,-0.1) (v15){$$};
\node[draw, circle, fill = blue!25, minimum size=2pt] at (5.2,0.15) (v16){$$};
\node[draw, circle, fill = blue!25, minimum size=2pt] at (6,-0.5) (v17){$$};
\node[draw, circle, fill = blue!25, minimum size=2pt] at (7,0.2) (v18){$$};

\node[draw, circle,fill = blue!25,   minimum size=2pt] at (0,-1.3) (v19){$$};
\node[draw, circle, fill = blue!25, minimum size=2pt] at (1.1,-1.5) (v20){$$};
\node[draw, circle,fill = blue!25,  minimum size=2pt] at (2,-1.95) (v21){$$};
\node[draw, circle,fill = blue!25,   minimum size=2pt] at (3.5,-1.6) (v22){$$};
\node[draw, circle, fill = blue!25, minimum size=2pt] at (4.2,-1) (v23){$$};
\node[draw, circle, fill = blue!25, minimum size=2pt] at (5.5,-1.85) (v24){$$};
\node[draw, circle, fill = blue!25, minimum size=2pt] at (6.5,-1.5) (v25){$$};

\node[draw, circle, fill = blue!25, minimum size=2pt] at (0,-2.5) (v26){$$};
\node[draw, circle, fill = blue!25, minimum size=2pt] at (4.6,-2.25) (v27){$$};

\node[ minimum size=2pt] at (-1,0) (v28){$$};
\node[ minimum size=2pt] at (8,0) (v29){$$};

\draw[line width = 1mm, color=gray!50, >=latex] (v1) to (v2);
\draw[line width = 0.3mm, color=red!50, >=latex] (v1) to (v3);
\draw[line width = 0.3mm, color=red!50, >=latex] (v2) to (v3);

\draw[line width = 0.3mm, color=red!50, >=latex] (v1) to (v12);
\draw[line width = 0.3mm, color=red!50, >=latex] (v1) to (v7);
\draw[line width = 0.3mm, color=red!50, >=latex] (v12) to (v7);

\draw[line width = 0.3mm, color=red!50, >=latex] (v13) to (v12);
\draw[line width = 0.3mm, color=red!50, >=latex] (v13) to (v7);

\draw[line width = 0.3mm, color=red!50, >=latex] (v5) to (v16);
\draw[line width = 1mm, color=gray!50, >=latex] (v5) to (v18);
\draw[line width = 0.3mm, color=red!50, >=latex] (v16) to (v18);

\draw[line width = 0.3mm, color=red!50, >=latex] (v4) to (v16);
\draw[line width = 0.3mm, color=red!50, >=latex] (v4) to (v27);
\draw[line width = 0.3mm, color=red!50, >=latex] (v16) to (v27);

\draw[line width = 0.3mm, color=red!50, >=latex] (v10) to (v18);
\draw[line width = 0.3mm, color=red!50, >=latex] (v10) to (v17);
\draw[line width = 0.3mm, color=red!50, >=latex] (v17) to (v18);

\draw[line width = 1mm, color=gray!50, >=latex] (v15) to (v24);
\draw[line width = 0.3mm, color=red!50, >=latex] (v24) to (v25);
\draw[line width = 0.3mm, color=red!50, >=latex] (v15) to (v25);

\draw[line width = 0.3mm, color=red!50, >=latex] (v22) to (v23);
\draw[line width = 0.3mm, color=red!50, >=latex] (v23) to (v27);
\draw[line width = 0.3mm, color=red!50, >=latex] (v22) to (v27);

\draw[line width = 0.3mm, color=red!50, >=latex] (v10) to (v22);
\draw[line width = 0.3mm, color=red!50, >=latex] (v10) to (v21);
\draw[line width = 0.3mm, color=red!50, >=latex] (v22) to (v21);

\draw[line width = 0.3mm, color=red!50, >=latex] (v21) to (v13);
\draw[line width = 0.3mm, color=red!50, >=latex] (v13) to (v26);
\draw[line width = 0.3mm, color=red!50, >=latex] (v21) to (v26);

\draw[line width = 0.3mm, color=red!50, >=latex] (v20) to (v13);
\draw[line width = 0.3mm, color=red!50, >=latex] (v13) to (v14);
\draw[line width = 0.3mm, color=red!50, >=latex] (v20) to (v14);

\draw[line width = 0.3mm, color=red!50, >=latex] (v9) to (v19);
\draw[line width = 0.3mm, color=red!50, >=latex] (v9) to (v14);
\draw[line width = 1mm, color=gray!50, >=latex] (v19) to (v14);

\draw[line width = 0.3mm, dashed, >=latex] (v2) to (v11);
\draw[line width = 0.3mm, dashed, >=latex] (v3) to (v11);
\draw[line width = 0.3mm, dashed, >=latex] (v4) to (v17);
\draw[line width = 0.3mm, dashed, >=latex] (v6) to (v2);
\draw[line width = 0.3mm, dashed, >=latex] (v6) to (v7);
\draw[line width = 0.3mm, dashed, >=latex] (v7) to (v9);
\draw[line width = 0.3mm, dashed, >=latex] (v1) to (v8);
\draw[line width = 0.3mm, dashed, >=latex] (v2) to (v21);
\draw[line width = 0.3mm, dashed, >=latex] (v11) to (v16);
\draw[line width = 0.3mm, dashed, >=latex] (v8) to (v16);
\draw[line width = 0.3mm, dashed, >=latex] (v8) to (v13);
\draw[line width = 0.3mm, dashed, >=latex] (v8) to (v27);
\draw[line width = 0.3mm, dashed, >=latex] (v19) to (v26);
\draw[line width = 0.3mm, dashed, >=latex] (v17) to (v27);
\draw[line width = 0.3mm, dashed, >=latex] (v18) to (v25);
\draw[line width = 0.3mm, dashed, >=latex] (v22) to (v24);
\draw[line width = 0.3mm, dashed, >=latex] (v14) to (v22);
\draw[line width = 0.3mm, dashed, >=latex] (v21) to (v14);
\draw[line width = 0.3mm, dashed, >=latex] (v21) to (v11);

\end{tikzpicture}
}}
\scalebox{0.7}{\subfloat[Resulting network]{
\begin{tikzpicture}
\def \n {5}
\def \radius {2cm}
\def \margin {8}

\node[draw, circle,fill = blue!25,   minimum size=2pt] at (1,2) (v1){$$};
\node[draw, circle, fill = blue!25, minimum size=2pt] at (3,1.55) (v2){$$};
\node[draw, circle,fill = blue!25,  minimum size=2pt] at (4,2) (v3){$$};
\node[draw, circle,fill = blue!25,   minimum size=2pt] at (5,1.75) (v4){$$};
\node[draw, circle, fill = blue!25, minimum size=2pt] at (6,1.35) (v5){$$};

\node[draw, circle,fill = blue!25,   minimum size=2pt] at (0,1) (v6){$$};
\node[draw, circle, fill = blue!25, minimum size=2pt] at (1.1,0.8) (v7){$$};
\node[draw, circle,fill = blue!25,  minimum size=2pt] at (2.2,0.45) (v8){$$};
\node[draw, circle,fill = blue!25,   minimum size=2pt] at (2.3,1.2) (v9){$$};
\node[draw, circle, fill = blue!25, minimum size=2pt] at (3.3,0.6) (v10){$$};
\node[draw, circle, fill = blue!25, minimum size=2pt] at (4,1) (v11){$$};

\node[draw, circle,fill = blue!25,   minimum size=2pt] at (0,0) (v12){$$};
\node[draw, circle, fill = blue!25, minimum size=2pt] at (1.1,-0.65) (v13){$$};
\node[draw, circle,fill = blue!25,  minimum size=2pt] at (3,-0.875) (v14){$$};
\node[draw, circle,fill = blue!25,   minimum size=2pt] at (4,-0.1) (v15){$$};
\node[draw, circle, fill = blue!25, minimum size=2pt] at (5.2,0.15) (v16){$$};
\node[draw, circle, fill = blue!25, minimum size=2pt] at (6,-0.5) (v17){$$};
\node[draw, circle, fill = blue!25, minimum size=2pt] at (7,0.2) (v18){$$};

\node[draw, circle,fill = blue!25,   minimum size=2pt] at (0,-1.3) (v19){$$};
\node[draw, circle, fill = blue!25, minimum size=2pt] at (1.1,-1.5) (v20){$$};
\node[draw, circle,fill = blue!25,  minimum size=2pt] at (2,-1.95) (v21){$$};
\node[draw, circle,fill = blue!25,   minimum size=2pt] at (3.5,-1.6) (v22){$$};
\node[draw, circle, fill = blue!25, minimum size=2pt] at (4.2,-1) (v23){$$};
\node[draw, circle, fill = blue!25, minimum size=2pt] at (5.5,-1.85) (v24){$$};
\node[draw, circle, fill = blue!25, minimum size=2pt] at (6.5,-1.5) (v25){$$};

\node[draw, circle, fill = blue!25, minimum size=2pt] at (0,-2.5) (v26){$$};
\node[draw, circle, fill = blue!25, minimum size=2pt] at (4.6,-2.25) (v27){$$};

\node[ minimum size=2pt] at (-1,0) (v28){$$};
\node[ minimum size=2pt] at (8,0) (v29){$$};

\draw[line width = 0.3mm, color=blue!25, >=latex] (v1) to (v2);
\draw[line width = 0.3mm, color=blue!25,  >=latex] (v1) to (v3);
\draw[line width = 0.3mm, color=blue!25,  >=latex] (v2) to (v3);

\draw[line width = 0.3mm, color=blue!25,  >=latex] (v1) to (v12);
\draw[line width = 0.3mm, color=blue!25,  >=latex] (v1) to (v7);
\draw[line width = 0.3mm, color=blue!25,  >=latex] (v12) to (v7);

\draw[line width = 0.3mm, color=blue!25,  >=latex] (v13) to (v12);
\draw[line width = 0.3mm, color=blue!25,  >=latex] (v13) to (v7);

\draw[line width = 0.3mm, color=blue!25,  >=latex] (v5) to (v16);
\draw[line width = 0.3mm, color=blue!25, >=latex] (v5) to (v18);
\draw[line width = 0.3mm, color=blue!25,  >=latex] (v16) to (v18);

\draw[line width = 0.3mm, color=blue!25,  >=latex] (v4) to (v16);
\draw[line width = 0.3mm, color=blue!25,  >=latex] (v4) to (v27);
\draw[line width = 0.3mm, color=blue!25,  >=latex] (v16) to (v27);

\draw[line width = 0.3mm, color=blue!25,  >=latex] (v10) to (v18);
\draw[line width = 0.3mm, color=blue!25,  >=latex] (v10) to (v17);
\draw[line width = 0.3mm, color=blue!25,  >=latex] (v17) to (v18);

\draw[line width = 0.3mm, color=blue!25, >=latex] (v15) to (v24);
\draw[line width = 0.3mm, color=blue!25,  >=latex] (v24) to (v25);
\draw[line width = 0.3mm, color=blue!25,  >=latex] (v15) to (v25);

\draw[line width = 0.3mm, color=blue!25,  >=latex] (v22) to (v23);
\draw[line width = 0.3mm, color=blue!25,  >=latex] (v23) to (v27);
\draw[line width = 0.3mm, color=blue!25,  >=latex] (v22) to (v27);

\draw[line width = 0.3mm, color=blue!25,  >=latex] (v10) to (v22);
\draw[line width = 0.3mm, color=blue!25,  >=latex] (v10) to (v21);
\draw[line width = 0.3mm, color=blue!25,  >=latex] (v22) to (v21);

\draw[line width = 0.3mm, color=blue!25,  >=latex] (v21) to (v13);
\draw[line width = 0.3mm, color=blue!25,  >=latex] (v13) to (v26);
\draw[line width = 0.3mm, color=blue!25,  >=latex] (v21) to (v26);

\draw[line width = 0.3mm, color=blue!25,  >=latex] (v20) to (v13);
\draw[line width = 0.3mm, color=blue!25,  >=latex] (v13) to (v14);
\draw[line width = 0.3mm, color=blue!25,  >=latex] (v20) to (v14);

\draw[line width = 0.3mm, color=blue!25,  >=latex] (v9) to (v19);
\draw[line width = 0.3mm, color=blue!25,  >=latex] (v9) to (v14);
\draw[line width = 0.3mm, color=blue!25, >=latex] (v19) to (v14);

\draw[line width = 0.3mm, color=blue!25,  >=latex] (v2) to (v11);
\draw[line width = 0.3mm, color=blue!25,  >=latex] (v3) to (v11);
\draw[line width = 0.3mm, color=blue!25,  >=latex] (v4) to (v17);
\draw[line width = 0.3mm, color=blue!25,  >=latex] (v6) to (v2);
\draw[line width = 0.3mm, color=blue!25,  >=latex] (v6) to (v7);
\draw[line width = 0.3mm, color=blue!25,  >=latex] (v7) to (v9);
\draw[line width = 0.3mm, color=blue!25,  >=latex] (v1) to (v8);
\draw[line width = 0.3mm, color=blue!25,  >=latex] (v2) to (v21);
\draw[line width = 0.3mm, color=blue!25,  >=latex] (v11) to (v16);
\draw[line width = 0.3mm, color=blue!25,  >=latex] (v8) to (v16);
\draw[line width = 0.3mm, color=blue!25,  >=latex] (v8) to (v13);
\draw[line width = 0.3mm, color=blue!25,  >=latex] (v8) to (v27);
\draw[line width = 0.3mm, color=blue!25,  >=latex] (v19) to (v26);
\draw[line width = 0.3mm, color=blue!25,  >=latex] (v17) to (v27);
\draw[line width = 0.3mm, color=blue!25,  >=latex] (v18) to (v25);
\draw[line width = 0.3mm, color=blue!25,  >=latex] (v22) to (v24);
\draw[line width = 0.3mm, color=blue!25,  >=latex] (v14) to (v22);
\draw[line width = 0.3mm, color=blue!25,  >=latex] (v21) to (v14);
\draw[line width = 0.3mm, color=blue!25,  >=latex] (v21) to (v11);

\end{tikzpicture}
}}

\caption{\label{fig:ident} The network that is formed and eventually observed is shown in panel D.
The process comes from forming triangles with probability $\beta_{T}$ as in (B) in red; and forming links, in grey, with probability $\beta_{L}$ as in (C)---all independently.  New links are dashed while
 links that overlap with some link also formed in a triangle are in solid and bold.  We see that there is
both (i) overlap  as some links coincide with links already in triangles, as well as (ii) extra triangles that were generated ``incidentally.''
Given that we only observe the resulting network in panel D, we need to infer the formation of the different subgraphs carefully and not simply by directly counting observed links and triangles.}
\end{figure}

The overlap and incidental generation present a challenge for estimating a parameter related to triangle formation since some of the observed triangles were
\emph{``directly generated''} in the formation process,
and others were {\sl ``incidentally generated;''} and similarly, it presents a challenge to estimating a parameter for link formation since some truly generated links end up as parts of triangles.
We show that despite this difficulty, the parameters can be recovered by careful study of the observed patterns. In particular, we show that a SUGM is {\sl always} identified, and also provide techniques for recovering the parameters.

\subsection{A General Identification Result}

We first show that as the parameters of any SUGM change, so does the distribution over networks, and hence SUGMs are identified models.

Let $\Pr_{\beta}$ denote the probability distribution over a network $g$ on $n$ nodes under a vector of parameters ${\beta}$ describing the probabilities of subgraph types $(G_\ell)_{\ell\in \{1,\ldots,k\}}$.

\begin{theorem}\label{newid}
Every SUGM is identified.   That is, for any finite collection of distinct types of subgraphs $(G_\ell)_{\ell\in \{1,\ldots,k\}}$ on $n$ nodes,
\(
\beta\neq\beta'\implies \Pr_{\beta}
\neq \Pr_{\beta'}.
\)
\end{theorem}

Recalling the general definition of the SUGM, this means that
for every SUGM (even one comprised of subgraphs that could have nodes with varying (discrete)
covariates and allowing for multiplexing, etc.) is identified.

To understand why this holds, for instance in the case of links and triangles, note that as one varies $(\beta_{L},\beta_{T})$, the {\sl relative} rates of overall observed links and triangles change, as do the number of triangles that overlap with each other.
One can calculate the relative rates at which incidental links and triangles are expected to be generated, and there is an invertible relationship between observed counts of links and triangles, and the underlying
rates at which they were expected to be directly formed.  Theorem \ref{newid} shows that this is true not only for links and triangles, but for any
collection of distinct subgraphs.

We emphasize, of course, that identification does not imply that the parameters are easily estimated,
especially on a very small number of nodes.
We provide results on consistency below, which require observation of a sufficiently large network and/or sufficiently many networks.

\subsubsection{Identification from Link and Triangle Counts}

Although  Theorem \ref{newid} shows that SUGMs are always identified---i.e., distinct
parameters yield distinct distributions---it is often convenient to use minimum distance based
estimators  based on simple moments of the network.  Thus, it is useful to show that identification can be achieved from simple statistics.
We illustrate that this can be done with direct counts of the relative frequency of appearances of the subgraphs.  In particular, in Proposition \ref{prop:Links-Triangles-SUGM} we show that a links and triangles SUGM can be identified directly from the counts of links and triangles:  $S(g) = (S_L(g), S_T(g))$.
This does not mean that one can ignore incidental generation, but it does mean that the information one has to use can be simple counts.

Further below, in Theorem \ref{thm:Largenetwork}, we show conditions under which such direct counts not only identify the parameters
for general subgraphs, but can also be used to derive consistent and normally distributed estimators of the parameters.

To understand the identification, consider Figure \ref{fig:id}.
Each configuration involves two triangles, but the graph in  Panel B with
only five links is {\sl relatively} more easily incidentally formed than the one in Panel A.
Thus, by looking at the combination of how many triangles and how likely links there are, we can sort out relative rates of the two parameters.

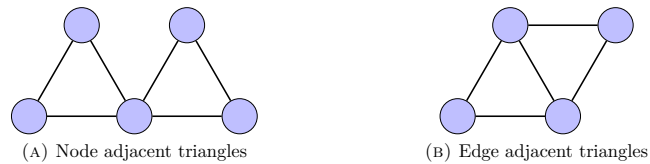
\begin{figure}[h]
\centering
\scalebox{0.7}{\subfloat[Node adjacent triangles]{
\begin{tikzpicture}
\def \n {5}
\def \radius {2cm}
\def \margin {8}
\node[draw, circle, fill = blue!25, minimum size=19pt] at (-2,0) (v1){$$};
\node[draw, circle,fill = blue!25,  minimum size=19pt] at (-1,1.73) (v2){$$};
\node[draw, circle,fill = blue!25,   minimum size=19pt] at (0,0) (v3){$$};
\node[draw, circle,fill = blue!25,  minimum size=19pt] at (1,1.73) (v4){$$};
\node[draw, circle,fill = blue!25,   minimum size=19pt] at (2,0) (v5){$$};
\node[ minimum size=19pt] at (-4,0) (v6){$$};
\node[ minimum size=19pt] at (4,0) (v7){$$};
\draw[line width = 0.3mm, >=latex] (v1) to (v2);
\draw[line width = 0.3mm,  >=latex] (v1) to (v3);
\draw[line width = 0.3mm,  >=latex] (v2) to (v3);
\draw[line width = 0.3mm,  >=latex] (v3) to (v4);
\draw[line width = 0.3mm,  >=latex] (v3) to (v5);
\draw[line width = 0.3mm,  >=latex] (v4) to (v5);
\end{tikzpicture}
}}
\scalebox{0.7}{\subfloat[Edge adjacent triangles]{
\begin{tikzpicture}
\def \n {5}
\def \radius {2cm}
\def \margin {8}
\node[draw, circle, fill = blue!25, minimum size=19pt] at (-2,0) (v1){$$};
\node[draw, circle,fill = blue!25,  minimum size=19pt] at (-1,1.73) (v2){$$};
\node[draw, circle,fill = blue!25,   minimum size=19pt] at (0,0) (v3){$$};
\node[draw, circle,fill = blue!25,  minimum size=19pt] at (1,1.73) (v4){$$};
\node[ minimum size=19pt] at (-3,0) (v5){$$};
\node[ minimum size=19pt] at (2,0) (v6){$$};
\draw[line width = 0.3mm, >=latex] (v1) to (v2);
\draw[line width = 0.3mm,  >=latex] (v1) to (v3);
\draw[line width = 0.3mm,  >=latex] (v2) to (v3);
\draw[line width = 0.3mm,  >=latex] (v2) to (v4);
\draw[line width = 0.3mm,  >=latex] (v3) to (v4);
\end{tikzpicture}
}}
\caption{\label{fig:id} Two different configurations of two triangles; one has a count of 6 total links and the other has a count of 5 links.  (A) is more relatively more likely to come directly from the formation of two triangles, and (B) is relatively more likely to come from a combination of links and triangles.  The likelihoods of links and triangles can thus be deduced via careful deductions from the combination of the counts of links and triangles.}
\end{figure}

\begin{proposition}
\label{prop:Links-Triangles-SUGM} A SUGM of links and triangles is
identified from moments  $S(g) = (S_L(g), S_T(g))$ for any $\beta=\left(\beta_{L},\beta_{T}\right)\in\left[0,1\right)^{2}$.
That is, if $\left(\beta_{L}',\beta_{T}'\right)\neq\left(\beta_{L},\beta_{T}\right)$
then $\E_{\beta'}\left[S\left(g\right)\right]\neq\E_{\beta}\left[S\left(g\right)\right]$.
\end{proposition}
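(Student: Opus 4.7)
The plan is to compute $\E_\beta[S_L(g)]$ and $\E_\beta[S_T(g)]$ in closed form and then show that the map $(\beta_L,\beta_T) \mapsto (\E_\beta[S_L],\E_\beta[S_T])$ is injective on $(0,1)^2$. For any pair $ij$, the edge is present iff either the link $ij$ formed directly (probability $\beta_L$) or at least one of the $n-2$ triangles through $ij$ formed (probability $1-(1-\beta_T)^{n-2}$); since these involve disjoint, independent subgraph indicators,
$$\E_\beta[S_L(g)] = 1-(1-\beta_L)(1-\beta_T)^{n-2}.$$
For the triangle count $\E_\beta[S_T(g)]=\Prob(g_{ij}g_{jk}g_{ik}=1)$, I condition on whether the triangle $ijk$ itself formed. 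Given that it did not, the presence of each of $ij$, $jk$, $ik$ depends on its own direct link and the $n-3$ triangles through that pair other than $ijk$; for $m\neq k$, $m'\neq i$, $m''\neq j$, the triples $\{i,j,m\}$, $\{j,k,m'\}$, $\{i,k,m''\}$ are pairwise distinct, so these three event bundles are independent. Hence
$$\E_\beta[S_T(g)] = \beta_T + (1-\beta_T)\bigl[1-(1-\beta_L)(1-\beta_T)^{n-3}\bigr]^3.$$

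Next, set $p=1-\beta_L$, $q=1-\beta_T$, and $u=pq^{n-2}$, so that $u=1-\E_\beta[S_L(g)]$. Writing $pq^{n-3}=u/q$ and expanding yields
$$\E_\beta[S_T(g)] = 1 - 3u + \frac{3u^2}{q} - \frac{u^3}{q^2}.$$
I will show that on each level set $\{u=\text{const}\}$, the right-hand side is strictly monotonic in $q$. Differentiating with $u$ fixed,
$$\frac{d\,\E_\beta[S_T]}{dq} = -\frac{3u^2}{q^2} + \frac{2u^3}{q^3} = \frac{u^2}{q^3}\bigl(2u-3q\bigr),$$
and the admissibility constraint $p\in(0,1)$ forces $u<q^{n-2}$, i.e., $q>u^{1/(n-2)}$. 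For $n\geq 3$ and $u\in(0,1)$, $u^{1/(n-2)}\geq u>2u/3$, so $q>2u/3$, and the derivative is strictly negative throughout the admissible range.

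Monotonicity lets me invert the map step by step: $\E_\beta[S_L]$ determines $u$; then the strictly decreasing function $q\mapsto \E_\beta[S_T]$ recovers $q$ uniquely from $\E_\beta[S_T]$; and finally $p=u/q^{n-2}$. This gives the required injectivity, so $(\beta_L,\beta_T)\neq (\beta_L',\beta_T')$ implies $\E_{\beta}[S(g)]\neq \E_{\beta'}[S(g)]$. The main obstacle is the triangle computation: the payoff from conditioning on whether $ijk$ forms hinges on verifying the disjointness of the three collections of subgraph indicators governing the edges $ij$, $jk$, $ik$, which delivers conditional independence and the clean cube in the formula. The rest is an algebraic simplification and a sign check, both of which are straightforward once the closed form is in hand.
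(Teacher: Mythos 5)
Your proof is correct, and your moment computations coincide with the paper's: your closed forms for $\E_\beta[S_L(g)]$ and $\E_\beta[S_T(g)]$ are exactly the paper's $q_L=\beta_T+(1-\beta_T)\widetilde{q}_L$ and $q_T=\beta_T+(1-\beta_T)(\widetilde{q}_L)^3$ with $\widetilde{q}_L=1-(1-\beta_L)(1-\beta_T)^{n-3}$, including the key conditional-independence observation that, given the triangle $ijk$ did not form directly, the three edges are governed by disjoint collections of independent subgraph indicators. Where you genuinely diverge is the injectivity step. The paper keeps $\beta_T$ and $\widetilde{q}_L$ as the working variables and reduces everything to a standalone lemma: if two distinct parameter vectors matched both moments, one would need $by^3+1-b=(by+1-b)^3$ for some $b,y\in(0,1)$, which is ruled out by a sign check on the derivative of the difference. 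You instead change variables to $u=(1-\beta_L)(1-\beta_T)^{n-2}$ and $q=1-\beta_T$, note that the link moment pins down $u$ exactly, and show that on each level set of $u$ the triangle moment $1-3u+3u^2/q-u^3/q^2$ is strictly decreasing in $q$; the only delicate point is that strict negativity of $\frac{u^2}{q^3}(2u-3q)$ relies on the admissibility constraint $q^{n-2}>u$, which for $q,u\in(0,1)$ and $n\geq 3$ gives $q>u>2u/3$, and you handle that correctly. Your route is a clean triangular inversion that avoids the cubic identity entirely and makes the inversion order (first $u$, then $q$, then $p$) explicit; the paper's route isolates the structural fact that both moments are the same affine map applied to $y$ and to $y^3$, which is the form it leans on when sketching the extensions to covariates and multiplexing.
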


Let us outline the basic ideas behind the proof, with the full proof appearing in the appendix.
Let $\widetilde{q}_L\left(\beta_{L},\beta_{T}\right)$ denote the probability that any given link forms conditional upon exactly one particular triangle that it could be a part of not forming, which depends on the $\beta$s.  For instance, for nodes $ij$ it is the probability that $ij$ is formed
either as a link or as part of a triangle that is {\sl not} triangle $hij$ for some other node $h$.   Although this is not an immediately obvious parameter to define, it allows us to write the probability that a given link forms as $\beta_T + (1-\beta_T) \widetilde{q}_L\left(\beta_{L},\beta_{T}\right)$.
This expression turns to be useful as it helps us to compare the rate at which links form to the rate at which triangles form in a way that shows how they are identified.
In particular:
\begin{equation}
\label{eslt}
\E_{\beta_L,\beta_T}\left[ S_{L}(g), S_{T}(g)  \right] =  \left[  \beta_T + (1-\beta_T) \widetilde{q}_L\left(\beta_{L},\beta_{T}\right)    ,  \beta_T + (1-\beta_T) \left(\widetilde{q}_L\left(\beta_{L},\beta_{T}\right)\right)^3 \right].
\end{equation}
For instance, note that the term $\beta_T + (1-\beta_T) \left(\widetilde{q}_L\left(\beta_{L},\beta_{T}\right)\right)^3$ is the probability that a triangle forms, either directly ($\beta_T$), or does not form
directly $(1-\beta_T)$ but then each of the links then forms on its own $\left(\widetilde{q}_L\left(\beta_{L},\beta_{T}\right)\right)^3$.\footnote{Conditional upon the triangle
not forming directly, the links are then independent.}
  This is helpful in showing how different parameters lead to different rates of formation of links and triangles since we can isolate the difference via the $\widetilde{q}_L\left(\beta_{L},\beta_{T}\right)$ versus $\left(\widetilde{q}_L\left(\beta_{L},\beta_{T}\right)\right)^3$
  expressions.

Analogs of this proposition extend to cases with covariates and multiplexing, simply with more complicated extensions of (\ref{eslt}) accounting for the specific types of triangles or links.
Also, a general version of asymptotic identification is a by-product of Theorem \ref{thm:Largenetwork}, below.

\section{Estimation and Asymptotics}
\label{sec:Asymptotics}

We now provide conditions under which various estimators of the parameters are consistent and describe their asymptotic distributions.  We consider two asymptotic frames, in which at least one of either the size of the network or the number of networks becomes large enough for consistent estimation. We discuss two different estimators for each frame for a total of four estimators.

\subsection{Data and Asymptotic Frames}
Suppose that the researcher observes $R\geq 1$ independently, and identically drawn graphs $(g_1,\ldots,g_R)$, on at least $n$ nodes each, drawn from a SUGM with a list of $k$ subgraphs and parameters $\beta \in [0,1)^k$.
Each of the $k$ subgraphs involves no more than $n$ nodes.
For simplicity in notation,
we work with each network having exactly $n$ nodes, but one can directly extend the results by simply selecting $n$ nodes for each network and applying all of
our estimation to those subgraphs.

The first asymptotic frame, studied in Section \ref{sec:many-networks}, covers situations in which the number of different realizations of networks $R$ tends to infinity. Here researchers have access to many networks and the empirical moments of interest converge to their expectations via observation of independent networks. This applies when a researcher is studying, for instance a number of schools, classrooms, villages, etc.
In this case estimation and inference is straightforward.
There are a growing number of independent draws from the distribution and we have
already proven identification in Theorem \ref{newid}.  Our Theorem \ref{thm:many_networks} shows that the maximum likelihood estimator from $R$ networks---which we denote by $\betaRML$---is consistent and asymptotically normally distributed as $R$ grows.

Given the difficulty in calculating the likelihoods for networks, also consider a second computationally-simpler minimum-distance estimator (presented for the case of links and triangles), denoted by $\betaRMD$.  We show in Proposition \ref{prop:many_networks_LT} that this minimum distance estimator is consistent and asymptotically normally distributed.

The second asymptotic frame is studied in Section \ref{sec:single-large} and it holds the number of networks observed $R$ fixed, without loss of generality at $R = 1$, and then lets the number of nodes grow: $n \rightarrow \infty$.
Examples include when the researcher has detailed information about a large community, friendships on social media platform, citation networks, etc.  Clearly, this extends to cases with large $n$ and more than one network, but we consider $R=1$ for ease of notation.
This is the more challenging perspective as the observations of
various parts of a network are not independent.
Also, the identification result from Theorem \ref{newid} does not
guarantee that the empirical moments converge to
their expectations in a single large network.

There are two cases of interest with a single large network. The first is what we call the sparse case (which we explicitly characterize), and this is a situation in which certain types of incidental generation of subgraphs become asymptotically negligible.
For the sparse case, we prove that identification and asymptotic consistency and normality is possible from an easy variation on direct
counts of observed subgraphs. Namely, one begins with the largest subgraph in the model, count how many of them are present, then remove links associated with them and step down to the next largest and so on. The estimator corresponding to this procedure is what we call a direct count estimator---denoted by $ \betanDC$---as it is essentially directly calculating the linking rate for each subgraph type.
We prove the
 consistency and asymptotic normality of the direct count estimator under suitable sparsity conditions in Theorem \ref{thm:Largenetwork}.

It is possible to verify whether a network is sparse enough to permit the direct estimator in the following way. One can take relevant parameter values for the SUGM (which one can find by a first crude estimation from the data) and then generate a network with those parameter values and then check to see if the direct estimators recover these parameters. If there is too much incidental generation, then the parameters will not be recovered and then
our fourth estimator is needed, as is our new central limit theorem.

In particular, Theorem \ref{thm:Largenetwork} requires a level of sparsity that makes certain kinds of incidental generations rare. For denser graphs (which can still be sparse, but permitting nontrivial incidental generation) we work with a minimum distance estimator that matches the moments of the shares of the subgraphs---which we denote by $\betanMD$. In Proposition \ref{prop:LT_SUGM1-2}, we show  the consistency and asymptotic normality of this minimum distance estimator.  We focus on the links and triangles model since the calculations are idiosyncratic based on the specific SUGM the researcher wants to employ, but the logic extends. The proof  of asymptotic normality in this case of potentially dense SUGMs requires using our new central limit theorem for correlated random variables, Theorem \ref{clt}, which is the focus of Section \ref{sclt}.

 Appendix D
 provides simulations verifying consistency, asymptotic normality, and convergence. We also show how $\betanDC$ and $\betanMD$ both perform well when incidental generation is sufficiently small but that as the networks become denser $\betanDC$ is biased while $\betanMD$ is consistent.

We let $\beta$ possibly depend on $n$ and/or $R$ as described below. We take the list of the types of subgraphs to be analyzed to be fixed.

\subsection{The Many Networks Case}\label{sec:many-networks}

We keep the presentation of this first frame brief since it follows standard statistical arguments (e.g., \cite{newey1994large}).

One has a collection of $R$ networks, each drawn independently according to the same SUGM with the same parameter vector $\beta_0$. We hold the set of nodes $\{1,\ldots,n\}$ (and their covariates) fixed. Theorem \ref{thm:many_networks} states that a maximum likelihood estimator of the parameters is consistent and asymptotically normally distributed.

\begin{theorem}\label{thm:many_networks}
	Consider a SUGM of $k$ distinct types of subgraphs with $\beta_0 \in \text{int}(\mathcal{B})$,
	for $\mathcal{B}$ a  compact subset of $[0,1)^k$. Let $g_r$ for $r=1,\ldots,R$
	denote i.i.d. draws from this distribution. Let $\betaRML$ denote the maximum likelihood estimator
	\(
 \betaRML= \argmax_{\beta \in \mathcal{B}} \frac{1}{R}\sum_r \log \Pr_{\beta}(g_r) .
	\)
	 Then
	\(
	\betaRML
 \cvgto\beta_{0}.
	\)
	If in addition $J :=  \E[\nabla_{\beta} \log \Pr_{\beta_0}(g_r)\nabla_{\beta} \log \Pr_{\beta_0}(g_r)']$ is non-singular, then
\(
	\sqrt{R}\left(\betaRML-\beta_{0}\right)\rightsquigarrow\mathcal{N}\left(0,J^{-1}\right).
	\)
\end{theorem}

Although Theorem \ref{thm:many_networks} demonstrates that a consistent and asymptotically normally distributed estimator exists, calculating the likelihood function of arbitrary
networks as a function of the parameters can be computationally intensive for large networks.
Thus, we also present a result on a minimum distance estimator which is computationally straightforward since it simply involves calculating frequencies of certain subgraphs.  We present it
based on links and triangles as the typical case that researchers will need, but the technique extends as a researcher requires.
As before, let $S_L(g)$ and $S_T(g)$ denote the fraction of links and triangles in the network $g$, with $S = (S_L, S_T)'$.

Let
$$h(g_r,\beta)= S(g_r)-\E_{\beta}\left[S\left(g_r\right)\right], $$
be a moment function comparing observed subgraph statistics to expected ones for a given $\beta$.

Let $\betaRMD$ denote the minimum distance estimator,
	\[
        \betaRMD := \argmin_{\beta \in \mathcal{B}} \left(
	\frac{1}{R}\sum_r h(g_r,\beta)
	\right)^\prime \left(
	\frac{1}{R}\sum_r h(g_r,\beta)
	\right) .
	\]

\begin{proposition}\label{prop:many_networks_LT}
	Consider a SUGM of links and triangles with parameters $\beta_0 \in \text{int}(\mathcal{B})$, a compact subset of $[0,1)^2$. Let $g_r$ for $r=1,\ldots,R$ denote i.i.d. draws from this distribution.
	Then,
	\[
	 \betaRMD \cvgto\beta_{0}
	\text{ and }
	\sqrt{R}\left(\betaRMD-\beta_{0}\right)\rightsquigarrow\mathcal{N}\left(0,(H^\prime \Omega^{-1} H)^{-1}\right)
	\]
	where $H :=  \E [\nabla_{\beta} h(g_r,\beta_0)]$ and $\Omega= \E \left[ h(g_r,\beta_0) h(g_r,\beta_0)^\prime\right]$.
\end{proposition}

\subsection{The Large Network Case}\label{sec:single-large}

Next we turn to the case where researchers have access to at least one large network (modeled as  $n \rightarrow \infty$).   For the exposition, we let $R=1$, but clearly this extends
directly to having observations of more than one network.

This case is considerably more challenging as it involves correlated observations generated within a network.
Network data tend to be sparse, but still have local patterns such as clustering, so that people have relatively few connections compared to the potential number of links, but where a person's neighbors tend to be linked to each other
with much higher than an independent probability (e.g., see the background in \cite{newman2003,jackson2008social}).
Such clustering is the challenging aspect of the asymptotics since subgraphs
are not only directly generated but also
incidentally generated.  Thus,
we need new techniques for our asymptotic results.

\subsubsection{Sequences of Large Random Networks}

To describe how parameter estimators behave as a function of the number of nodes
$n$, it is useful to allow the parameters to also be indexed by $n$. This approach is standard in the random graphs literature (e.g., see the classic book of \cite{bollobas2001random}) as it is needed to accommodate most applications.
Specifically, research on social networks has long observed that parameters need to adjust with the number of nodes.
For example, friendship networks among a small set of agents (say 50 or 100) and large set of agents (thousands or much more)
often have comparable average degrees.\footnote{See \cite{chandrasekhar2015econometrics} for examples networks of varying size ranging from village network data in sub-saharan Africa or India to university dorm friendship network data which all exhibit somewhat comparable number of links per node.}
As a concrete example, consider friendships among high school students in the U.S. based on the Add Health data set
(e.g., see \citep*{currarinijp2009,currarinijp2010}).
There are some high schools with only 30 students and others with around 3000 students.
The average degree ranges between 6 and 8 over the high schools, but this means that the link probability must shrink dramatically with $n$:  average degree $d$ corresponds to a link probability of roughly $d/30$ in the small schools, but only $d/3000$ in the large schools.
Thus, irrespective of the size of their school, students have numbers of friends of the same order of magnitude; and so the true underlying parameters describing friendship formation must decrease with $n$ to match the data.

Thus, we consider
a sequence of SUGMs with subgraphs $(G_1^n,...,G_k^n)$ that form on $n$ nodes that are generated with probabilities $\beta^n = (\beta_1^n,...,\beta_k^n)$.
The superscript on the $\beta^n$ indicates the dependence on $n$ to allow for true subgraph formation rates to vary along the sequence.

\subsubsection{Direct-Count Estimators for Negligible Incidental Generation}\label{sec:no-incidental-estimation}

It is convenient to express the $\beta_\ell^n$s in the form
\[
\beta_\ell^n = \frac{b_\ell}{n^{h_\ell}}
\]
for some $b_\ell>0$ and $ h_\ell > 0$ fixed in $n$.
This allows us to encode the rates at which the parameters vary with $n$,
and is a general way of encoding the rates that could come from meeting, time budgets, costs, or any other constraints that gives rise to sparse networks.

We consider the case in which $m_\ell> h_\ell$ (where recall that $m_\ell$ is the number of nodes in the subgraph of type $\ell$ and is fixed along the sequence),
as otherwise the expected number of subgraphs in the whole network could be bounded as $n$ grows, precluding estimation.

The researcher can make assumptions on $h_\ell$, either its value or the possible range of values that are admissible for their model. In fact, the magnitude may be straightforward to observe with simple subgraph counts. For example, if across networks of varying size, one sees some growing function of links, triangles, and so on, one can infer what values of $h_\ell$ are needed to be consistent with data. In fact, in most models of network formation, such assumptions are implicitly made, knowingly or not.

We show that even without knowing the $b_\ell$ or $h_\ell$,
the parameters $\beta_\ell^n$ can be well-estimated, provided the
network model is not so sparse that subgraphs are never observed, nor so dense so that they scale linearly in $n$.

To develop the estimator, first we need some definitions and notation.

Consider a SUGM and order the classes of the subgraphs, $G_1^n,\ldots, G_\ell^n,\ldots,G_k^n$, from `largest' to `smallest'.
In particular, pick an ordering of $1,\ldots,k$ so that a subgraph in $G^n_\ell$ cannot be a
subnetwork of the subnetworks in $G^n_{\ell'}$ for $k\geq \ell'>\ell \geq 1$:
$$g_\ell\in G^n_{\ell} \ {\rm and } \ g_{\ell'} \in G^n_{\ell'}\ \ {\rm implies\ that} \ \ g_\ell \not\subset g_{\ell'}.$$
There exists at least one such ordering - for instance, any ordering in which subgraphs with more links are counted before subgraphs with fewer links.  In an example with links, 2-stars and triangles:  triangles precede 2-stars which precede links.
Note that this is a partial order: for instance, a `three link line' $ij, jk, kl$ is neither a subgraph nor a supergraph of a `3-star' $ij,ik,il$, which is also a three link subgraph on four nodes.
It is irrelevant in which order subgraphs with the same number of links are counted.

We count subgraphs in this order after having removed links associated with all of the subgraphs already counted.  The resulting counts are
denoted $\widetilde{S}^n_\ell$:\footnote{Note that counting in order from `largest' to `smallest' subnetworks means that
we count things from smallest to largest index $\ell$ since the specification of how we ordered labels moves in the opposite direction of the size of the subgraphs.}
\[
\widetilde{S}_{\ell}^n(g) = | \{ g_{\ell} \in G^n_{\ell} :  g_{\ell}\subset g \ {\rm and \ }  g_{\ell}\not\cap g_{\ell''} {\rm \ for \ any \ }
g_{\ell''} \in G^n_{\ell''} {\rm \ such \ that \ } g_{\ell''}\subset g {\rm \ for \ some \ } {\ell''<\ell}  \} |.
\]

The logic of this is that incidental generation is more often in one direction than another: a triangle incidentally generates three links, while it can be much rarer that three links happen to independently form to make a triangle.
This manner of counting motivates a simple estimator that we call the \emph{direct-count} estimator. We then divide by the number of
possible subgraphs of that variety.

For the direct count estimator $\betanDC$ we presume that, for each $\ell$, $G_\ell^n$ includes all subgraphs that are relabellings of each other.
Thus, we work without demographics on the subgraphs, but these counts can easily be adjusted accordingly by normalizing by $|G_\ell^n|$.
Let $\kappa_{\ell}^n$ denote the (finite number) of relabelings to count different subgraphs in $G_\ell^n$ on a given set of $m_\ell$ nodes.\footnote{For example, note that $\kappa_\ell ^n= 1$ for a triangle but for a $K$-star it is $K$ since each star is different when a different member of the $K$ nodes is the center.}
Then $\kappa_{\ell}^n \binom{n}{m_\ell}$ is the number of possible subgraphs of type $\ell$.

The direct-count estimator $\betanDC$  is
\begin{equation}\label{phat}
\betanDCell = \frac{\widetilde{S}^n_\ell(g)}{\kappa_{\ell} \binom{n}{m_\ell}}.
\end{equation}
As we prove in Section \ref{sec:no-incidental-estimation}, under suitable conditions,  incidental generation is  low and the direct estimators are consistent estimators of the true parameters and are asymptotically normally distributed.

As an illustration, consider Figure \ref{fig:multi_panel} in which links and triangles are formed on 41 nodes.
There are 9 truly generated triangles, but 10 observed overall.   So,
the frequency of triangles, $\widetilde{S}^n_T(g)$, is overestimated by using 10 instead of 9.
The true frequency was $9/10660$ but is estimated as $10/10660$.
With respect to links, there were actually 25 truly directly generated, but one becomes part of an incidentally generated triangle
and two others overlap on existing triangles,
and so $\widetilde{S}^n_L(g)$ becomes 22 instead.
So we estimate $22/820$ while the true frequency was $25/820$.
These errors are already small on a network on just 41 nodes, and as we prove below, the errors disappear completely as $n$ grows.

\begin{figure}[h]
\begin{center}
\subfloat[$n$ nodes]{
\label{fig:image_1}
\includegraphics[width=0.251\textwidth]{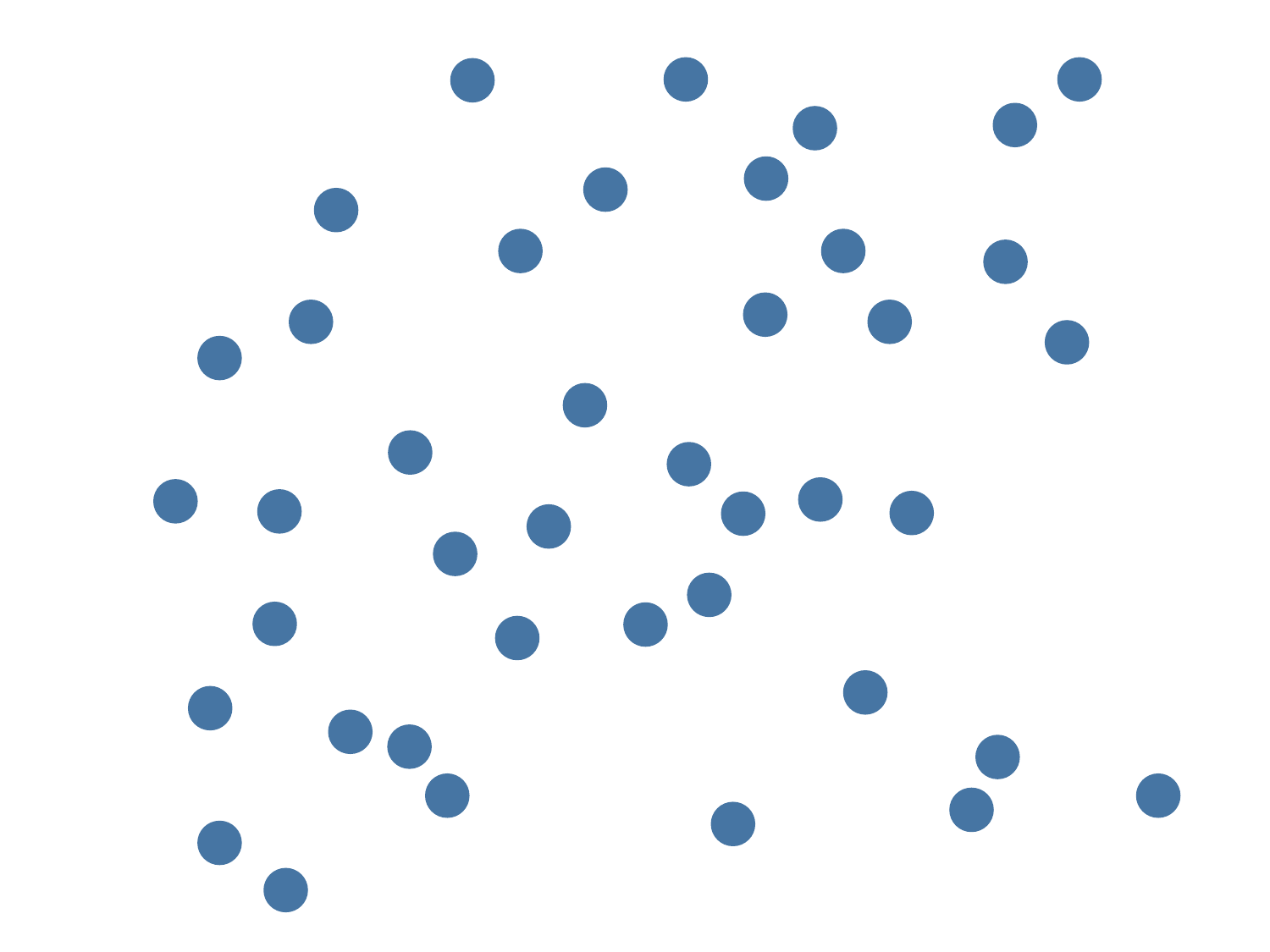}
}
\subfloat[Triangles form]{
\label{fig:image_2}
\includegraphics[width=0.25\textwidth]{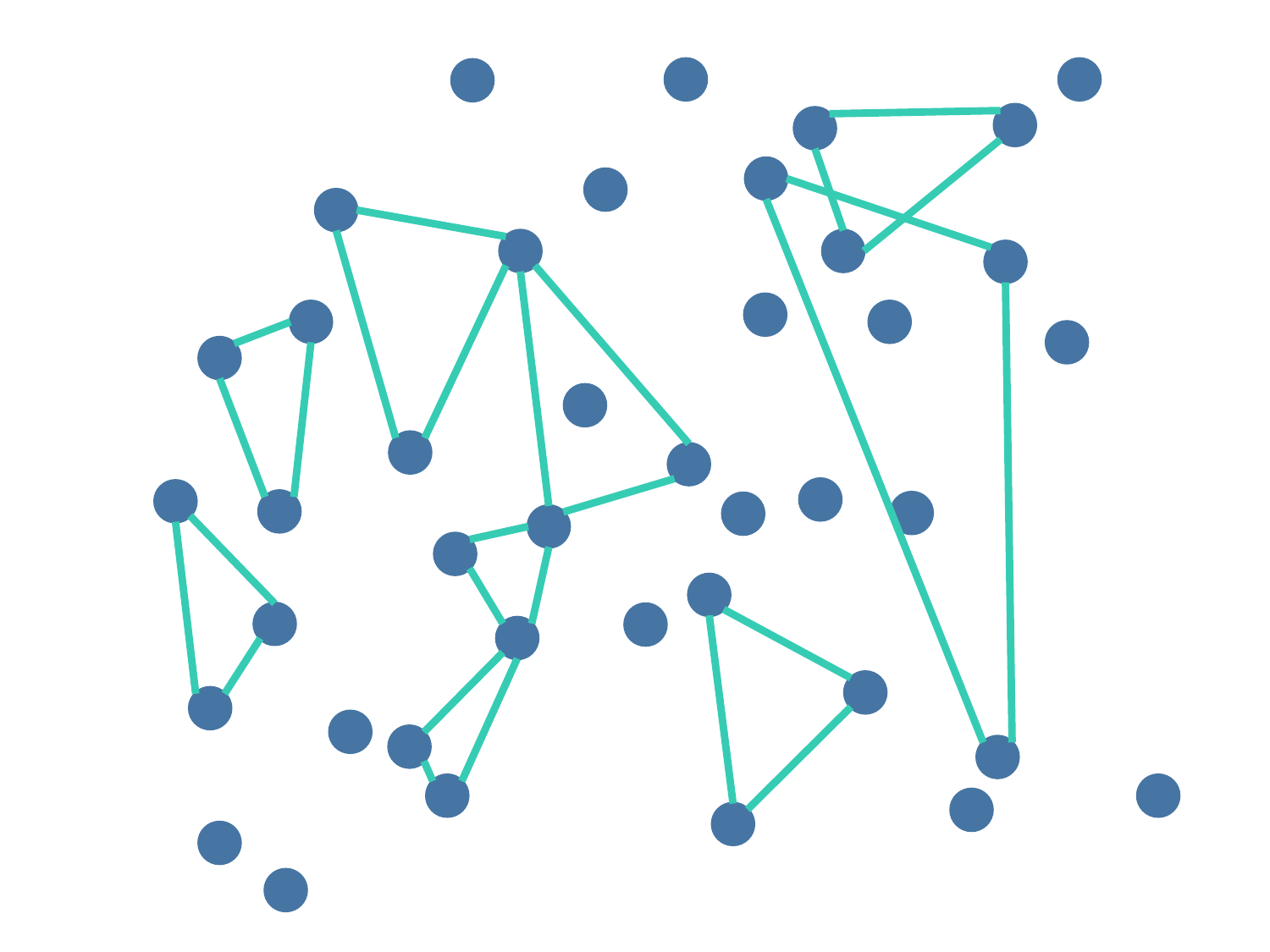}
}
\subfloat[Links form]{
\label{fig:image_1b}
\includegraphics[width=0.25\textwidth]{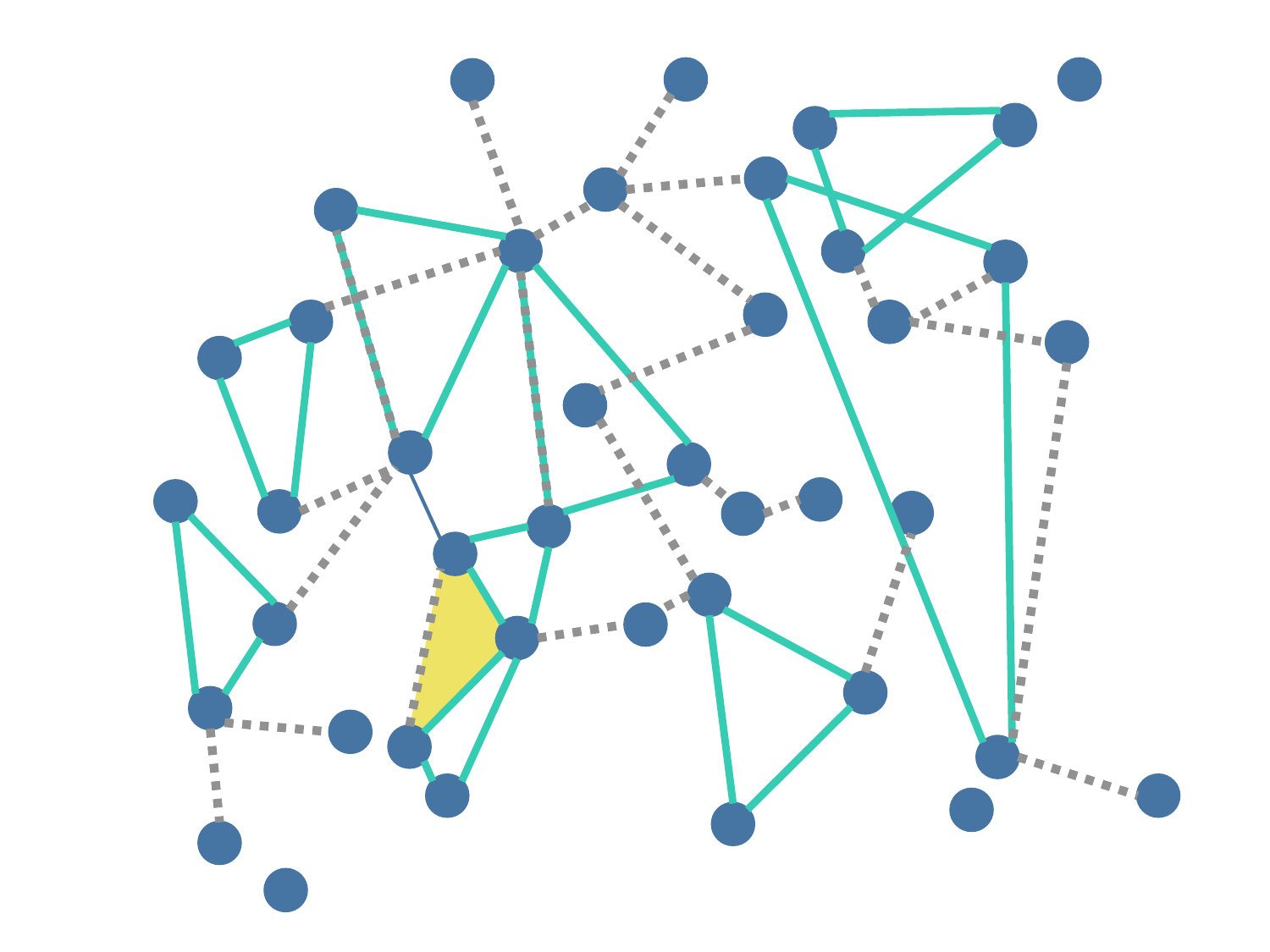}
}
\subfloat[Resulting network]{
\label{fig:image_1c}
\includegraphics[width=0.25\textwidth]{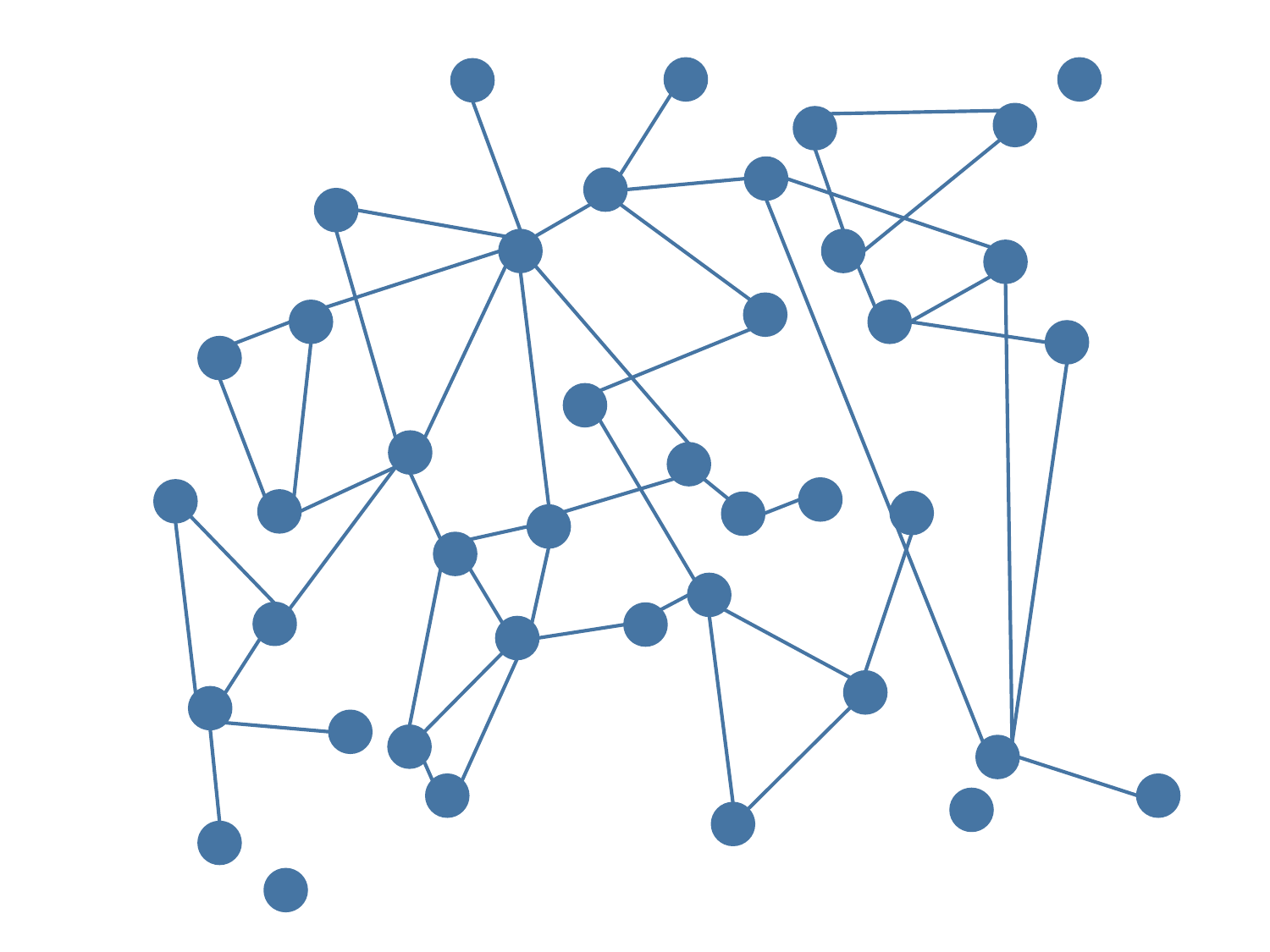}
}
\end{center}
\caption{\label{fig:multi_panel} A network is formed on 41 nodes and is shown in panel D.
The process can be thought of as first forming triangles as in (B), and links  as in (C). Note that two links form on triangles, and a third link incidentally generates an extra triangle.  In this network we would count
$\widetilde{S}^n_T(g)=10$,
and $\widetilde{S}^n_L(g)=22$ from (D), while the true process generated 9 triangles and 23 links directly.
The estimated parameters are $\widehat{\beta}^{DC}_{n,T}=\frac{10}{10660}$,
and $\widehat{\beta}^{DC}_{n,L}=\frac{22}{820}$, while the true frequencies were $\frac{9}{10660}$
and $\frac{25}{820}$. }
\end{figure}

To understand when the direct-count estimator is appropriate,  we need to characterize the rate of incidental subgraph formation.   To do this we track how many
ways a subnetwork  $g' \in G_\ell^n$
could be incidentally generated.

We first provide a precise specification of what it means to be incidentally generated.
We say that a subgraph $g' \in G_\ell^n$ for some $\ell$ can be \emph{incidentally generated} by the subgraphs $\{g^j\}_{j\in J}$, indexed by $J$,
if $g' \subset \cup_{j\in J} g^j$.
For instance a triangle $g'=123$ can be incidentally generated by links $g^1=12$, $g^2=23$, and triangle $g^3=134$; or by link 12 and triangles 234 and 135, etc.
Some of these incidental generations are equivalent to each other (e.g., involve two links and one triangle) and so it is useful to define equivalence classes of generators.

Consider any potential subgraph $g' \in G^n_\ell$  that can be incidentally generated by a set of subnetworks $\{g^j\}_{j\in J}$ with associated indices $\ell_j$ and also by another set
$\{g^{j'}\}_{j'\in J'}$.   We say that $\{g^j\}_{j\in J}$ and $\{g^{j'}\}_{j'\in J'}$ are {\sl equivalent generators} of $g'$ if
there exists a bijection $\pi$ from $J$ to $J'$ such that $\ell_j=\ell_{\pi(j)}$ and $|g_j\cap g'|= |g_{\pi(j)} \cap g'|$.
So the equivalent generating sets have the same configurations in terms of numbers and types of subgraphs, and in terms of how many nodes each of those subgraphs intersects the given network.
For instance a triangle 123 is not only incidentally generated by links 12, 23, and triangle 134; but also by an equivalent generator of links 12, 23, and triangle 135, or links 23, 13; and triangle 128, and so forth.

Given this equivalence relation, we simplify by ignoring the specific labels of subgraphs and defining {\sl generating classes} for any type of subgraph $G_\ell$.  We
just track the number and type of subgraphs needed, as well as
how many nodes each subgraph has intersecting with the given incidentally generated subgraph.

In particular,
each generating class $\mathcal{C}$ of some $G^n_\ell$ is a list  $\mathcal{C} = (\ell_1,c_1,\ldots, \ell_{C},c_{C})$
consisting of a list of types of subgraphs used for the incidental generation and how many nodes each has intersecting with the given
incidentally generated subgraph.
Thus, $\mathcal{C} = (\ell_1,c_1,\ldots, \ell_{C},c_{C})$
is such that  there $\exists g'\in G^n_\ell $ generated by some $  \{g^j\}_{j\in J}$ for which $|J|=C$
and for each $j$:
$g^j\in G^n_{\ell_j}$ and $c_j= |g^j\cap g'|$.
For example, if we consider a triangle, then it can be incidentally generated by two other triangles and a link; and we represent that as
	$(T,2;T,2;L,2)$, where this indicates that two triangles were
	involved and each intersected the subgraph in question in two nodes and then $L,2$ indicates that a link was involved intersecting the subgraph in two nodes.

We order generating classes so that the indices are ordered: $\ell_j\leq \ell_{j+1}$, and lexicographically $c_j\leq c_{j+1}$ whenever
$\ell_j= \ell_{j+1}$.  This ensures that we avoid counting the same class twice.\footnote{
	However, a generating class of two links and a triangle is a different generating class than one link and two triangles - this numbering just avoids the double counting of two links and a triangle separately from a triangle and two links. }

We only need to work with a small set of generating classes, so we restrict attention to the following:
\begin{itemize}
        \item generating classes that only involve smaller subgraphs:   $\ell_j\geq \ell$ for all $j\in J$,
        and
	\item generating classes that are minimal: in the above $J$
	there cannot be $j'$ such that $g'\subset \cup_{j\in J, j\neq j'} g^j$.
\end{itemize}

The first condition
states that we can ignore many generating classes because of our counting convention:
when counting any given subgraph type, we only have to worry about incidental generation by the remaining (weakly smaller) subgraphs.
We do it this way, since we first count the largest subgraphs, and having accounted for them, we worry about the remaining subgraphs, and so forth.  The second condition restricts attention to the smallest generators.  For instance a triangle could be generated by two links and two triangles.   However, in that case either one of the links or one of the triangles can be dropped.  The minimal classes for the triangle only involve three subgraphs:   three links,  two links and one triangle,  one link and two triangles, or three triangles. Under the first condition, there are no generating classes for links to worry about, since they cannot be incidentally generated by themselves and we only count them after removing all triangles.

The following conditions ensure that the direct estimation parameters are arbitrarily accurate for large enough networks.

First, for each $\ell$ let
\begin{equation}
	\label{ln1}
	h_\ell > m_\ell -2.
\end{equation}
This condition ensures that the overall degree of any node  grows more slowly than the size of the graph.  This comes from the fact that any given node can be a part of $\binom{n}{m_\ell -1}$ subgraphs of type $\ell$, each of which forms with probability
$\frac{b_\ell}{n^{h_\ell}} $.  Expecting over these gives an upper bound on the number of links (up to a proportional constant) that a given node is part of from graphs of type $\ell$, and the condition is that this be smaller than $n$.
The average degree can still grow with $n$, but sublinearly.
In particular, this condition ensures that the chance that any given link is part of multiple subgraphs is vanishing.

Next, for each $\ell$
consider any (minimal)\footnote{If the condition is satisfied by minimal classes, it is automatically satisfied
	by larger classes.} generating class with index $J$ of subgraphs no larger than $\ell$.
Let
\begin{equation}
	\label{ln2}
	h_\ell< \sum_{j\in J}  h_{{\ell_{j}}}+ c_{{\ell_{j}}} - m_{{\ell_{j}}}
\end{equation}
and
\begin{equation}
	\label{ln3}
	h_{{\ell_{j'}}} + m_\ell - m_{{\ell_{j'}}}< \sum_{j\in J}  h_{{\ell_{j}}}+ c_{{\ell_{j}}} - m_{{\ell_{j}}}
\end{equation}
for each $j'\in J$.

(\ref{ln2}) is the requirement that a given subgraph is more likely to form directly than indirectly. $h_\ell$ governs the direct formation, and $\sum_{j\in J}  h_{{\ell_{j}}}+ c_{{\ell_{j}}} - m_{{\ell_{j}}}$ governs the rate of incidental generation, and so the exponent on the direction formation must be less than the sum of the exponents of the graphs needed for incidental generation, subtracting off how many variations on each of these there are (captured by the $-(m_{\ell_{j}}-c_{\ell_{j}})$ coming from how many nodes are free to be chosen for each incidentally generating subgraph).
(\ref{ln3}) is the requirement that a given subgraph of some type $\ell_{j'}$  that is part of a generating class of some $\ell$ appear at a fast enough rate to ensure that it is not always becoming part of incidentally generated $\ell$s, but can be distinguished.  This is a similar calculation of rates.

Under these conditions, we prove identification in addition to consistency and
asymptotic normality on a single large network.

Define the variance-covariance matrix
\[
V_{n}={\rm diag}\left\{ n^{2h_\ell}  \frac{\beta^n_{0,\ell}(1-\beta^n_{0,\ell})}{ \kappa_\ell \binom{n}{m_\ell}} \right\}.
\]

We say that a sequence of SUGMs with $k$
types of subgraphs is complete and growing if for each $\ell\in \{1,\ldots,k\}$, $G_\ell^n$ includes all subgraphs that are relabellings of each other and $G_\ell^n\subset G_\ell^{n+1}$.  So, this implies that $\ell$ has the same meaning (e.g., triangles or $k-stars$) across $n$.

\begin{theorem} \label{thm:Largenetwork}
	Consider a growing and complete sequence of SUGMs of $k$ distinct types of subgraphs.  If they have associated true parameters
	$0<b_{0,\ell}$
	such that  $\beta_{0,\ell}^n = \frac{b_{0,\ell}}{n^{h_\ell}}$ and  
	 (\ref{ln1})-(\ref{ln3}) hold for each $\ell$ and associated (minimal) generating classes,
	then
	$|\bnDC - b_0| \cvgto 0$ and
	$V^{-1/2}_n  \left(\bnDC-b_0\right)\wkto\mathcal{N}\left(0,I \right)$.
\end{theorem}

Although the conditions may appear hard to understand, they are actually fairly straightforward, and it is easy
to see sufficient conditions that ensure them.

For example,  suppose that each $h_\ell = m_\ell - x$ for some same $x\in (0,2)  $, so that each node has the same order probability of being a part of different sorts of subgraphs.
This is the natural case, as otherwise some subgraphs become infinitely more likely than others.

In that case, all three conditions are automatically satisfied whenever the subgraphs are all cyclic (cliques, or other subgraphs in which all nodes are parts of cycles).
If some of the subgraphs are not cyclic (e.g., lines or stars),  then all three conditions hold if $x\in (0,1)$.

\begin{corollary} \label{cor:Largenetwork}
	Consider a growing and complete sequence of SUGMs of $k$.  If they have associated true parameters
	$0<b_{0,\ell},n^{h_\ell}$
	such that  $\beta_{0,\ell}^n = \frac{b_{0,\ell}}{n^{h_\ell}}$, and
such that  $m_\ell -h_\ell= x$ for each $\ell$ and some $x\in (0,2)  $  and either all subgraphs are cyclic
	or else $x<1$,
		then
	$|\bnDC - b_0| \cvgto 0$ and
	$V^{-1/2}_n  \left(\bnDC-b_0\right)\wkto\mathcal{N}\left(0,I \right)$.
\end{corollary}

In both results, although we state them in terms of $b$s, it is also true that
the ratio of $\betanDCell$ to  $\beta_{0.\ell}$, tends to one. Furthermore, as we show
in the proof, if we normalize
the difference between the estimated probability and the truth by the standard deviation,
this is asymptotically normally distributed.  This is an equivalent
representation of the above result, but is helpful to note as it
does not require knowledge of $h_\ell$s but rather just that
they satisfy the relevant bounds, which is
true of many human networks.

\subsubsection{Minimum Distance Estimator for Non-Negligible Incidental Generation}\label{sec:incidental-estimation}

Theorem \ref{thm:Largenetwork} holds for parameter values for which incidental generation becomes small as a function of the overall
counts of the subgraphs, and works for arbitrary subgraph varieties. However, we may want an estimator that works when incidental generation is not ignorable, even in the limit.

For SUGMs with specific subgraph types, we can explicitly calculate all the incidental rates and account for them,
and develop an estimator that is more
accurate in small samples where there can be
nontrivial incidental generation and also works asymptotically even when there is incidental generation.
In particular, in this section we consider a links and triangles SUGM based
and provide an estimator that
fully accounts for
incidental generation (with extensive details in  Appendix  C.
We prove identification  as well as
consistency and asymptotic normality of a minimum distance estimator.

In order to show the properties of the minimum distance estimator, we show that
the following moments converge:
\[
\frac{S_L^n(g) - \E_{\beta^n_0}[S_L(g)]}{\sigma_L^n} \wkto \mathcal{N}(0,1),
\text{ and }
\frac{S_T^n(g) - \E_{\beta^n_0}[S_T(g)]}{\sigma_T^n} \wkto \mathcal{N}(0,1),
\]
and jointly as well, where $(\sigma_L^n)^2 := \var \left( S_L^n(g) \right)$ and $(\sigma_T^n)^2 := \var \left( S_T^n(g)  \right)$.
Since
\[
S_L^n(g) = \frac{\sum_{i<j}g_{ij}}{\binom{n}{2}} \mbox{ and } S_T^n(g) = \frac{\sum_{i<j<k}g_{ij}g_{ik}g_{jk}}{\binom{n}{3}}
\]
and $g_{ij}^n$ and $g_{ik}^n$ are correlated for any $k$,  $S_L^n$ involves correlated random variables, and since
any two triples in $S^n_T$ that involve a common link are correlated, we need to prove a central limit theorem that shows that such correlation does not cause problems.

Let $S^n(g) = (S_L^n(g), S_T^n(g))'$ be the stacked vector of both shares.
It is useful to define the variance-covariance matrix of the moments
\[
V_{n}=\left(\begin{array}{cc}
\var(n^{h_L} S_L) & \cov(n^{h_L} S_L,n^{h_T} S_T)\\
\cov(n^{h_L} S_L,n^{h_T} S_T) & \var(n^{h_T} S_T)
\end{array}\right).
\]
Finally, let $R_{n}={\rm diag}\left\{ n^{h_{L}},n^{h_{T}}\right\}$. With this defined we can state our result.

Define the minimum distance estimator for a single large network by
\begin{align*}
    \betanMD:= \argmin_{\beta}(S^n(g) - \mathrm{E}_{\beta_0^n}\left[ S^n(g)\right] )'R_n^2(S^n(g) - \mathrm{E}_{\beta}\left[ S^n(g)\right]).
\end{align*}

\begin{proposition} \label{prop:LT_SUGM1-2} Consider a links and triangles SUGM with associated parameters
$\beta_{0,L}^n,\beta_{0,T}^n= \left(\frac{b_{0,L}}{n^{h_L}}, \frac{b_{0,T}}{n^{h_T}} \right)$ with $0\leq \underline{D}<b_{0,L},b_{0,T}<\overline{D}$
such that
$h_L \in (2/3,2) \mbox{ and } h_T \in [h_L + 1, 3h_L], \text{ with }h_T<3$.
Then the minimum distance estimator is consistent,
$|\bnMD - b_0| \cvgto 0$, and \footnote{The expression for $V_n$ is different when $h_T = h_L+1$,
and is given in the proof of the proposition. }
and asymptotically normal,
$V^{-1/2}_n  \left(\bnMD-b_0\right)\wkto\mathcal{N}\left(0,I \right)$.
\end{proposition}

The proof makes use of Theorem \ref{clt}, below. The proof is in Appendix  C.

Proposition \ref{prop:LT_SUGM1-2} covers a wide range of link and triangle densities, ranging from average degree on the order $n^{1/3-\delta}$ to $n^{-1+\delta}$ for any $\delta > 0$. This covers the order constant and logarithmic growth rates of average degree studied in the literature \citep{newman2001random,bollobas2001random,jackson2008social,graham2014empirical}, for instance.

In particular, Proposition \ref{prop:LT_SUGM1-2} covers situations in which the rate of incidental generation (e.g., the proportion of triangles that are generated incidentally) does not vanish asymptotically.  Not only does the estimator have better small sample properties (see the simulations below), but it also works asymptotically in cases that Theorem \ref{thm:Largenetwork} does not.

The restrictions are easily interpretable.  $h_{T}\geq h_{L}+1$ ensures that triangles are not so
numerous that almost all of the links in the network lie in triangles: that $n^{3-h_{T}}$ does not dwarf $n^{2-h_{T}}$.  $h_{T}\leq 3h_{L}$ ensures the opposite: that triangles are not always incidentally formed by links and never formed directly: $n^{3\left(1-h_{T}\right)}$
is not dwarfed by $n^{3-3h_{L}}$.
$h_{L}>2/3$ ensures that links and triangles are disentangled by
imposing a density cap. Finally, $h_L < 2, h_T  < 3$ ensure that there is information in the network---enough links and triangles are present to estimate their formation.

Again we note that although the results are stated in terms of $b$, these are
equivalent statements to saying that ratio of the estimated ($ \betanMD$)
and true  ($\beta_0^n$) frequencies tend to one.  And, that, when self-normalized by the standard
deviations, the empirical frequencies estimated are asymptotically normally distributed. Thus, the result requires no knowledge of $h_\ell$s other than that
they satisfy the relevant bounds.

\subsubsection{Discussion of incidental generation and estimators}

It is instructive to summarize the difference in assumptions and performance of $\betanDC$ and $\betanMD$.
Again, the first requires more sparsity---less incidental generation specifically---than the latter.
Relative to Theorem \ref{thm:Largenetwork}, we can see that Proposition \ref{prop:LT_SUGM1-2} covers cases where incidental generation is not ignorable. Namely, one can check that our result on the $\betanDC$ requires $h_T > 2$ (which means that the probability of a triangle is going to faster at a rate faster than $1/n^2$). But  $\betanMD$ only requires a rate faster than $h_T > 5/3$ or $1/n^{1+2/3}$. This means that triangles (and therefore links, checking the conditions) can appear at a faster rate and still be estimated under the minimum distance estimator but not through direct-counts. We also see evidence of this in our simulations, in Appendix D
, where for very sparse networks both estimators give the same result but the direct-count becomes biased as we increase density whereas the minimum distance estimator remains unbiased.


\medskip
\section{Applications\label{sec:Applications}}

SUGMs are useful for a number of purposes.
First, purely as a statistical modeling tool, SUGMs---even ones with just links and triangles---generate higher-order features of empirically observed social networks that link-based
models (even those accounting for characteristics, unobserved characteristics, geography, and latent locations) cannot.
This is important for prediction.
For example, if one wants to see which networks might form under a hypothetical policy,
a model is only useful if it can generate networks that are likely to occur
at a variety parameter values.  As we demonstrate, our model outperforms stochastic block models, models with node-level fixed effects, latent space models, and ERGMs in generating realistic distributions of networks even with considerably  fewer parameters (e.g., 4 parameter SUGMs versus over 200 (or even 400) parameters in some  alternatives).

Second, a SUGM can be used to test which incentives underlie link formation.  There are many theories (e.g., \citet*{coleman1988closure,jackson2012quilts})
predicting that triangles and other cliques play special roles in maintaining cooperation
in favor exchange.  In order to test such theories, we need a statistical
model that allows us to test whether cliques appear significantly more
often than being randomly generated by links, and whether
they appear in configurations that would be predicted by the game theory.

Third, SUGMs can be used for structural estimation. There are parsimonious
microfoundations---models of mutual consent or search---that
give rise to SUGMs. Structural parameters are useful for welfare analyses, and also aid in examining counterfactuals or policy evaluation. Such parameters are recoverable from SUGM parameters.

We provide three examples. Our first example shows that SUGMs model a myriad of network features much better than other standard models. The other two examples build models of network formation to address specific economic questions.  In both cases, the equilibrium network is a random draw from a SUGM with interpretable parameters.

\subsection{Data}

We use the \citet*{banerjee2013diffusion,banerjee2014gossip} data (https://doi.org/10.7910/DVN/U3BIHX) consisting of a variety of social and economic networks from 75 Indian villages as well
as detailed demographic background.\footnote{See \citet*{banerjee2013diffusion} for more information about the data.}
Having 75 villages allows us to show not only how the model scales with the number of nodes, but also to cover both of our asymptotic frames.

The networks have households as nodes.  There are an average of $n=220$ households per village.
We surveyed adults, asking them about a variety of their daily interactions,
as well as their demographics (caste, education, profession, religion, family size, wealth variables, voting and ration cards, self-help group participation, savings behavior, etc.). We have network data from 89.14 percent of the 16,476 households based on
interviews with 65 percent of all adults between the ages of 18 and 55.
We have data concerning twelve types of interactions: (1) whose houses he or she visits, (2) who visits his
or her house, (3) his or her relatives in the village, (4) non-relatives
who socialize with him or her, (5) who gives him or her medical help,
(6) from whom he or she borrows money, (7) to whom he or she lends
money, (8) from whom he or she borrows material goods (e.g., kerosene,
rice), (9) to whom he or she lends material goods, (10) from whom
he or she gets important advice, (11) to whom he or she gives advice,
(12) with whom he or she goes to pray (e.g., at a temple, church or
mosque).

The answers are aggregated to the household level, but one can also work with the individual-level networks to get similar results to those presented below. How a link is defined varies based on the application. We use undirected,\footnote{Some links are not reciprocated, but that is true at similar rates for
	the questions regarding relatives as compared to the other questions, and so much of the failure of reciprocation may simply be measurement error
	rather than true one-way relationships.  For our purposes here, which are purely to illustrate the ability of the models to
	match data, this distinction is inconsequential.}
unweighted networks that may allow for multiplexing. This also means that we observe 98.8\% of the potential
links between pairs.\footnote{This is a new wave of data relative to our
	original microfinance study that includes more surveys. Note that $1-(1-0.8914)^2 = 0.988$.}

For much of what follows, we work with the borrowing and lending of material goods (questions 8 and 9, with any positive answer indicating a link
being present) that we call ``favor'' links, and the exchange of advice (questions 10 and 11, with any positive answer indicating a link
being present) that we call ``info'' links.

\subsection{Example 1: Matching Features of Empirical Network Data}

A challenge for network formation models has been to capture more than one or two observed features of networks at a time. For instance, many observed social networks are sparse but clustered, which motivates developing models that reflect this \citep{watts1998collective}.
They also have a variety of differing degree distributions (\citep{barabasi1999emergence,jackson2007meeting} and exhibit high levels of homophily \citep*{mcpherson2001,currarinijp2009,currarinijp2010}, which can lead to poverty traps and inequalities \citep*{calvo2007networks,jackson2023inequality}.
There are also features such as the expansion properties of a network
that are described by maximal eigenvalue of the adjacency matrix and governs diffusion processes operating on the network (\cite{bollobas2001random}).  The depth of the max flow min cut speaks to several things such as consensus time in a social learning process \citet{golub2009homophily} as well as the sustainable degree of cooperation  \citep*{karlan2009trust}.

We show that a SUGM fits {economically-relevant} network features in the data
far better than four prominent alternatives.
Importantly, these features were not used
to fit the model. They are the size of the giant component, average path length, and various spectral
properties of the adjacency matrix (e.g., the largest eigenvalue and an eigenvalue measure of homophily).
A simple SUGM outperforms the alternative models despite the fact that the alternative models have many more dimensions such as numerous covariates, $n$ fixed effects, or even $n$ latent space variables, that should give them an advantage in fitting.

Specifically, the alternative models are (a) a standard stochastic block model that includes flexible
controls for continuous covariates that influence edge probabilities; (b) an extension of that model that includes $n$ parameters to
capture node fixed effects (e.g., \cite{graham2014empirical}); (c) a latent space
model \citep*{hoff2002latent} in which nodes have unobserved arbitrary locations in $\mathbb{R}^3$
to be estimated and the probability of linking declines in their latent positions; and (d) an exponential random graph model with links, triangles, and a rich set of covariates.

Before we proceed, we review the features of the graph structure that we examine and why they are interesting.  We look at the first eigenvalue of the adjacency matrix, which is a measure of diffusiveness of a network under a percolation process (e.g., \cite*{bollobas2010percolation,jackson2008social}).   This is intimately related to the expansiveness of the network---namely, for any subset of nodes the number of links leaving the subset relative to the number of links within the subset.
We are also interested in the second eigenvalue of the stochasticized adjacency
matrix.\footnote{The stochasticized adjacency matrix $T$ is defined as
	$T_{ij} = \frac{g_{ij}}{\sum_{k} g_{ik}}$, where either $g_{ii}=1$, or
	$g_{ik}>0$ for some $k\neq i$, as this captures the set of people to whom $i$ listens.}
This is a quantity that is key in local average learning processes and modulates
the time to consensus (\citet*{demarzo2003persuasion,golub2009homophily}), but is also closely related to homophily (\citet*{golub2009homophily}) and is labeled as such in the table below.
Additionally, we look at the fraction of nodes that belong to the giant component
of the network, as well as the number of isolates, as empirical networks are often not completely connected.
Finally, we also consider average path length (in the largest component).

We present the results for favor and info networks.
These networks are reasonably connected (with more than ninety percent of the nodes being in the giant component) and yet also typically sparse.

Our procedure is as follows.  For every village, we estimate six network formation models.

One network formation model
is a link-based model (stochastic block model)
in which the probabilities can depend on geographic distance, caste, the number of rooms
households have, number of beds, quality of electricity provision,
quality of latrines, household ownership status, and squared differences in non-binary variables. The probabilities are estimated using logistic regression and the model has 12 parameters.

The next is the model of  \cite{graham2014empirical}. This is the same formulation of the preceding model, but adds  unobserved heterogeneity in the form of node-fixed effects,
\[
\Pr(g_{ij}=1 \vert X_{ij}) = \Lambda\left(\alpha_i + \alpha_j + \gamma'X_{ij}\right),
\]
where $\Lambda(\cdot)$ is the logit link function
and $X_{ij}$ is the aforementioned vector of demographic
characteristics and polynomials therein.
This model has $n$+12 parameters per network.\footnote{Consistency of all $\alpha_i$
	in addition to $\beta$ has been proven for a dense sequence of graphs (e.g., \cite{chatterjee2010random,graham2014empirical}).}

The third model is a latent space model,
\[
\Pr(g_{ij}=1 \vert w_{ij}) = \Lambda\left(\alpha_i + \alpha_j - \eta \cdot \text{dist}(w_i,w_j) + \gamma'X_{ij}\right),
\]
where now $w_i$ are unobserved positions in $\mathbb{R}^3$.\footnote{We use $\mathbb{R}^3$ as Euclidean is commonly used in the literature, though it is not the only choice. The subject of choice of geometry is addressed in \cite*{lubold2020identifying}, which shows how to check isometric embedding conditions. In this data we find that 25\% of networks are not consistent with \emph{any} latent space from the family of simply connected, complete Riemannian manifolds of constant curvature, lending evidence to the idea that latent space models may not be a universally appropriate device to model correlation.} This has $2n+12$ parameters.

The fourth model is a links and triangles ERGM with covariates. Specifically,
\[
\Pr(g) \propto \exp(\theta_L \cdot S_L(g) + \theta_T \cdot S_T(g) + \gamma'X ).
\]

Turning to SUGMs, in contrast, we consider only low-dimensional models.
One is a the basic SUGM with links and triangles.
Pairs of household are categorized as either being ``close'' or ``far,''
where  ``close'' refers to pairs of nodes that are of the
same caste and ``far'' to those that  differ in caste.
Similarly, we categorize triangles as being ``close'' if
all nodes are of the same caste and ``far'' otherwise.  Thus, we allow for four parameters, close and far link parameters and close and far triangle parameters.
The other model is a slightly richer SUGM in which we allow some
nodes to be isolates, which adds one more parameter.\footnote{With isolates, in a first stage some nodes are randomly chosen to be isolates with a given probability.  For the subsequent formation of other subgraphs, those isolates can be considered as removed from the set of nodes and no subgraph that involves them forms in the subsequent subgraph formation.}
Neither includes any other demographic covariates nor unobserved heterogeneity. We estimate both  via a variation on the minimum distance estimator of Proposition \ref{prop:LT_SUGM1-2}, $\betanMD$, since there appears to be enough incidental generation that needs to be accounted for.\footnote{Specifically, we use a hybrid estimator of first directly (unbiasedly) estimating the  link parameters from the frequency of links among pairs of nodes that have no common neighbors.  We then fix this parameter in the minimum distance estimator to estimate the triangle parameters.  This slightly simplifies the computations.  The code appears in supplementary materials. }

To make the strongest point,  we compare these stark SUGMs that use only same/different caste variables to account for homophily,
to very rich covariate dependent (block) models that can incorporate a large set  of covariates -- including much richer demographics that are
usually available to a researcher as well as node-level fixed effects in the unobserved heterogeneity model and node-level latent locations in the latent space model.
We show that  even though we have considerably more information on the
nodes, such as geographic distance and demographic characteristics, and allow for such unobserved heterogeneities---and we do not make use of this information for the SUGMs---they
recreate networks much more accurately than a link-based model that does
takes advantage of a rich set of node characteristics.
Adding over 12 parameters to the block model to flexibly control for demographic attributes,
{\sl or even $n$+12 parameters with unobserved heterogeneity or $2n+12$ with latent locations},  does not come
close to doing as well as the simple SUGMs.
Moreover, since the specification developed here makes
use of considerably richer data than those used in the
two candidate SUGM models, it suggests that by decomposing
a network into a tapestry of random structures (triangles,
links, and even isolates), considerable value is added in modeling
higher order features of networks in a parsimonious way.

We estimate parameters village-by-village for each model and then
generate random network from each model based on the estimated parameters.
We do 100 such simulations for each village and model.
We then compare the true network characteristics
with those from the simulations for each of the various models.

\begin{table}[!h]
\centering
\caption{Network Properties}\label{tab-emprops}
\resizebox{\ifdim\width>\linewidth\linewidth\else\width\fi}{!}{
\fontsize{10}{12}\selectfont
\begin{tabular}[t]{llllllll}
\toprule
  & Truth & {\footnotesize\makecell[l]{Links/\\Triangles\\SUGM}} & {\footnotesize\makecell[l]{Links/\\Triangles/\\Isolates\\SUGM}} & {\footnotesize\makecell[l]{Covariates\\(Block\\Model)}} & {\footnotesize\makecell[l]{Covariates +\\Unobserved\\Heterogeneity\\(Latent Block\\Model)}} & {\footnotesize\makecell[l]{Latent Space\\Model\\(with\\Covariates)}} & {\footnotesize\makecell[l]{ERGM\\(Links/Triangles\\with Covariates)}}\\
\midrule
\addlinespace[0.3em]
\multicolumn{8}{l}{\textbf{Panel A: Information}}\\
\hspace{1em}Degree & 8.0960 & 8.2166 & 8.2064 & 8.8111 & 9.5860 & 13.2540 & 13.8812\\
\hspace{1em} & (0.2607) & (0.2754) & (0.2746) & (0.3126) & (0.3615) & (0.1257) & (0.1337)\\
\hspace{1em}Clustering & 0.2198 & 0.1599 & 0.1478 & 0.0506 & 0.0742 & 0.0834 & 0.1287\\
\hspace{1em} & (0.0057) & (0.0034) & (0.0032) & (0.0030) & (0.0045) & (0.0007) & (0.0010)\\
\hspace{1em}Isolates & 10.9718 & 3.3361 & 13.4597 & 0.5313 & 0.8369 & 10.7658 & 12.8454\\
\hspace{1em} & (0.8410) & (0.3853) & (0.9924) & (0.0977) & (0.1336) & (0.1448) & (0.1381)\\
\hspace{1em}\% in Giant & 0.9503 & 0.9844 & 0.9397 & 0.9977 & 0.9964 & 0.9434 & 0.9205\\
\hspace{1em} & (0.0030) & (0.0016) & (0.0033) & (0.0004) & (0.0005) & (0.0008) & (0.0013)\\
\hspace{1em}Maximal Eigenvalue & 11.9138 & 10.6260 & 11.0178 & 10.3737 & 12.5418 & 16.2470 & 18.3364\\
\hspace{1em} & (0.3741) & (0.3293) & (0.3466) & (0.3222) & (0.4325) & (0.1293) & (0.1323)\\
\hspace{1em}Homophily & 0.8865 & 0.8156 & 0.8029 & 0.6869 & 0.6795 & 0.8743 & 0.7921\\
\hspace{1em} & (0.0065) & (0.0090) & (0.0093) & (0.0104) & (0.0097) & (0.0009) & (0.0024)\\
\hspace{1em}Average Path Length & 3.0273 & 2.9406 & 2.8580 & 2.7602 & 2.6399 & 3.1017 & 3.1163\\
\hspace{1em} & (0.0485) & (0.0422) & (0.0393) & (0.0403) & (0.0374) & (0.0106) & (0.0172)\\
\addlinespace[0.3em]
\hline
\addlinespace[0.3em]
\multicolumn{8}{l}{\textbf{Panel B: Favor}}\\

\hspace{1em}Degree & 7.0579 & 7.2192 & 7.2185 & 7.7614 & 8.5145 & 13.1301 & 16.6176\\
\hspace{1em} & (0.2611) & (0.3048) & (0.3052) & (0.3232) & (0.3835) & (0.1571) & (0.1347)\\
\hspace{1em}Clustering & 0.2895 & 0.1894 & 0.1764 & 0.0467 & 0.0641 & 0.0724 & 0.1506\\
\hspace{1em} & (0.0054) & (0.0034) & (0.0032) & (0.0032) & (0.0040) & (0.0008) & (0.0008)\\
\hspace{1em}Isolates & 10.0704 & 7.2859 & 16.0939 & 1.0423 & 3.4777 & 19.3338 & 15.8796\\
\hspace{1em} & (0.7670) & (0.6645) & (1.1103) & (0.1429) & (1.7734) & (0.2763) & (0.2046)\\
\hspace{1em}\% in Giant & 0.9510 & 0.9632 & 0.9245 & 0.9955 & 0.9820 & 0.8695 & 0.9108\\
\hspace{1em} & (0.0032) & (0.0031) & (0.0041) & (0.0005) & (0.0107) & (0.0021) & (0.0012)\\
\hspace{1em}Maximal Eigenvalue & 10.0654 & 9.8417 & 10.1727 & 9.4778 & 11.3075 & 16.0454 & 21.5754\\
\hspace{1em} & (0.3337) & (0.3702) & (0.3826) & (0.3382) & (0.4236) & (0.1637) & (0.1201)\\
\hspace{1em}Homophily & 0.9412 & 0.8716 & 0.8636 & 0.7325 & 0.7189 & 0.9074 & 0.7895\\
\hspace{1em} & (0.0044) & (0.0082) & (0.0085) & (0.0107) & (0.0111) & (0.0010) & (0.0025)\\
\hspace{1em}Average Path Length & 3.5158 & 3.1591 & 3.0739 & 2.9140 & 2.7799 & 3.8148 & 2.8126\\
\hspace{1em} & (0.0659) & (0.0479) & (0.0441) & (0.0442) & (0.0443) & (0.0212) & (0.0149)\\
\bottomrule
\end{tabular}}
\end{table}

Table \ref{tab-emprops} presents the results, averaged across villages for each of the models.
We use 71 villages out of the 75 since 4 villages have only one caste group.   The ERGM is only estimated for 68 villages as it did not converge for 3 villages.
Both of the SUGMs match the features of the networks
substantially better than the conditional edge independent models (with
and without node fixed effects).
Including isolates in the SUGM
further improves the fits not only for isolates,
but also for fraction in the giant
component and the maximum eigenvalue.
This suggests that there are more isolated households in a village
for a reason beyond randomness in network formation.

An obvious thing to note is that the link-based and also latent space models do extremely poorly when it comes to
matching clustering while the SUGM does much better, and here adding unobserved dimensions to generate
unconditional link correlations (e.g., clustering) does worse than a SUGM that allows correlated link formation directly. The ERGM performs better on clustering but a the cost of generating excessive density, diffusiveness, the spectral cut (homophily), connectedness, and average path length.

Including triangles in the SUGM is
enough to deliver better matches on all   dimensions, and the difference on homophily is perhaps most interesting, since one would imagine
that the block models or even latent space models could get that right given that they include many covariates.
This tells us that triangles and correlation between links play a subtle but important role in homophily---something that is better picked up by a SUGM than an independent link model even when that model
includes rich demographics
and unobserved heterogeneity.

It is important that SUGMs do a much better job at recreating a multitude of features of
observed network structures that standard link-based models,
especially with rich demographic information, models with unobserved heterogeneity, latent space models, and ERGMs. It suggests that there is a substantial
value added of modeling the formation of triangles and isolates.
Knowing that our model is better able to capture the realistic correlation
of links within observed networks should make us more confident in trusting the results
of some other empirical applications. For example, when we look at links across social boundaries,
we can be comfortable that to a first order, thinking about a SUGM with links and triangles
across and within caste groups can do a good job of matching patterns in the data, and thus
tracing them back to model parameters.

\subsection{Example 2: Do incentives for risk sharing drive network formation?}\label{sec:ex1b}

\subsubsection{A model of mutual consent}

Consider a simple model in which individuals get utility from being
in bilateral relationships,   denoted
by $L$, as well as trilateral relationships, denoted by $T$.
The value of a partner $j$ to $i$ in a bilateral relationship
is a function of their demographics (given by vector $X_i$) is given by $u_{i}^{L}$:
\[
u^{L}\left(X_i; X_j\right)=X_{i}'{\gamma}_{L1}+ X_{j}'{\gamma}_{L2}+ {\gamma}_{L3}d_L(X_i,X_{j})-\epsilon_{ij} =: \phi_L(X_i;X_j) - \epsilon_{ij}.
\]
where $d_L(X_i,X_{j})$ is a distance or other function comparing the demographics---for instance to allow for homophily.
Similarly, the value of a triangle  of relationships $jk$ to $i$ is
given by $u_{i}^{T}$:
\[
u^{T}\left(X_i; X_j,X_k\right)= X_{i}'{\gamma}_{T1}+ f(X_{j},X_{k})'{\gamma}_{T2}  + {\gamma}_{T3}d_T(X_i;X_{j},X_k)-\varepsilon_{ijk} =: \phi_T(X_i; X_j,X_k) - \varepsilon_ijk,
\]
where $f(\cdot,\cdot)$ is a function that is symmetric in arguments, and $d_T(X_i;\cdot,\cdot)$ is a function that is symmetric in the last two arguments.
The value of the relationships depend on the characteristics of the people involved, as well as
some idiosyncratic values to the relationships, $-\epsilon_{ij}$ and $-\varepsilon_{ijk}$, which may capture personalities, compatibilities, etc.,  distributed according to some distributions $F_L$ and $F_T$ respectively.

Forming relationships requires mutual consent (e.g., the pairwise stability of \cite{jackson1996strategic}), so the net utility must be positive to all agents.
The probability that a subgraph $ij$ forms
is
\[
 {\beta}_{L}\left(X_{ij}, {\gamma}_{L}\right)=F_L\left(
 \phi_L(X_i;X_j)
 \right) \times
F_L\left(
\phi_L(X_j;X_i)
\right)
\]
and similarly the probability that subgraph $ijk$ forms  is
\begin{align*}
 {\beta}_{T}\left(X_{ijk}, {\gamma}_{T}\right)&=F_T\left(
 \phi_T(X_i;X_j,X_k)
 \right)  \times F_T\left(
 \phi_T(X_j;X_i,X_k)
 \right) \times F_T\left(
 \phi_T(X_k;X_i,X_j)
 \right).
\end{align*}
The products capture that a link
requires two consents and a triangle requires three.

By estimating the probabilities of subgraphs forming
(${\beta}_{T}\left(\cdot\right)$ and ${\beta}_{L}\left(\cdot\right)$), under suitable assumptions described below, one can
recover the marginal effects of changes in covariates on preferences
for being in various configurations (${\gamma}_{T}$ and ${\gamma}_{L}$). Since we have finite support for covariates, we label the subgraph formation probabilities  $\beta_{T,X_{T}}$
and $\beta_{L,X_L}$ for pair and node covariate combination $X_T$ and $X_L$
respectively.

\subsubsection{Incentives for Risk-Sharing}

\citet*{jackson2012quilts} show that whether or not a link is supported
can play an key role in maintaining informal favor exchange when it would not be self-sustaining without social pressure. It
characterizes renegotiation-proof and robust pairwise stable networks
and shows that (in the homogenous parameter case) all networks that incentivize exchange
are quilts ({a union of cliques with no cycle involving more than the minimal clique-size number of nodes}), and in the inhomogenous parameter case every link must
be supported ({if $i,j$ are linked then there exists $k$ such that $g_{ik}=g_{jk}=1$}).

Consider a variation on this model wherein
now there are multiple link types: favors and information. We
can use this to study the question raised by \citet*{jackson2012quilts}.
To make this simple ignore covariates, so all nodes are
identical. Preferences are described by a random utility framework \citep{mcfadden1973conditional}, with the value of a link between $i$ and $j$
to $i$ given by
\[
u_{i}^{L,favor}\left(i\right)= {\gamma}_{L,favor}-\epsilon_{ij,favor},\ u_{i}^{L,info}\left(i\right)= {\gamma}_{L,info}-\epsilon_{ij,info},
\]
and the value of a triangle given by
\[
u_{i}^{T,favor}\left(jk\right)= {\gamma}_{T,favor}-\epsilon_{ijk,favor},\ u_{i}^{T,info}\left(jk\right)={\gamma}_{T,info}-\epsilon_{ijk,info}.
\]

In this case, due to mutual consent, $\beta_{L,favor} = F(\gamma_{L,favor})^2$ and $\beta_{T,favor} = F(\gamma_{T,favor})^3$. It is analogous for information.
By the arguments of  \citet*{jackson2012quilts}, we can test the hypothesis that fraction of links that are supported
is higher in favor exchange than in information links, which would be consistent with exchanging of favors needing to be incentivized while sharing of information not needing network incentives. In the
language of this model the hypothesis is expressed in terms of parameters as follows:\footnote{It is without loss of generality to take $F(\gamma) = \gamma$ which is just a bijection and is convenient to work with.}

\begin{lemma} \label{lem:favor-info}Under the above assumptions,  
	 \[
  \frac{\gamma_{T,favor}}{\gamma_{L,favor}}>\frac{\gamma_{T,info}}{\gamma_{L,info}} \text{  corresponds to    } \frac{\beta_{T,favor}/\beta_{L,favor}^{3/2}}{\beta_{T,info}/\beta_{L,info}^{3/2}}>1.
  \]
\end{lemma}

 The (joint) hypothesis that we are testing is that exchanging material goods is more costly and/or happens less frequently for agents, and so requires more incentives and
supporting enforcement than exchanging information which is less costly and/or more frequent.

Given that triangles can be incidentally generated, one cannot test this simply by examining the ratio of supported links to unsupported ones. If  ${\gamma_{L,info}}$ was very high, then it could be that there are many incidentally generated information triangles, and few unsupported links, and so we need to estimate the underlying parameters using our techniques to account for incidental generation.
To keep the illustration in this first example clear, we abstract from covariates.  We illustrate the incorporation of covariates in the examples below.

\begin{table}[!h]
\centering
\caption{\small{Parameter estimates by network type\label{tab:infofavors}}}
\centering
\fontsize{10}{12}\selectfont
{\begin{threeparttable}
\begin{tabular}[t]{lll}
\toprule
 & $\betaRMDL$  & $\betaRMDT$\\
\midrule
Information & 0.0119 & 0.0001\\
 & (0.0191) & (0.0001)\\ \\
Favor & 0.0109 & 0.0002\\
 & (0.0307) & (0.0002)\\
\bottomrule
\end{tabular}
			\begin{tablenotes}
 				\item {\footnotesize Notes: Standard errors computed
 				using the results of Proposition \ref{prop:many_networks_LT}.}
 			\end{tablenotes}
\end{threeparttable}}
\end{table}

First, we estimate the four parameters in question under the many independent network ($n$ fixed, $R\rightarrow \infty$) framework (Proposition \ref{prop:many_networks_LT}).
Table \ref{tab:infofavors} presents the parameter estimates and standard errors.
Although the point estimates are in line with the theory,
the standard errors estimates are large and we cannot reject the null hypothesis
that there
is no difference in the support of favor relationships
compared to information relationships.\footnote{Specifically,
	the $p$-value is computed for a test of the null
	hypothesis
	$\frac{\beta_{T,favor}}{\beta_{T,info}}
	= \frac{\beta_{L,favor}^{3/2}}{\beta_{L,info}^{3/2}}$, where the
	parameters are held to be common across all villages in the
	sample.
If instead of using the conservative standard errors from Proposition \ref{prop:many_networks_LT}, we bootstrap them then the hypothesis is rejected at the 1 percent level.
}

We cannot conclude that the data are consistent with the
theory that incentives for favor exchange matters in network
formation in these data, but in part because the parameters actually vary nontrivially across villages.  This leads to high standard errors and also suggests that the more appropriate approach is to examine villages separately.

Thus, we push this further by
estimating the parameters separately for each
village $v$, with the large single network ($n \rightarrow \infty$, $R = 1$) paradigm for each village. This allows for heterogeneity in the parameters across villages by assuming they are drawn from entirely different distributions. Again, a we use the same variation on the minimum distance estimator of Proposition \ref{prop:LT_SUGM1-2} used in Example 1.

We see the results in Figure \ref{fig:FavorInfo}, with
standard errors omitted for visual clarity.
We see that for almost all of villages, the
favor over info ratios are higher for triangles compared to links (more than would occur at random at the 1 percent level if the two had the same distribution).

\begin{figure}[!h]
    \includegraphics[trim = 0.5in 2.5in 0.5in 2.25in, clip,scale = 0.4]{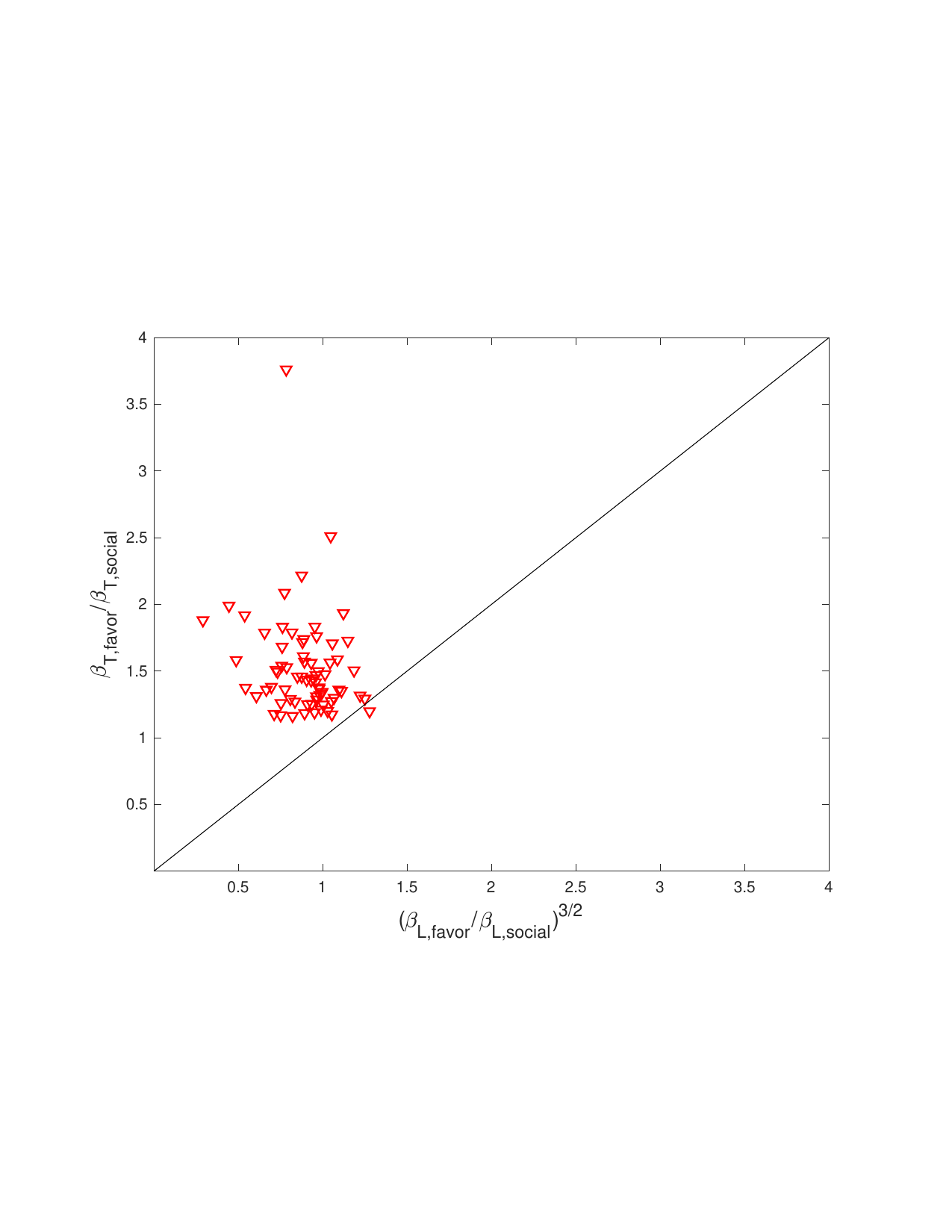}
    \caption{\label{fig:FavorInfo} Plots of estimates of $\frac{\beta_{T,favor}}{\beta_{T,info}}$ against $\frac{\beta_{L,favor}^{3/2}}{\beta_{L,info}^{3/2}}$ by village.}
\end{figure}

\subsection{Example 3: Links across Social Boundaries}\label{sec:caste-cont}

Our next example shows how a SUGM can be used to investigate
whether there are norms governing link-formation across different
social groups. Identities  can lead to strong
social norms---prescriptions and proscriptions---concerning
interactions across groups.  For instance, in much of India there are strong forces that influence if and when individuals can form relationships across castes, particularly among ``upper caste'' Hindus and the ``lower caste'' communities, comprised of Dalits (Scheduled Castes, SC) and scheduled tribes (ST). The SC and ST communities are those defined by the Indian government as being disadvantaged.  This is a fundamental  distinction over which the strongest cultural forces are likely to focus.  Additional norms are at work with finer caste or subcaste distinctions, but those norms are more varied depending on the particular castes in question while this provides a clear  barrier \citep{munshi2006traditional}.

Among many, one natural question concerns the norms around forming public versus private cross-caste group relationships. Namely, are members of upper and lower caste more likely to form cross-group relationships  when those links are unsupported (without any friends in common) compared to when those links are supported with at least one friend in common (and thus have a witness to the relationship)?

To answer this we need a model that accounts for link dependencies; cliques of three or more may exhibit greater adherence to a norm prohibiting certain inter-caste relationships, while the norm may be circumvented in isolated bilateral relationships. In particular, we can get at this hypothesis by testing whether the relative frequency of triangles compared to links is higher when the relationships are  within caste than across caste.

This example is instructive because it is more subtle than the previous example   and it demonstrates that a SUGM can be used for a hypothesis test even when preference parameters are not identifiable without additional restrictive assumptions.  Consider a process in which individuals may meet in pairs or triples and then decide whether to form a given link or triangle.
The link is formed if and only if both individuals prefer to form the link, and a triangle is formed if and only if all three individuals prefer to form it. This minimally complicates an independent-link model enough to include link interdependencies.

Individuals' probabilities to have opportunities to form links or triads can depend  on the
composition of castes of those involved. So let  $\pi_L(diff), \pi_L(same)$ denote the probabilities that a given link has an {\sl opportunity} to form (i.e., the pair meets and can choose to form the relationship) that depend on the pair of individuals being of different castes or of the same caste, respectively. Analogously define  $\pi_T(diff), \pi_T(same)$. Notice these are unlikely to be observed by the researcher.

As noted above, individual $i$'s utility of having a relationship with $j$ can by influenced by whether they share caste ($x_{ij}$ a dummy variable for same caste) and is given by
\[
u_i^L(j) = \alpha_{0,L} + \gamma_{0,L} x_{ij} - \epsilon_{L,ij}
\]
and similarly for a triad,
\[
u_i^T(jk) = \alpha_{0,L} + \gamma_{0,T} x_{ijk} -  \epsilon_{T,i,jk},
\]
where $x_{ijk}$ is a dummy for whether all three individuals are members of the same caste.\footnote{This is a simplified model for
	illustration, but one can clearly consider preferences conditional on any string of covariates.  This extends a model such
	as that of \cite*{currarinijp2009,currarinijp2010} to allow for additional link dependencies. We could also be interested in higher order relationships.}  The probability of an individual consenting to a subgraph of type $z \in \{L,T\}$   among the $m_z$ nodes is
\[
p_{z,same} = F(\alpha_{0,z} + \gamma_{0,z}) \text{ and } p_{z,diff} = F(\alpha_{0,z} ).
\]
The hypothesis that we explore is that $$  \frac{p_{T,diff}}{p_{T,same}} < \frac{p_{L,diff}}{p_{L,same}}$$ so that people are more reluctant to involve themselves in cross-caste relationships when those are ``public'' in the sense that other individuals observe those relationships.

The researcher does not observe either the meeting probabilities nor the probabilities within the mutual consent process. Rather, the researcher observes the compositions $\beta_\ell$ for $\ell \in \{L,T\} \times \{same,diff\}$ which are precisely SUGM parameters:
\begin{enumerate}
	\item $\beta_{L,same}  = p_{L,same}^2 \pi_L(same)$  and  $\beta_{L,diff} = p_{L,diff}^2 \pi_L(diff)$, and
\item $\beta_{T,same} = p_{T,same}^3 \pi_T(same)$ and $\beta_{T,diff} = p_{T,diff}^3 \pi_T(diff)$.
\end{enumerate}

There are two challenges. Recall the difference in the exponents reflects that it is more difficult to
get a triangle to form than a link.  Hence, to perform
a proper  test, we have to adjust for the exponents as otherwise
we would just uncover a natural bias due to the exponent that would end up favoring cross-caste links. Further, identifying a preference bias is confounded by the meeting bias.  Thus, we first model the meeting process $\pi_{z}(x)$ more explicitly and show that we still have identification as the meeting bias makes triangles relatively more likely to be cross-caste than links.

Consider a meeting process where people spend a fraction $f$ of their time mixing in the community that is predominantly of their own types and a fraction $1-f$ of their time mixing in the other caste's community. Then at any given snapshot in time, a community would have $f$ of its own types present and $1-f$ of the other type present.\footnote{Variations on this sort of biased meeting process appear in \citet*{currarinijp2009,currarinijp2010,bramoulle-etal}.}
This  generates a conservative test  in the sense that if we find cross-caste links relatively more likely, that is evidence for a preference bias.

\begin{lemma} \label{lem:geomodel} A sufficient condition for
	$\frac{p_{T,diff}}{p_{T,same}} < \frac{p_{L,diff}}{p_{L,same}}$ is that
	$\frac{\beta_{T,diff} }{\beta_{T,same}}<\left(\frac{\beta_{L,diff} }{\beta_{L,same} }\right)^{3/2}.$
\end{lemma}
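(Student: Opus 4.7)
The plan is to substitute the explicit decomposition of $p_L$ and $p_T$ into preference and meeting probabilities, rearrange algebraically to isolate the preference ratios, and then verify that the residual meeting-bias factor has the right sign using the geographic model. The exponent $3/2$ in the hypothesis is precisely what is needed to match the consent counts (two for a link, three for a triangle), so once meeting bias is handled the implication follows directly.

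First I would substitute $p_L(same) = P_L(same)^{2}\pi_L(same)$, $p_L(diff) = P_L(diff)^{2}\pi_L(diff)$, $p_T(same) = P_T(same)^{3}\pi_T(same)$, and $p_T(diff) = P_T(diff)^{3}\pi_T(diff)$ into the hypothesis. After simplification this becomes
\[
\left(\frac{P_T(diff)}{P_T(same)}\right)^{3}\frac{\pi_T(diff)}{\pi_T(same)} \;<\; \left(\frac{P_L(diff)}{P_L(same)}\right)^{3}\left(\frac{\pi_L(diff)}{\pi_L(same)}\right)^{3/2},
\]
which rearranges to
\[
\left(\frac{P_T(diff)}{P_T(same)}\right)^{3} \;<\; \left(\frac{P_L(diff)}{P_L(same)}\right)^{3}\cdot M, \qquad M := \frac{(\pi_L(diff)/\pi_L(same))^{3/2}}{\pi_T(diff)/\pi_T(same)}.
\]
Taking cube roots, the conclusion $P_T(diff)/P_T(same) < P_L(diff)/P_L(same)$ follows as long as $M \leq 1$.

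Next I would verify $M \leq 1$ in the geographic model. Writing $f$ for the fraction of time each individual spends in her own community, a direct computation (summing the probability that the relevant pair or triple is co-located in either of the two communities) gives $\pi_L(same) = f^{2} + (1-f)^{2}$, $\pi_L(diff) = 2f(1-f)$, $\pi_T(same) = f^{3} + (1-f)^{3}$, and $\pi_T(diff) = f(1-f)[f + (1-f)] = f(1-f)$. Substituting these expressions reduces $M \leq 1$ to a polynomial inequality in $f$ that can be verified directly in the relevant regime.

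The main obstacle is this final polynomial verification: both sides of the inequality coincide at $f = 1/2$ (the no-bias benchmark), and one must show that requiring a third person to co-locate compounds the pairwise same-caste bias by at least the $3/2$-power, so that $\pi_T(diff)/\pi_T(same) \geq (\pi_L(diff)/\pi_L(same))^{3/2}$ throughout the segregation regime of interest. Once that bound is in hand, the algebraic rearrangement in the first step produces the desired preference inequality immediately.
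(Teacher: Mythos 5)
Your overall strategy is the same as the paper's: substitute $p_L = P_L^2\pi_L$ and $p_T = P_T^3\pi_T$, rearrange so that the conclusion reduces to a pure meeting-bias inequality, namely $\bigl(\pi_L(diff)/\pi_L(same)\bigr)^{3/2} \le \pi_T(diff)/\pi_T(same)$, and then verify that inequality in the geographic model. The gap is in the verification, and it is caused by a concrete computational error: you write $\pi_T(diff) = f(1-f)[f+(1-f)] = f(1-f)$, but the probability that a randomly drawn triple is of mixed type is $1-\bigl(f^3+(1-f)^3\bigr) = 3f^2(1-f)+3f(1-f)^2 = 3f(1-f)$; you have dropped the factor of $3$ that counts the position of the odd-caste member. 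This factor is not a harmless normalization here, because your target inequality compares link and triangle meeting ratios raised to different powers. Writing $t = f(1-f)\in(0,1/4]$, your version of $M\le 1$ reads $\left(\tfrac{2t}{1-2t}\right)^{3/2} \le \tfrac{t}{1-3t}$, which is \emph{false} for a range of $f$: at $t=0.2$ (i.e.\ $f\approx 0.28$) the left side is $(2/3)^{3/2}\approx 0.544$ while the right side is $0.5$. So the "polynomial verification" you defer as the main obstacle cannot be completed as you have set it up. Relatedly, your claim that the two sides coincide at $f=1/2$ is a symptom of the same error: with the correct $\pi_T(diff)=3f(1-f)$ the condition at $f=1/2$ is $1\le 3$, with strict slack.

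With the corrected value the argument does go through, and the paper closes it with a much lighter computation than a direct polynomial check: it observes that
\[
\frac{\pi_T(same)}{\pi_T(diff)} = \frac{1-3t}{3t} < \frac{1-2t}{2t} = \frac{\pi_L(same)}{\pi_L(diff)},
\]
which reduces to $2<3$, and combines this with $\pi_L(same)/\pi_L(diff)\ge 1$ to get $\bigl(\pi_T(same)/\pi_T(diff)\bigr)^{1/3} \le \bigl(\pi_L(same)/\pi_L(diff)\bigr)^{1/2}$, which is exactly the statement $M\le 1$. If you fix $\pi_T(diff)$ and either run that two-line comparison or verify $\left(\tfrac{2t}{1-2t}\right)^{3/2}\le\tfrac{3t}{1-3t}$ directly on $(0,1/4]$, your proof is complete.
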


Turning to the data, we link two households if members of either engaged in favor exchange with each other: i.e.,  they borrowed or lent goods such as kerosene or rice in times of need.

\begin{table}[!h]
\centering
\caption{Parameter estimates by network type\label{tab:casteboundaries}}
\centering
\fontsize{10}{12}\selectfont
\begin{tabular}[t]{lllll}
\toprule
 & $\hat{\beta}_{R,L,same}^{\mathrm{MD}}$  & $\hat{\beta}_{R,T,same}^{\mathrm{MD}}$ & $\hat{\beta}_{R,L,diff}^{\mathrm{MD}}$  & $\hat{\beta}_{R,T,diff}^{\mathrm{MD}}$\\
\midrule
Information & 0.016891 & 0.000309 & 0.006535 & 0.000002\\
& (0.020001) & (0.000156) & (0.008739) & (0.000048)\\ \\
Favor & 0.012652 & 0.000302 & 0.004281 & 0.000003\\
   & (0.024433) & (0.000194) & (0.006796) & (0.000038)\\
\bottomrule
\end{tabular}
\end{table}

Table \ref{tab:casteboundaries} presents the parameter estimates using the estimator from Proposition \ref{prop:many_networks_LT}, for the estimation in which we assume that all 75 networks are independent draws from the same distribution.
Similar to the example above, the large standard errors on the triangle parameters lead us to fail to   reject the null
hypothesis that $\frac{p_T(diff)}{p_T(same)}=\left(\frac{p_L(diff)}{p_L(same)}\right)^{3/2}$.
Again, this suggests that since the parameters vary by village, that we should work with estimation by village.

\begin{figure}[!h]
	
	\subfloat[Information]{
		\includegraphics[trim = 0.5in 2.25in 0.5in 2.25in, clip,scale=0.35]{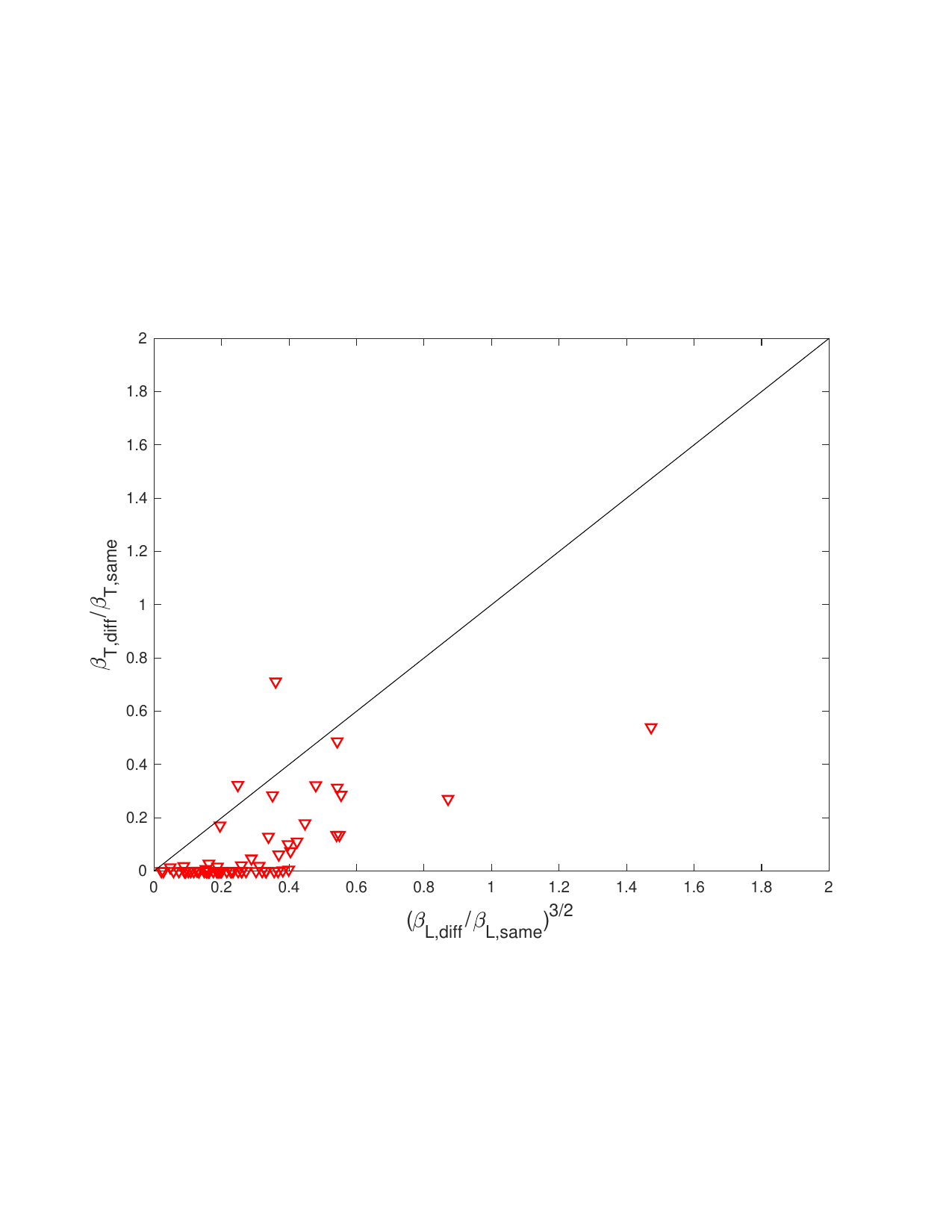}
	}
	\subfloat[Favors]{
		\includegraphics[trim = 0.5in 2.25in 0.5in 2.25in, clip,scale=0.35]{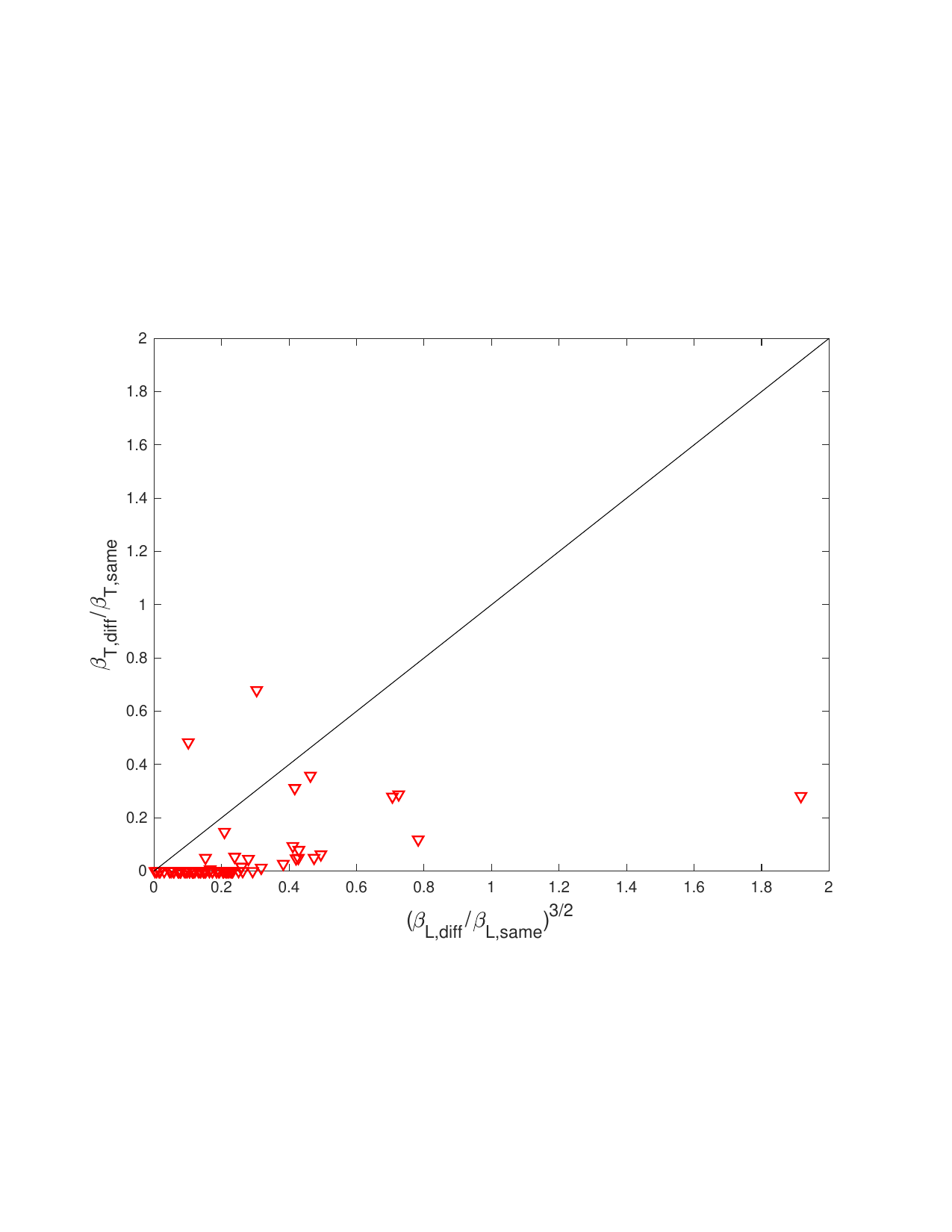}
	}
	\caption{\label{fig:Caste} Plots of estimates of $\frac{\beta_{T,diff}}{\beta_{T,same}}$ against $\frac{\beta_{L,diff}^{3/2}}{\beta_{L,same}^{3/2}}$ by village for information and favors networks.}
\end{figure}

Finally, Figure \ref{fig:Caste} shows the results when
we allow the parameter estimates to vary by village, again using the variation on the estimator of Proposition \ref{prop:LT_SUGM1-2} that we used in Example 1.
For the bulk of villages, cross-caste relationships relative
to within-caste relationships are more frequent as
isolated links compared to being embedded in triangles,
for both information and favor networks.
This is a new observation in the literature and begs the question as to the foundations as to why members of two groups which meet less frequently and may have less affinity for each other may nonetheless be able to sustain interactions privately. They appear to  give up support for the sake of not having the interaction be public.

\medskip

\section{A Central Limit Theorem for Correlated Random Variables}\label{sclt}

We now state a new central limit theorem that applies to a variety of settings in which all variables may be correlated (well-beyond network settings) but the total amount of covariance is bounded. We require it to prove Proposition \ref{prop:LT_SUGM1-2}, but the result is considerably more general. Our result is derived from Stein's method, which provides the foundations for many CLTs with dependent variables, but as will become clear, our treatment delivers more general conditions that also allow us to cover SUGMs \citep{stein1986approximate}.

To understand the necessity of such a theorem at a high intuitive level,  begin with the simplest example where $n=3$ and the only possible subgraphs are a triangle and three links.   Here if there is a triangle, then there are necessarily three links, and if there is no triangle there are necessarily fewer than three links.
There is a strong dependence in the subgraph counts, and this carries over in the estimates of parameters.    The overall dependence of different subgraph counts is not so extreme as $n$ grows, but there is always a correlation and adjacent subgraphs remain nontrivially correlated all along the sequence.  Thus, we need a theorem that covers moments that are based on variables, some of which are highly correlated all along the sequence.

There are essentially two approaches used to prove CLTs relevant that allow for correlated random variables and the existing results in neither  setting apply to SUGMs.

The first approach to CLTs with dependence we can call ``geometric'': random variables carry indices locating them in some embedding space. This covers time series, spatial data, mixing random fields. Random variables are embedded in some space where there are ``close'' and ``far'' random variables and the further they are, the less correlated they are.  With enough moment conditions, using distance-based limits on correlation allows researchers to leverage Stein's lemma under mixing conditions to derive a CLT (e.g., \citep{bolthausen1982,jenishp2009} just to name a few). In fact, a literature evolved focusing on mixing random fields leveraging such conditions \citep{froot1989consistent,conley1999gmm,driscoll1998consistent}.\footnote{See also \cite{kuersteiner2019limit}, which studies conditional mixingale types of assumptions so that nodes that are ``far'' in characteristic space have decaying dependence.}    Some researchers   working on network formation (e.g., \cite{boucher2012my,leung2014}) exploit such spatial techniques by embedding nodes in some space so that only ``nearby'' nodes can link and ``distant'' nodes  cannot link  (e.g., following the logic in \cite{penrose2003rgg,hoff2002latent}) in order to satisfy mixing conditions and apply a central limit theorem like \cite{bolthausen1982}. As $n \rightarrow \infty$ most nodes get further and further apart and therefore essentially never link.

SUGMs do not lend themselves to  ordering the indices of random variables into time, space, lattices, or more general random fields. The reason that this does not work as it imposes specific structure on the adjacency matrix.  For example, consider the simple case where nodes live on a line. Then in the adjacency matrix, only nodes within some limited distance to the left or right of any given node tend to be linked to that node.  While this is fine for certain contexts, it is not an adequate description of a village network where there is no natural space on which some households in a village should be considered, ex ante, to be infinitely far apart (or students in a university who are, ex ante, infinitely unlikely to link to each other, etc.).  We can prove a CLT without this structure and that nests such previous theorems, and so it is worthwhile to do so.

A second approach to CLTs with dependence is to use dependency graphs (\citet*{baldi1989normal,goldstein1996multivariate,chen2004normal,ross2011fundamentals}).\footnote{\cite{aronow2017estimating} use \cite{chen2004normal} assuming conditional independence in  treatment effects with spillovers,  where certain interferences are ruled out. But this means one cannot study spillovers due to information, for example, because in principle information can flow to any node in the network, so treatment of one node leaves non-zero exposure to all others. Another application of \cite{chen2004normal} is in \cite{LeungMoon22}. They characterize dependence through radii of stability which correspond to changes in network features when dropping observed links.}
There is another graph (not a SUGM but a graph among indices of random variables), where edges between random variable indices indicates conditional dependence (and independence if there is no edge).\footnote{Some work, e.g.,  \cite{chen2004normal}, focuses on exact finite sample Berry-Essen inequalities.} While that literature leverages \cite{stein1986approximate} to prove CLTs despite not forcing a geometric structure, they require most entries of the dependency graph to be zero. That is, they impose a sparse (dependency) graph with independence across many pairs of random variables (and usually do not consider triangular arrays).  SUGMs violate both impositions since at each $n$, all links can have non-zero correlations with each other.

Our insight is
that if the overall covariances satisfy some bounds, then one can still prove a CLT no matter how that dependency is arranged and even without numerous conditional independencies.

Note that both the geometric or dependency graph approaches   limit
the total correlation of the $\binom{n}{2}$ random variables with specific structure. We develop a CLT in which all random variables can have non-zero correlation by controlling the total size but not forcing zeros. We show that if one can ``collect`` what we call \emph{affinity sets} for each index---that is a set of other random variables that can have larger correlation with the reference random variable---then as long as three covariance conditions hold, we can limit the overall correlation and apply Stein's method.
Our conditions are that on average (1) within an affinity set, most of the correlation comes from the dependence between the reference variable and its members rather from than between two members; (2) the amount of correlation between members of two different affinity sets is small; (3) the correlation between the reference variable and all those outside its affinity sets is appropriately small. We argue that these are relatively intuitive (weighted covariance conditions), interpretable, easy to check, and more easily micro-founded than potentially a more complex but specific assumption on things like mixing random fields. It is in this sense that we believe this is of  independent interest.  In a follow-up paper \citep*{chandrasekhar2023general} we provide a number of other (non-network) applications and additional discussion of the literature.

We require some notation.

Consider a triangular array of (real-valued) random variables   $X_\alpha^N$
 with a set of labels $\alpha \in \Lambda^N$
such that  $|\Lambda^N|=N$.
For instance, in our SUGM setting the $X_\alpha^N$ may be an indicator of the appearance of some particular subgraph,
such as a link or triangle, and $\alpha$ would track the pairs of nodes involved in a potential link ($ij$) or triples of nodes in a triangle ($ijk$). $N$ captures the
$\binom{n}{2}$ possible links or
$\binom{n}{3}$ possible triangles. So when considering link counts $\alpha$ would track pairs of nodes involved in links and when considering triangles $\alpha$ would track triples.

Let us normalize the variables by their means:
\[
Z_{\alpha}^N= X_{\alpha}^N-{\rm E}\left[X_{\alpha}^N\right].
\]
We presume that the $
Z_\alpha$s are such that the $\E[|Z_\alpha^N|^3]/\E[(Z_\alpha^N)^2]^{3/2}$ is bounded above for all $\alpha,N$.\footnote{This condition holds, for instance, if the $Z$s are from Bernoulli random variables, or if  $\E[|Z_\alpha^N|^3]$ is bounded above and $\E[(Z_\alpha^N)^2]$ is bounded below across $\alpha,N$, but also in cases where such individual bounds do not hold.  Violations of this condition are, for example, random variables with finite second moments but infinite third moments.}
We provide conditions under which a normalized statistic, as $N\rightarrow \infty$, converges to the standard normal distribution,
\[
\frac{\sum_{\alpha \in \Lambda^N} Z_{\alpha}^N}{\sqrt{a_N}}\rightsquigarrow\mathcal{N}(0,1),
\]
where the normalizer, $a_N$, is a measure of the variance of the sum, defined below.

\subsection{{Affinity Sets}}

For each $\alpha,N$, we partition the index set
$\Lambda^N$ into two pieces: an \emph{affinity set} and its complement.
In particular, we define an affinity set, for each $\alpha,N$, as
\[
\mathcal{A}
\left(\alpha,N\right)\subset\Lambda^{N} \ \ {\rm such \ that \ \ }  \alpha\in \mathcal{A}
\left(\alpha,N\right).
\]
The conditions for $\eta \in \mathcal{A}
(\alpha,N)$ are  defined below.
It is crafted in a specific manner to satisfy a few sufficient conditions for a CLT.

$\mathcal{A}
\left(\alpha,N\right)$ includes indices $\eta$ where the corresponding  $X_{\eta}$'s  have relatively ``high'' correlation with $X_\alpha$,
and its complement includes the indices $\eta$ where the corresponding  $X_{\eta}$'s that have relatively ``low'' correlation with $X_\alpha$.
There is substantial freedom in defining these sets, but an easy rule to applying them to (non-sparse) SUGMs is to set the $\mathcal{A}
\left(\alpha,N\right)$ sets to include the other tuples of nodes with which the reference tuple of nodes shares some potential edges and therefore
could face incidental generation.

We show that under conditions on the relative correlations inside and outside of the
 {affinity sets} a central limit theorem applies.

\subsection{The Central Limit Theorem}

Let
\[
a_N := \sum_{\alpha;\eta \in \mathcal{A}
(\alpha,N) } \cov \left(Z_{\alpha}, Z_{\eta}\right),
\]
be the total sum of variance-covariances across all the pairs of variables in each other's  {affinity sets}
,
and recall this was used in normalizing the total sum above.
In what follows, we maintain the assumption that $a_N\rightarrow \infty$, as otherwise there is insufficient variation to obtain a central limit theorem.

Finally, let ${\bf Z}_{-\mathcal{A}(\alpha,N)} := \sum_{\eta \notin \mathcal{A}
 (\alpha,N)}Z_\eta$ be the sum over all random variables not in the reference index $\alpha$'s affinity set.

The following are the key conditions for the theorem:
\begin{equation}\label{one}
	\sum_{\alpha; \eta,\gamma \in\mathcal{A}
 (\alpha,N)} \E  \left[\, |Z_\alpha| Z_{\eta} Z_{\gamma} \right]= o\left(a_N^{3/2}\right),
\end{equation}
\begin{equation}\label{two}
	\sum_{\alpha,\alpha',\eta\in\mathcal{A}
 \left(\alpha,N\right),\eta'\in\mathcal{A}
 \left(\alpha',N\right)} \cov\left(Z_{\alpha}Z_{\eta},Z_{\alpha'}Z_{\eta'}\right)
	=o\left( a_N^{2}\right),
\end{equation}
\begin{equation}\label{three}
	\sum_{\alpha} \E \left[   \left\vert \E \left[Z_\alpha
{\bf Z}_{-\mathcal{A}(\alpha,N)}
  \left\vert
 {\bf Z}_{-\mathcal{A}(\alpha,N)}
  \right. \right]\right\vert\right]
 =
 \sum_{\alpha} \E \left[   \left\vert {\bf Z}_{-\mathcal{A}(\alpha,N)}\E \left[Z_\alpha
  \left\vert
 {\bf Z}_{-\mathcal{A}(\alpha,N)}
  \right. \right]\right\vert\right]
	= o\left(a_N \right).
\end{equation}

 Condition (\ref{one}) captures the idea that most of the covariance between random variables in
{an affinity set}  $\alpha$ comes from covariances between the reference random variable $X_\alpha$ and one member of the neighborhood $X_\eta$, rather than from covariance between two other members $X_\eta$ and $X_\gamma$.  Some of them can have high covariance,
but in total they cannot. The term on the left-hand-side consists of an integral over $Z_\alpha$ of covariances between $Z_\gamma$ and $Z_\eta$, weighted by $|Z_\alpha|$, so the covariances cannot be too large exactly with large values of $Z_\alpha$. So in constructing our normalizer $a_N$ we need only consider the covariance terms between reference variables and
members of their {affinity sets}.

Condition (\ref{two}) is similar but it looks at the covariance between two members ($\eta, \eta'$) of   different {affinity sets}
 of two distinct reference nodes ($\alpha, \alpha'$). It says, again, that the {\sl total} amount of covariance across members of different {affinity sets},
 when considering any two pairs of reference nodes, is small relative to the total sum of variances.

Condition (\ref{three}) states that covariances between reference nodes and all members outside of its {affinity set}
 are relatively small. This is intuitive and motivates the strategy in defining
 {affinity sets}
  in the first place.
Note that if, for instance, $\E[Z_\alpha {\bf Z}_{-\mathcal{A}(\alpha,N)} \vert {\bf Z}_{-\mathcal{A}(\alpha,N)} ] \geq 0$, then Condition (\ref{three}) is simply that $\sum_{\alpha; \eta \notin \mathcal{A}
(\alpha,N)} \cov \left(Z_\alpha  Z_{\eta}\right)
= o\left(a_N\right)$.

Although these conditions may seem complex, in \cite*{chandrasekhar2023general} we show how they can be directly verified in a number of applications.

\begin{theorem}\label{clt}
	If (\ref{one})-(\ref{three}) are satisfied, then
	${\sum_{\alpha \in \Lambda^N} Z_{\alpha}^N} / {\sqrt{a_N}}\rightsquigarrow\mathcal{N}(0,1)$.
\end{theorem}

It is useful to consider the special case in which $\mathcal{A}
(\alpha,N) = \{\alpha\}$,
which  extends but nests many standard central limit theorems.
Here we use the notation ${\bf Z}_{-\alpha} $ to denote ${\bf Z}_{-\mathcal{A}(\alpha,N)} $
This is useful when we get to the case of sparse networks, where incidental
networks are unlikely and the correlation between different subgraphs becomes small.

\begin{corollary}
	\label{sparseCLT}
	If  $\E\left[Z_\alpha {\bf Z}_{-\alpha} \left\vert {\bf Z}_{-\alpha} \right. \right] \geq 0$ for every
 $\alpha$, and\footnote{If $\E\left[Z_\alpha {\bf Z}_{-\alpha}\left\vert {\bf Z}_{-\alpha} \right. \right] \geq 0$
		does not hold,  then (ii) can just be substituted by (\ref{three}). }
	\begin{itemize}
		\item[(i)] $\sum_{\alpha, \eta}  \cov ( Z_\alpha^2, Z_{\eta}^2 ) =o\left(a_N^{2}\right)$, and
		\item[(ii)]   $\sum_{\alpha \neq \eta}  \cov (Z_\alpha ,Z_{\eta}) = o\left(a_N\right)$,
	\end{itemize}
	then $ {\sum_{\alpha \in \Lambda^N} Z_{\alpha}^N} / {\sqrt{a_N}}\rightsquigarrow\mathcal{N}(0,1)$.

	Moreover, if the $X_\alpha$s are Bernoulli random variables with
	$\E[X_\alpha]\rightarrow 0$ (uniformly), then (ii)  implies (i).
\end{corollary}

Note that (i) is often satisfied whenever (ii) is,  so this is an easy corollary  based on one intuitive condition:
the overall sum of covariances between different variables cannot be too large relative to the sum of their variances.

\medskip

\section{Concluding Remarks}\label{sec:conclusion}

\

We have developed a new class of models---SUGMs---in which networks are formed via a basis set of subgraphs. The parameters are always identified and we study conditions when the parameters have estimators that are consistent and asymptotically normally distributed.  We provide four such estimators to cover various data settings.
En route, we develop a new central limit theorem for dependent random variables which extends the dependency graph literature and also does not require a geometric (lattice-like) ordering of covariances of the kind used in the time series and spatial literatures. We believe this is of independent interest.

Our model is useful for empirical work. We show that it models economically relevant features of
real-world network data  better than the standard alternatives: stochastic block models,
unobserved heterogeneity models, latent space models, and ERGMs. Further, we have illustrated that
it is easy to microfound a SUGM to test important hypotheses such as whether a network provides incentives to sustain informal contracts or whether people are willing to interact across caste publicly.

Future research can explore, among other things, richer inclusion of covariates in subgraphs, a data-driven approach to select subgraphs for inclusion in the model,   statistical properties of other specific empirically-relevant SUGMs not studied here, and systematic bootstrap techniques for inference for use in complex implementations of these models.

Data Availability Statement:  The code and data needed to reproduce all figures and tables in the paper are available at  http://doi.org/10.5281/zenodo.14218442

\bibliographystyle{ecta}
\bibliography{23289}

\newpage
\appendix

\section{
Proofs}\label{sec:proofs}

\begin{proof}[{\bf Proof of Lemma \ref{lem:favor-info}}]
	Note that for $z \in \{favor,info\}$,
	\(
	\frac{\gamma_{T,z}}{\gamma_{L,z}} = \frac{\beta_{T,z}^{1/3}}{\beta_{L,z}^{1/2}}
	\)
	and so the condition becomes
	\(
	\frac{\beta_{T,favor}^{1/3}}{\beta_{L,favor}^{1/2}} >  \frac{\beta_{T,info}^{1/3}}{\beta_{L,info}^{1/2}}
	\)
	from which the result directly follows.
\end{proof}

\

\begin{proof}[{\bf Proof of Lemma \ref{lem:geomodel}}]
	Having two randomly picked nodes bump into each other within a community, there is a $f^2+(1-f)^2$ probability of the nodes being of the same type, and a $1-(f^2+(1-f)^2)$ probability of them being of different types.\footnote{ To keep things simple, we consider equal-sized groups, but the argument extends with some adjustments to asymmetric sizes.}     Thus, the relative meeting frequency of different type links compared same type links is
	\[
	\frac{\pi_L(diff)}{\pi_L(same)}=\frac{1-(f^2+(1-f)^2)}{f^2+(1-f)^2}.
	\]
	For triangles, picking three individuals out of the community at any point in time would lead to a $f^3+(1-f)^3$ probability that all three are of the same type, and $1-(f^2+(1-f)^2)$ of them being of mixed types,
	and so
	\[
	\frac{\pi_T(diff)}{\pi_T(same)}=\frac{1-(f^3+(1-f)^3)}{f^3+(1-f)^3}.
	\]
	It follows directly that for $f\in (0,1)$:
	\begin{equation}
		\label{ratios}
		\frac{\pi_T(same)}{\pi_T(diff)}<\frac{\pi_L(same)}{\pi_L(diff)}.
	\end{equation}
	So different type triangles are more likely to have opportunities to form under this random mixing model than different type links.
	In particular, note that
		$\frac{p_{T,diff}}{p_{T,same}} < \frac{p_{L,diff}}{p_{L,same}}$ if and only if $\left( \frac{\beta_{T,diff} }{\beta_{T,same}}  	\frac{\pi_T(diff)}{\pi_T(same)}  \right) ^{1/3}<\left(\frac{\beta_{L,diff} }{\beta_{L,same} }   \frac{\pi_L(same)}{\pi_L(diff)}  \right)^{1/2}.$ 	In summary, given (\ref{ratios}), sufficient condition for
	$\frac{p_{T,diff}}{p_{T,same}} < \frac{p_{L,diff}}{p_{L,same}}$ is that
	$\frac{\beta_{T,diff} }{\beta_{T,same}}<\left(\frac{\beta_{L,diff} }{\beta_{L,same} }\right)^{3/2}.$
\end{proof}

\

\begin{proof}[{\bf Proof of Theorem \ref{newid}}]
Order subgraph types so that the number of links in a subgraph of type $\ell$ is nondecreasing in $\ell$. Let $\ell^*$ be the smallest $\ell$ for which $\beta_\ell\neq \beta_\ell'$.

Consider a particular subgraph $g'\in G_{\ell^*}$ with labeled nodes.   Let $p_\beta(g')$ denote the probability that the subgraph $g'$ (without any extra links within the subgraph nor any links anywhere else in the network) forms from some collection of subgraphs in $G_\ell$ for $\ell<\ell^*$.
We can then write the probability of forming the subgraph $g'$ (and no other links anywhere) as
\[
p_\beta(g')  + (1 - p_\beta(g') )  \beta_{\ell^*},
\]
where recall that $\beta_{\ell^*} $ is the probability that $g'$ forms directly.
Let $no_\beta(g')$ denote the probability that all $g'' \in G_\ell$ for $\ell<\ell^*$ such that  $g''\subset g'$ {\sl do not} form.
Then the probability that none of the links in $g'$ are present as parts of subgraphs that do not extend beyond $g'$ is then
\[
no_\beta(g')(1- \beta_{\ell^*}).
\]
Let $\emptyset$ denote the empty network.
It then follows that
\[
\frac{\Pr_\beta (g')}{\Pr_\beta (\emptyset) } =   \frac{ p_\beta(g')  + (1 - p_\beta(g') )  \beta_{\ell^*}}{no_\beta(g')(1- \beta_{\ell^*}) }.
\]
So the probability that the realized network is exactly $g'$ compared to the probability that it is the empty network (which has positive probability given that each $\beta_\ell<1$), depends only on the probability that $g'$ forms directly or incidentally from subgraphs of it,
over the probability that no subgraph of $g'$ (including itself) forms.

Note that this expression is strictly increasing in $ \beta_{\ell^*}$ since  $p_\beta(g')<1$ and $no_\beta(g')>0$.
By the definition of $\ell^*$:   $p_\beta(g') = p_{\beta'}(g')<1$ and $no_\beta(g')=no_{\beta'}(g')$.
It then follows that
\[
\frac{\Pr_\beta (g')}{\Pr_\beta (\emptyset) }
\neq
\frac{\Pr_{\beta'} (g')}{\Pr_{\beta'} (\emptyset) },
\]
which establishes the claim.\end{proof}

\

\begin{proof}[{\bf Proof of Proposition \ref{prop:Links-Triangles-SUGM}}]
First, note that $ 1- (1-\beta_{T})^{x}$ is the probability that some link is formed as part of at least one triangle out of $x$ possible triangles that could have it as an edge (independently of whether it also forms directly).

Next, note that the probability that a link forms conditional on some particular triangle that it could be a part of {\sl not forming} is\footnote{That is,
consider a given pair of nodes $i,j$ and a third node $k$. Consider the probability that link $ij$ is formed conditional on triangle $ijk$ not forming directly as a triangle.}
\begin{equation}
\label{tildeq}
\widetilde{q}_L\left(\beta_{L},\beta_{T}\right) =   \beta_L  + (1 -\beta_L )\left(1- (1-\beta_{T})^{n-3}\right).
\end{equation}
To see the derivation, note that:
\begin{align*}
\widetilde{q}_L \left(\beta_{L},\beta_{T}\right)=\Pr(g_{ij=1} \mid T_{ijk}=0) &= \frac{\Pr(g_{ij}=1 \cap T_{ijk}=0)}{\Pr(T_{ijk}=0)} \\
&= \frac{\Pr(g_{ij}=1 \cap T_{ijk}=0)}{1-\beta_T} \text{ by definition }\\
&= \frac{\underbrace{\beta_L(1-\beta_T)}_{(i)}+\underbrace{(1-\beta_L)(1-(1-\beta_T)^{n-3})(1-\beta_T)}_{(ii)}}{1-\beta_T}\\
\end{align*}
 where $T_{ijk}$ is an indicator for the direct triangle on $i,j,k$ forming. Here $(i)$ captures when the link forms directly and the indicated direct triangle does not and $(ii)$ captures when the link does not form directly and some other triangle generates it and the indicated directed triangle does not.

Given this, note that the probability that a link forms can be written as
\begin{equation}
\label{ELdef2}
\E_\beta(S_L(g)) = q_L= \beta_T + (1-\beta_T) \widetilde{q}_L\left(\beta_{L},\beta_{T}\right),
\end{equation}
noting that a link could form as part of a triangle that it is part of, or else form conditional upon that triangle not forming. Note that this is equivalent to saying a link can be formed as part of some triangle, or else that it could form by other means that exclude that triangle (but then are either direct or involve other triangles, which is the $\widetilde{q}_L\left(\beta_{L},\beta_{T}\right)$ .

We can write the probability of some triangle forming as
\begin{equation}
\label{ETdef}
\E_\beta(S_T(g)) = q_T = \beta_T+ (1-\beta_T) (\widetilde{q}_L\left(\beta_{L},\beta_{T}\right))^3,
\end{equation}
where the first expression $\beta_T$ is the probability that the triangle is directly generated, and then the second expression $(1 -\beta_T )(\widetilde{q}_L\left(\beta_{L},\beta_{T}\right))^3$ is the probability that it was not generated directly, but instead all three of the edges formed on their own (which happen independently, conditional on the triangle not forming, which has
probability $(\widetilde{q}_L\left(\beta_{L},\beta_{T}\right))^3$). The result follows from Lemma \ref{lem:Unique}, with $x_{1}=\beta_{L}$,
$x_{2}=\beta_{T}$, $q_{L}=a_{1}\left(x\right)$, $q_{T}=a_{2}\left(x\right)$
and $\widetilde{q}_{L}\left(\beta_{L},\beta_{T}\right)=f\left(x\right)$.\end{proof}

\

\begin{lem}
\label{lem:Unique}Let $x=\left(x_{1},x_{2}\right)\in\left[0,1\right)^{2}$
and $a\left(x\right)=\left(a_{1}\left(x\right),a_{2}\left(x\right)\right)$
be two real-valued functions
\begin{eqnarray*}
a_{1}\left(x\right) & = & x_{2} + \left(1-x_2\right) f\left(x\right)\\
a_{2}\left(x\right) & = & x_{2}+ \left(1-x_2\right)f\left(x\right)^{3},
\end{eqnarray*}
with
\[
f\left(x\right)=x_{1}+\left(1-x_{1}\right)\left[1-\left(1-x_{2}\right)^{N}\right] =
 1- (1-x_1)\left(1-x_{2}\right)^{N}
\]
for some integer $N\geq 0$. Then
\(
x\neq x'  \implies a\left(x\right)\neq a\left(x'\right).
\)
\end{lem}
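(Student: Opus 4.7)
My plan is to show that the map $x \mapsto a(x)$ is injective by explicitly inverting it, using the change of variables $u := 1-x_2$ and $v := f(x)$. The payoff of this substitution is that both $1 - a_1$ and $1 - a_2$ factor nicely: since $1 - a_1(x) = (1 - x_2)(1 - f(x)) = u(1-v)$ and $1 - a_2(x) = (1-x_2)(1 - f(x)^3) = u(1-v)(1 + v + v^2)$, dividing gives the clean identity
\[
\frac{1 - a_2(x)}{1 - a_1(x)} \;=\; 1 + v + v^2,
\]
valid since $a_1(x) < 1$ (as $u, 1-v > 0$ for $x \in (0,1)^2$).

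The first step of the inversion is then to recover $v$. The map $v \mapsto 1 + v + v^2$ is strictly increasing on $[0,1)$, so $v = f(x)$ is pinned down uniquely by $(a_1(x), a_2(x))$. The second step recovers $u$, hence $x_2$, via $u = (1 - a_1)/(1 - v)$, which is well-defined because $v < 1$. The third step recovers $x_1$: from the explicit formula $f(x) = 1 - (1-x_1)(1-x_2)^N$, once $f$ and $x_2$ are known, $x_1 = 1 - (1 - f)/(1 - x_2)^N$ is determined (and reduces to $x_1 = f$ if $N = 0$). Chaining these three inversions shows that $a(x) = a(x')$ forces $v = v'$, then $x_2 = x_2'$, then $x_1 = x_1'$, contradicting $x \neq x'$.

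The main thing to watch is that the inversions stay inside the open domain $(0,1)^2$ so that no denominator vanishes; this boils down to the strict positivity of $(1-x_1)(1-x_2)^N$, which guarantees $v = f(x) < 1$ and hence $1 - a_1(x) > 0$. I do not anticipate a serious obstacle beyond bookkeeping these positivity conditions, since the key algebraic identity $1 - a_2 = (1 - a_1)(1 + v + v^2)$ does all the work in a single line.
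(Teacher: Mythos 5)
Your proof is correct, and it takes a genuinely different route from the paper's. The paper argues by contradiction: it first disposes of the case $x_2 = x_2'$ using strict monotonicity of $f$ in $x_1$, then for $x_2' < x_2$ writes $f(x') = b f(x) + c$ and $f(x')^3 = b f(x)^3 + c$ with $b + c = 1$, and reduces everything to showing that $b y^3 + 1 - b = (b y + 1 - b)^3$ has no root for $b, y \in (0,1)$, which it proves by a derivative/monotonicity argument on the difference. You instead invert the map outright: the factorizations $1 - a_1 = (1-x_2)\bigl(1 - f\bigr)$ and $1 - a_2 = (1-x_2)\bigl(1 - f\bigr)\bigl(1 + f + f^2\bigr)$ give $\frac{1-a_2}{1-a_1} = 1 + v + v^2$ with $v = f(x) \in (0,1)$, and strict monotonicity of $v \mapsto 1+v+v^2$ recovers $v$, then $x_2$, then $x_1$, with no case split and no cubic identity to analyze. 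Both arguments hinge on the same positivity facts --- $x_2 < 1$ and $f(x) = 1 - (1-x_1)(1-x_2)^N \in (0,1)$, so that $1 - a_1 > 0$ and the divisions are legitimate --- and you track these correctly (note in passing that the second displayed form of $f$ in the lemma statement contains a typo; your simplification $f = 1 - (1-x_1)(1-x_2)^N$ is the right one). What your version buys is an explicit inverse of $a$, i.e.\ a constructive identification formula, and a shorter argument; what the paper's version buys is a template that the authors reuse elsewhere (the $\widetilde q_L$ versus $\widetilde q_L^{\,3}$ comparison in the surrounding discussion), but as a proof of this lemma yours is cleaner.
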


\begin{proof}[{\bf Proof}]
Suppose the contrary.
Then
\[
x_{2}' + \left(1-x_2'\right) f\left(x'\right)  = x_{2} + \left(1-x_2\right) f\left(x\right) \text{ and }
x_{2}' + \left(1-x_2'\right) f\left(x'\right)^{3} =  x_{2}+ \left(1-x_2\right)f\left(x\right)^{3}.
\]

First, note that if $x_2'=x_2$, then since these are both less than one, the first equation above implies that $f(x')=f(x)$.  However, that is not possible since  $f$ is increasing in $x_1$ and $x_1'\neq x_1$  - recalling
that $x'\neq x$ and  $x_2'=x_2$ .
Thus, $x_2'\neq x_2$, and so without loss of generality consider the case in which $x_2'<x_2$.
This implies that both
\[
f\left(x'\right) = b f\left(x\right) + c
\]
and
\[
f\left(x'\right)^3 = b f\left(x\right)^3 + c,
\]
where
$b=  \frac{1-x_2 }{1-x_2'}\in (0,1)$ and $c=  \frac{ x_2-x_2'}{1-x_2'}\in (0,1)$, and $b+c=1$.

This implies that
\[
 b f\left(x\right)^3 + 1-b
 =(b f\left(x\right) + 1-b)^3.
 \]
 This as an equation of the form
 \[b y^3 + 1-b = (by +1-b)^3\]
where $b\in (0,1)$ and $y\in [0,1)$.
Note that the left hand side is larger when $y=0$ and the two are equal when $y=1$, and that the derivative of the difference is
\[
3by^2 - 3b (b y + 1-b)^2 =   3b \left[y^2 -  (b y+ 1-b)^2\right] <0.
\]
 The difference is decreasing over the entire interval, and hits 0 at the end.
Thus, the difference is always positive in $[0,1)$ and there is no solution, meaning  our supposition was incorrect.
\end{proof}

\

\begin{lem}\label{lem:analytic}
Any event (in the discrete $\sigma$-algebra generated by all possible realizations of all subgraphs) associated with any SUGM has a probability that is an analytic function (and so it is in $C^\infty$), and has derivatives and cross partials at all levels being uniformly continuous and bounded on the whole parameter space of $[0,1]^k$.
\end{lem}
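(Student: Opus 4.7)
The plan is to reduce everything to a polynomial expression by exploiting the finiteness of $\mathcal{G}^n$ and the independence built into the SUGM generative process. Since $\mathcal{G}^n$ is finite (it has $2^{\binom{n}{2}}$ elements), the Borel $\sigma$-algebra coincides with $2^{\mathcal{G}^n}$, so every event $A$ is a finite subset of $\mathcal{G}^n$ and by countable additivity
\begin{equation*}
\Pr_\beta(A) \;=\; \sum_{g_0 \in A} \Pr_\beta\bigl(\{g_0\}\bigr).
\end{equation*}
Thus it suffices to prove the claim for a singleton event $\{g_0\}$, and then sum.

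Next I would write $\Pr_\beta(\{g_0\})$ explicitly by enumerating over which possible subgraph instances were actually generated. Let $\mathcal{S} = \bigsqcup_{\ell=1}^{k} G_\ell$ denote the (finite) collection of all potentially generatable subgraph instances, each tagged with its type $\ell(s)$. By the independence in the SUGM definition, each $s \in \mathcal{S}$ is generated independently with probability $p_{\ell(s)}(\beta)$, and the observed graph is the union of the generated instances. Hence
\begin{equation*}
\Pr_\beta\bigl(\{g_0\}\bigr) \;=\; \sum_{\substack{C \subseteq \mathcal{S} \\ \bigcup_{s \in C} s \,=\, g_0}} \;\prod_{s \in C} p_{\ell(s)}(\beta) \; \prod_{s \in \mathcal{S} \setminus C} \bigl(1 - p_{\ell(s)}(\beta)\bigr).
\end{equation*}
This is a finite sum of finite products of the functions $p_\ell(\beta)$ and $1 - p_\ell(\beta)$.

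Under the lemma's convention that $\beta \in [0,1]^k$ with $p_\ell(\beta) = \beta_\ell$, the displayed expression is a multivariate polynomial in $\beta$ of degree at most $|\mathcal{S}|$. Multivariate polynomials are entire (hence analytic and $C^\infty$) on all of $\mathbb{R}^k$; their partial derivatives and cross partials of every order are again polynomials, and any polynomial restricted to the compact set $[0,1]^k$ is automatically bounded and uniformly continuous. Summing over $g_0 \in A$ is a finite operation that preserves each of these properties.

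There is no real obstacle here: the result is essentially bookkeeping enabled by (i) finiteness of $\mathcal{G}^n$, which lets us reduce any event to a finite sum of atoms, and (ii) the independence of subgraph generation, which makes each atom's probability a polynomial in the $p_\ell(\beta)$. For the more general formulation in which $p_\ell(\beta)$ is merely assumed to be analytic on a compact $\mathcal{B}$, the same argument goes through unchanged: closure of the class of analytic functions under finite sums, products, and composition yields analyticity of $\Pr_\beta(A)$, and continuity of all partial derivatives on the compact set $\mathcal{B}$ yields uniform boundedness and uniform continuity.
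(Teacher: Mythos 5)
Your proof is correct and follows essentially the same route as the paper's: both decompose the event into a finite sum of atomic probabilities, each of which is a finite product of factors $\beta_\ell$ and $(1-\beta_\ell)$ by independence of subgraph formation, and hence a polynomial that is analytic with all derivatives bounded and uniformly continuous on the compact set $[0,1]^k$. The only cosmetic difference is that you take the atoms to be observed networks $g_0$ and sum over generating configurations, while the paper takes the atoms to be the underlying subgraph realizations directly; the resulting expressions are the same.
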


\begin{proof}[{\bf Proof}]
An `outcome' is a specification of exactly which subgraphs form and which do not - so a complete specification of what happens.
Any event then corresponds to a set of outcomes, and so its probability is a sum of probabilities of the outcomes. Each outcome's probability is of the form
\[
\prod_\ell \beta_\ell^{z_\ell} (1-\beta_\ell)^{m_\ell-z_\ell}
\]
where $z_\ell$ indicates how many subgraphs of type $\ell$ are present in the outcome.
As each of these functions is analytic (and hence in $C^\infty$), all of the derivatives and partials, cross partials, etc., are continuous and bounded on $[0,1]^k$ and hence uniformly continuous on $[0,1]^k$.
Any event is then a finite sum of analytic functions and so the result follows directly.
\end{proof}

\

\begin{proof}[{\bf Proof of Theorem \ref{thm:many_networks}}]
	We verify the conditions of Theorem 2.5 of \cite{newey1994large} for consistency.
	Assumption (i)  holds by Theorem \ref{newid} and we assume compactness of the parameter space
	(Assumption (ii)). Continuity of $\log \Pr_\beta(g)$ at each $\beta$ with probability one is
	mechanical in our model since subgraph probabilities  are analytic functions of the parameters (Lemma \ref{lem:analytic}). Finally, the uniform bound of assumption (iv) holds since $n$ is fixed, there are only a finite number of graphs in consideration, each with assigned probabilities in a compact set of parameters, and there is a positive probability of seeing any graph in $\mathcal{G}^n$. Therefore, the supremum must be finite.
	
 We verify the conditions of Theorem 3.3 of \cite{newey1994large} for asymptotic normality. We have assumed (i), interiority of the parameter, and our model by construction places positive mass on all of $\mathcal{G}^n$.  We have assumed (iv). Lemma \ref{lem:analytic} implies Assumptions (ii), (iii), and (v). Because all events have probabilities that are analytic functions of parameters, with all derivatives and cross-partials being uniformly continuous and bounded in the parameter space, the norms of the maximal derivative ($\lVert \nabla_{\beta}\Pr_\beta(g) \rVert $) and second derivatives ($\lVert \nabla_{\beta \beta}\Pr_\beta(g) \rVert $) of the probability functions, as well as the log likelihood ($\lVert \nabla_{\beta}\log \Pr_\beta(g)\rVert $), have uniform and finite upper bounds.
\end{proof}

\

\begin{proof}[{\bf Proof of Proposition \ref{prop:many_networks_LT}}]
First we check consistency by the conditions of Theorem 2.6 of \cite{newey1994large}.  Here each observation is
an independently drawn network.
For Assumption (i) let $\widehat{W}$ be the identity matrix and then apply Proposition \ref{prop:Links-Triangles-SUGM}.  For (ii), we have assumed that the parameter space $\mathcal{B}$ is compact. (iii) follows from the fact that   $\E_{\beta}\left[S\left(g_{r}\right) \right]$ is continuous
at each $\beta$ with probability one since it composes continuous functions of parameter entries. Finally (iv) follows from the fact that since  both $S_{\ell}$   are shares, they
are strictly less than 1.

Next we check asymptotic normality by the conditions of Theorem 3.4
of \cite{newey1994large}.\footnote{See also \cite{greene2000econometric} for the simplification of the variance term in the case in which there is just-identification as we have.}

Since $\beta_0$ is in the interior of the compact
parameter space, so (i) is met. We see (iii)  holds since by
definition the subgraph counts are fractions between 0 and 1.
Both (ii) (that the empirical moment function is continuously
differentiable in a neighborhood of the true parameter) and
(iv) (that the gradient of the moment function is continuous at the
true parameter and that it satisfies a ULLN) follow from Lemma \ref{lem:analytic}.
Analytic functions are $C^\infty$, so there are arbitrarily many derivatives.  The bounds follow the expressions in (\ref{dql}) and following, which provide the rows of $H$, and the expressions are bounded  in magnitude by $3(n-3)+1$ (and note the $n$ is fixed and it is $R$ that is growing).
Finally, for (v), that $HH'$ is non-singular follows
from the linear
independence of rows of $H$ (see the expressions in (\ref{dql}) and following, which provide the rows of $H$, which dividing through by $(1-\beta_T)^{n-2}$ are clearly independent for interior parameters).
\end{proof}

\

\begin{proof}[{\bf Proof of Theorem  \ref{thm:Largenetwork}}]
	When obvious, we omit superscript $n$'s to simplify notation, but they are implicit.
	It follows that,
	\begin{equation}
		\label{conway}
  \betanDCell
		=\left( \frac{{S}^{true}_\ell}{\kappa_\ell \binom{n}{m_\ell}} +  \frac{\widetilde{S}^{true}_\ell- {S}^{true}_\ell}{\kappa_\ell \binom{n}{m_\ell}} +  \frac{\widetilde{S}_\ell(g) - \widetilde{S}^{true}_\ell}{\kappa_\ell \binom{n}{m_\ell}} \right)
	\end{equation}
	where ${S}^{true}_\ell$ is the number of truly generated such subgraphs (unobserved) on the whole network,
	and $\widetilde{S}^{true}_\ell$ is the number of truly generated such subgraphs (unobserved) on the networks that the after removing the links in $D_\ell(g)= \{ ij : \  ij\in g', g'\subset g,  g'\in G_{\ell'},  \ell'<\ell\}$, and $\binom{n}{m_\ell}$ counts the number of ways to pick $m_\ell$ nodes out of $n$.
	
	We show below that $|\widetilde{S}^{true}_\ell- {S}^{true}_\ell |= o_p( {S}^{true}_\ell)$ and  $|\widetilde{S}_\ell(g)- \widetilde{S}^{true}_\ell |=o_p( \widetilde{S}^{true}_\ell) $;
	which then also implies that $\widetilde{S}_\ell(g)- \widetilde{S}^{true}_\ell = o_p( {S}^{true}_\ell)$.
	Together with (\ref{conway}), these tell us that
	\begin{equation}
		\label{approx}
\betanDCell
  = \left( \frac{{S}^{true}_\ell}{\kappa_\ell \binom{n}{m_\ell}}  \right)(1+o_p(1)).
	\end{equation}
	
	Note that ${S}^{true}_\ell(g)$ has a binomial distribution with parameter $\beta^n_{0,\ell}$.
	From this and (\ref{approx}), it then follows that
	\[
	\frac{
 \betanDCell- \beta^n_{0,\ell}}{\sigma_{n,\ell}} \wkto \mathcal{N}(0, 1)
	\]
	where
	$\sigma_{n,\ell} =\left(\frac{\beta^n_{0,\ell}(1-\beta^n_{0,\ell})}{ \kappa_\ell \binom{n}{m_\ell}}\right)^{1/2}$. {This is because $\frac{{S}^{true}_\ell}{\kappa_\ell \binom{n}{m_\ell}}$ is a self-normalized sum of $\kappa_\ell \binom{n}{m_\ell}$ Bernoulli($\beta_{0,\ell}^n$)  independent random variables with standard deviation $\sigma_{n,\ell}$ by definition. Convergence in distribution in the sense of the Central Limit Theorem is a consequence of our own Corollary \ref{sparseCLT}, (though one could appeal to other CLTs on triangular arrays for independent random variables). So the result follows from Slutsky's Theorem since $1+o_p(1) \cvgto 1$.}

Next, note that the ${S}^{true}_\ell(g)$  are independent across $\ell$.
From  (\ref{approx}) it then follows that
\[
\sum_\ell  \alpha_{\ell}
\betanDCell
=
\sum_\ell  \alpha_{\ell}  \frac{{S}^{true}_\ell}{\kappa_\ell \binom{n}{m_\ell}} (1+o_p(1))
\]
for any  $\alpha \in [0,1]^k$, with $\sum_\ell  \alpha_{\ell}=1$.
Given the independence of ${S}^{true}_\ell$  across $\ell$,  it then follows that the random variable on the right hand side converges to being normal.\footnote{Note that under the assumption that $m_\ell> h_\ell$ there are a growing number
of observations of each subgraph.}
Then, by the Cram\'er-Wold Theorem, this implies that the
$
\betanDCell
$ are jointly normally distributed in the limit,
and so
\[
\Sigma_n^{-1/2}(\betanDC - \beta^n_0) \wkto \mathcal{N}(0, I)
\]
where $\Sigma_{n,\ell \ell} = \frac{\beta^n_{0,\ell}(1-\beta^n_{0,\ell})}{ \kappa_\ell \binom{n}{m_\ell}}$
and the off-diagonals are all 0.

Thus, to complete the proof we show that
$|\widetilde{S}^{true}_\ell- {S}^{true}_\ell |= o_p( {S}^{true}_\ell)$ and  $|\widetilde{S}_\ell(g)- \widetilde{S}^{true}_\ell | =o_p( \widetilde{S}^{true}_\ell) $.
	
To establish these claims, we establish two facts.  One is that the probability that some observed subgraph of type $\ell$ was incidentally generated (by subgraphs that are no larger than it in the ordering) is  $o(1)$.
This establishes that  $|\widetilde{S}_\ell(g)- \widetilde{S}^{true}_\ell | =o_p( \widetilde{S}^{true}_\ell) $.
The other is that the probability that a truly formed subgraph of type $\ell$ becomes part of an incidentally generated subgraph of type $\ell'<\ell$ is $o(1)$.
This establishes that $|\widetilde{S}^{true}_\ell- {S}^{true}_\ell |= o_p( {S}^{true}_\ell)$.
	
Let $z_{\ell}^{n}$  denote
the probability that any given $g' \in G^n_\ell$ is incidentally generated.
We now show that $z^n_\ell/\beta^n_{0,\ell} = o(1)$, which establishes the first claim.
Consider $g_{\ell}\in G_{\ell}^{n}$ and a (minimal, ordered) generating
subclass $\mathcal{C}=\left(\ell_{j},c_{j}\right)_{j\in J}$, and for which $\ell_j\geq \ell$ for all $j$.
	
We show that the probability $z_{\ell}^{n}$ that it is generated
by this subclass goes to zero relative to $\beta_{0,\ell}^{n}$,
and since there are at most $M_{\ell}\leq k^{m_{\ell}}$ such generating
classes, this implies that $z_{\ell}^{n}/\beta_{0,\ell}^{n}\rightarrow0$.
	
Consider a subnetwork in $G_{\ell_{j}}^{n}$. The probability of getting
at least one such network that has the $c_{j}$ nodes out of the $m_{\ell}$
in $g_{\ell}$ is no more than
\[
\kappa_{\ell_{j}}\binom{n}{m_{\ell_{j}}-c_{j}}\beta_{0,\ell_j}^{n}\leq\kappa_{\ell_{j}}n^{m_{\ell_{j}}-c_{j}}\beta_{0,\ell_j}^{n}.
\]
Then, we can bound the desired ratio by				
\begin{align*}
\frac{z_{\ell}^{n}}{\beta_{0,\ell}^{n}}  \leq  \frac{\prod_{j\in J}n^{m_{\ell_{j}}-c_{j}}\kappa_{\ell_{j}}\beta_{0,\ell_j}^{n}}{\beta_{0,\ell}^{n}}\leq
  \frac{n^{\sum_{j\in J} m_{\ell_{j}}- h_{\ell_{j}}- c_{j}}\prod_{j\in J}\kappa_{\ell_{j}}b_{0,\ell_j}}{n^{-h_\ell}b_{0,\ell}}\rightarrow0,
\end{align*}
where the last convergence is guaranteed by (\ref{ln2}).
	
The second claim follows from a similar calculation:  It is sufficient to show that the probability that some subgraph of type $\ell_{j'}$ becomes part of a subgraph of type $\ell< \ell_{j'}$ (where
$j'\in J$ is part of a generating class of some $\ell<\ell_{j'}$), compared to the likelihood of the formation of a subgraph of type $\ell_{j'}$, is of vanishing order.
Again, as there are a finite number of larger subgraphs, and a finite number of generating classes, it is sufficient to show this for a generic $\ell< \ell_{j'}$ and generic generating class.
In the following, the numerator is on the order of the expected number of incidentally formed subgraphs of type $\ell$ from this type of generating class, while the denominator is the expected number of
the subgraphs of type $\ell$.\footnote{We use Bachmann-Landau notation so $f(n)=\Theta(g(n))$ means that $f$ is bounded above and below asymptotically by $g$. That is, $\exists k_1>0, \exists k_2>0, \exists n_0$ such that $\forall n>n_0$, $k_1 g(n) \leq f(n) \leq k_2 g(n)$.}
\begin{align*}
\frac{\kappa_\ell\binom{n}{m_\ell}\prod_{j\in J}n^{m_{\ell_{j}} - c_{\ell_{j}}}\kappa_{\ell_{j}}\beta_{0,\ell_j}^{n}}{\kappa_{\ell_{j'}} \binom{n}{m_{\ell_{j'}}} \beta_{0,\ell_{j'}}^n}
		= \Theta\left( \frac{n^{m_\ell} n^{\sum_{j\in J} m_{\ell_{j}}-c_{\ell_{j}}-h_{\ell_{j}}}}{n^{m_{\ell_{j'}}-h_{\ell_{j'}}}}\right)\rightarrow0,
\end{align*}
where the convergence to 0 follows from (\ref{ln3}).

Finally, by multiplying and dividing by $n^{h_\ell}$ and collecting terms, 	it follows that
$|\bnDC - b_0| \cvgto 0$ and
$V^{-1/2}_n  \left(\bnDC-b_0\right)\wkto\mathcal{N}\left(0,I \right)$. To see this, observe
that
$\Sigma_n^{-1/2}(\betanDC - \beta^n_0) = V^{-1/2}_n  \left(\bnDC-b_0\right) $.
\end{proof}

\

\begin{proof}[{\bf Proof of Corollary  \ref{cor:Largenetwork}}]
Note that $\sum_j c_j \geq m_\ell + (|C| - 1) z$ for some  $z\geq 1$, where $z\geq 2$ if subgraphs are acyclic   (each subgraph in the incidental set overlaps the others with at least one node, and at least
two if the subgraphs are acyclic).  The conditions then simplify directly.
\end{proof}

\section{Proof of Central Limit Theorem \ref{clt} and Corollary \ref{sparseCLT} }\label{sec:normality}

\subsection{Stein's Lemma}

Our proof uses a lemma from \cite{stein1986approximate}. We review it here, both to be self-contained and also to explain why this approach to proving asymptotic normality is useful and distinct from other approaches in the networks literature.
The key observation of \cite{stein1986approximate} is that if a random variable satisfies
\[\E [f'(Y) - Y f(Y)] = 0\]
for every $f(\cdot)$ that is  continuously differentiable, then it must have a standard normal distribution.

This observation leads to a useful lemma, that allows one to characterize the Kolmogorov distance between a random variable $Y$ and a
standard normally distributed $Q$, denoted $d_K(Y,Q)$. We can bound this from
above by (a constant times) the Wasserstein distance,
$d_W(Y,Q)$, which itself
is bounded by the below expression. Convergence in Wasserstein distance implies convergence in distribution. Let $\left\Vert f \right\Vert$ denote the sup norm over the domain of $f$.

\begin{lem}[\cite{stein1986approximate,ross2011fundamentals}]\label{lem:Stein}
	If $Y$ is a random variable and $Q$ has the standard normal distribution, then
	\[
	d_W(Y,Q) \leq  \sup_{\{f :  ||f||, ||f''||\leq  2, ||f'||\leq \sqrt{2/\pi} \}
	}
	\left|
	\E [f'(Y) - Y f(Y)]\right|.
	\]
	Further
	\(
	d_K(Y,Q)\leq (2/\pi)^{1/4} (d_W(Y,Q))^{1/2}.
	\)
\end{lem}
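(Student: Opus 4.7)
The plan is to run Stein's method directly. The underlying characterization is that $Z \sim \mathcal{N}(0,1)$ iff $\E[f'(Z) - Z f(Z)] = 0$ for every absolutely continuous $f$ with $\E|f'(Z)| < \infty$. Given a test function $h$, this motivates solving the associated Stein ODE
\[
f'(x) - x f(x) = h(x) - \E h(Z),
\]
whose unique bounded solution can be written explicitly as
\[
f_h(x) = e^{x^2/2} \int_{-\infty}^{x} \bigl(h(y) - \E h(Z)\bigr) e^{-y^2/2} \, dy.
\]
Substituting $Y$ for $x$ and taking expectations gives the key identity $\E h(Y) - \E h(Z) = \E[f_h'(Y) - Y f_h(Y)]$, which converts the distributional comparison into the problem of bounding the right-hand side uniformly over a controlled class of $f_h$.

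For the Wasserstein bound, I would take the supremum over 1-Lipschitz $h$, so that $\|h'\|_\infty \leq 1$. The substantive step is to verify the three norm bounds $\|f_h\|_\infty \leq 2$, $\|f_h'\|_\infty \leq \sqrt{2/\pi}$, and $\|f_h''\|_\infty \leq 2$, which cut out exactly the class $\{f: \|f\|, \|f''\| \leq 2, \|f'\| \leq \sqrt{2/\pi}\}$ appearing in the statement. These follow from calculus on the explicit solution: differentiate under the integral, split the range at $x$, and apply the standard Gaussian tail estimate $\int_x^\infty e^{-y^2/2} dy \leq e^{-x^2/2}/x$ for $x>0$, optimizing over $x$ to extract the sharp constants. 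Once these norm bounds are in hand, the Wasserstein inequality follows by taking $\sup$ over $h$ in the identity above.

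For the Kolmogorov bound, I would use the standard smoothing argument. Given $t \in \Re$ and $\varepsilon > 0$, approximate $\mathbf{1}\{x \leq t\}$ by a 1-Lipschitz (up to factor $1/\varepsilon$) function $h_{t,\varepsilon}$ that is $1$ on $(-\infty, t]$, $0$ on $[t+\varepsilon, \infty)$, and linear in between. Writing
\[
\Pr(Y \leq t) - \Pr(Z \leq t) \leq \bigl(\E h_{t,\varepsilon}(Y) - \E h_{t,\varepsilon}(Z)\bigr) + \Pr\bigl(Z \in (t, t+\varepsilon)\bigr),
\]
the first summand is at most $\varepsilon^{-1} d_W(Y,Z)$ by rescaling $h_{t,\varepsilon}$ to be 1-Lipschitz, while the second is at most $\varepsilon/\sqrt{2\pi}$ since the standard normal density is bounded by $1/\sqrt{2\pi}$. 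A symmetric argument handles the lower bound. Optimizing the choice of $\varepsilon$ minimizes the sum and yields precisely the constant $(2/\pi)^{1/4}$.

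The main obstacle is not conceptual but bookkeeping: obtaining the sharp constants $2$ and $\sqrt{2/\pi}$ in the Stein-solution norm bounds requires careful Gaussian tail estimates and optimization, rather than any novel argument. The role of Stein's method itself is to turn a CLT-type question into an analytic question about one specific ODE, which is what makes the lemma a general-purpose tool when combined with the dependence structure we later impose on the array $\{X_\alpha^N\}$.
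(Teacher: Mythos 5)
The paper does not prove this lemma at all; it is imported verbatim from the cited references (it is Lemma 2.4 combined with Proposition 1.2 of Ross's survey), so there is no ``paper proof'' to match against. Your plan is the standard proof from exactly those sources: solve the Stein ODE explicitly, verify the three norm bounds $\|f_h\|\leq 2\|h'\|$, $\|f_h'\|\leq\sqrt{2/\pi}\|h'\|$, $\|f_h''\|\leq 2\|h'\|$ for Lipschitz $h$, take the supremum over $1$-Lipschitz test functions to get the Wasserstein bound, and then smooth indicators to pass to Kolmogorov distance. That is the right route and, modulo the calculus you correctly identify as the substantive work, it proves the lemma.

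One quantitative step does not deliver the claimed constant as written. In the Kolmogorov part you bound the error term by $\Pr\bigl(Z\in(t,t+\varepsilon)\bigr)\leq \varepsilon/\sqrt{2\pi}$, and then optimize $\varepsilon^{-1}d_W + \varepsilon/\sqrt{2\pi}$; the minimum of that expression is $2\,(2\pi)^{-1/4}\sqrt{d_W}=\sqrt{2}\,(2/\pi)^{1/4}\sqrt{d_W}$, which is a factor $\sqrt{2}$ worse than the lemma's constant. To get $(2/\pi)^{1/4}$ you must use the sharper estimate
\[
\E\bigl[h_{t,\varepsilon}(Z)\bigr]-\Pr(Z\leq t)=\E\bigl[h_{t,\varepsilon}(Z)\,\1\{t<Z<t+\varepsilon\}\bigr]\leq \frac{1}{\sqrt{2\pi}}\int_t^{t+\varepsilon}h_{t,\varepsilon}(x)\,dx=\frac{\varepsilon}{2\sqrt{2\pi}},
\]
exploiting that the linear ramp averages to $1/2$ over the transition interval rather than $1$; optimizing $\varepsilon^{-1}d_W+\varepsilon/(2\sqrt{2\pi})$ then gives exactly $(2/\pi)^{1/4}\sqrt{d_W}$. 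This is a one-line fix, not a conceptual gap, but without it the stated constant is not attained.
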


 Define
 \[
 {\bf Z}^N :=\sum_{\alpha \in \Lambda^N} Z_\alpha^N \text{ and } \overline{{\bf Z}}^{N}={\bf Z}^{N}/a_{N}^{1/2}.
 \]
	
	For ease of notation, we omit the superscript $N$s below. 	
	Recall that
\[ {\bf Z}_{-\mathcal{A}(\alpha,N)} =  \sum_{\eta \notin \mathcal{A}
 (\alpha,N)}Z_\eta
\]
and let
\[
\bar{{\bf Z}}_{-\mathcal{A}(\alpha,N)} := {\bf Z}_{-\mathcal{A}(\alpha,N)}/a^{1/2}.
\]

By this lemma, if we show that a normalized sum of random variables satisfies
\[
\sup_{\{f :  ||f||, ||f''||\leq  2, ||f'||\leq \sqrt{2/\pi} \}}
\left|
\E [f'(\overline{{\bf Z}}^{N}) - \overline{{\bf Z}}^{N} f(\overline{{\bf Z}}^{N})]\right| \rightarrow 0,
\]
then $d_W(\overline{{\bf Z}}^{N},Q)\rightarrow 0$, and so it must be asymptotically normally distributed.

\subsection{Proofs of Theorem \ref{clt} and Corollary \ref{sparseCLT}}

The following lemmas are useful in the proof.

\begin{lem} \label{lem:h_max}
	A solution to $\max_h \E[Zh(Y)]   \mbox{ s.t. }  |h(\cdot)|\leq 1$ (so the absolute value of $h$ is bounded by 1, where $h$ is measurable) is $h(Y) = \sign( \E[Z|Y] ),$
	where we break ties, setting $\sign( \E[Z|Y] )=1$ when $ \E[Z|Y] =0$.
\end{lem}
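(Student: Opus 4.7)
The plan is to reduce the optimization to a pointwise maximization by conditioning on $Y$, then apply an elementary inequality. The key identity is the tower property: since $h(Y)$ is $\sigma(Y)$-measurable and bounded, for any admissible $h$ we have
\[
\E[Z h(Y)] = \E\bigl[\E[Z h(Y) \mid Y]\bigr] = \E\bigl[h(Y) \, \E[Z \mid Y]\bigr].
\]
This rewrites the objective in terms of the known quantity $\E[Z \mid Y]$, and the constraint $|h| \leq 1$ is preserved.

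Next, I would bound the integrand pointwise. For any $y$ in the support of $Y$, the constraint $|h(y)| \leq 1$ yields
\[
h(y) \, \E[Z \mid Y = y] \leq |h(y)| \cdot \bigl|\E[Z \mid Y = y]\bigr| \leq \bigl|\E[Z \mid Y = y]\bigr|,
\]
with equality attained if and only if $h(y)$ equals $\sign(\E[Z \mid Y = y])$ whenever $\E[Z \mid Y = y] \neq 0$, and being arbitrary in $[-1,1]$ when $\E[Z \mid Y = y] = 0$. Taking expectations, $\E[Z h(Y)] \leq \E\bigl[|\E[Z \mid Y]|\bigr]$ for every admissible $h$.

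Finally, I would verify that the proposed choice $h(Y) = \sign(\E[Z \mid Y])$ (with the convention $\sign(0) = 1$) achieves this upper bound: substituting gives $h(Y) \cdot \E[Z \mid Y] = |\E[Z \mid Y]|$ pointwise (the tie-breaking convention is irrelevant here since the integrand vanishes on $\{\E[Z \mid Y] = 0\}$ regardless), so $\E[Z h(Y)] = \E\bigl[|\E[Z \mid Y]|\bigr]$. Since $\sign$ takes values in $\{-1, +1\}$, the candidate is measurable and satisfies $|h| \leq 1$, confirming optimality.

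There is no real obstacle; the result is essentially a pointwise $\ell_\infty$-versus-$\ell_1$ duality argument executed after conditioning. The only care needed is the tie-breaking convention on $\{\E[Z \mid Y] = 0\}$, which is immaterial to the value of the objective but must be specified to make $h$ a well-defined function, as the statement of the lemma already does.
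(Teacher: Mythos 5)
Your proof is correct and follows essentially the same route as the paper's: condition on $Y$ via the tower property to rewrite the objective as $\E[h(Y)\,\E[Z\mid Y]]$, then maximize pointwise under $|h|\leq 1$. The extra detail you supply on the equality case and the irrelevance of the tie-breaking convention is consistent with, and slightly more explicit than, the paper's one-line argument.
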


\begin{proof}[{\bf Proof}]  This can be seen from direct calculation:
	\[
	\E[Zh(Y)]  = \int_Y   \E[Z|Y]h(Y) d\Prob(Y)
	\]
	Maximizing \( \E[Z|Y]h(Y) \) pointwise when \( |h|\leq 1 \) is achieved by setting  \( h(Y)= {\rm sign} ( \E[Z|Y] ) \),
	and we
	break ties by setting $\sign( \E[Z|Y] )=1$ when $ \E[Z|Y] =0$, as that makes no difference in the integral.
\end{proof}

\

\begin{lem} \label{lem:XY_h}\( \E[XYh(Y)] \)  when \( h(\cdot) \) is measurable and bounded by \( \sqrt{\frac{2}{\pi}} \)  satisfies
	\[ \E[XYh(Y)] \leq   \sqrt{\frac{2}{\pi}} \E\left[XY \cdot \sign(\E[X|Y] Y)  \right].\]
\end{lem}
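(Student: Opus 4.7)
The plan is to reduce the claim to a direct application of Lemma \ref{lem:h_max}. First I would rescale the constraint: since $|h|\leq \sqrt{2/\pi}$, write $\widetilde h := h/\sqrt{2/\pi}$ so that $|\widetilde h|\leq 1$ and $\E[XYh(Y)] = \sqrt{2/\pi}\,\E[XY\widetilde h(Y)]$. This reduces matters to bounding $\E[XY\widetilde h(Y)]$ over measurable $\widetilde h$ with $|\widetilde h|\leq 1$.

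Next I would set $Z := XY$ and invoke Lemma \ref{lem:h_max}, which identifies the maximizer of $\E[Z\widetilde h(Y)]$ over $|\widetilde h|\leq 1$ as $\widetilde h(Y) = \sign(\E[Z\mid Y])$. Using the tower property together with the fact that $Y$ is $\sigma(Y)$-measurable, I would compute
\[
\E[Z\mid Y] = \E[XY\mid Y] = Y\,\E[X\mid Y],
\]
so that $\sign(\E[Z\mid Y]) = \sign(\E[X\mid Y]\,Y)$.

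Substituting back gives
\[
\E[XY\widetilde h(Y)] \leq \E\!\left[XY \cdot \sign(\E[X\mid Y]\,Y)\right],
\]
and multiplying through by $\sqrt{2/\pi}$ yields the stated inequality. There is no genuine obstacle here: the content of the lemma is simply the pointwise-optimal choice of sign from Lemma \ref{lem:h_max}, combined with the observation that $Y$ factors out of the conditional expectation. The tie-breaking convention $\sign(0)=1$ from Lemma \ref{lem:h_max} is harmless since the integrand vanishes on $\{\E[X\mid Y]\,Y = 0\}$.
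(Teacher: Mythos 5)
Your proof is correct and takes essentially the same route as the paper: the paper's own proof is the one-line remark that the result follows from Lemma \ref{lem:h_max} with $Z=XY$, and your write-up simply supplies the details it leaves implicit (the rescaling by $\sqrt{2/\pi}$, the identity $\E[XY\mid Y]=Y\,\E[X\mid Y]$, and the harmlessness of the tie-breaking convention). Nothing further is needed.
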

\begin{proof}[{\bf Proof}] This follows from Lemma \ref{lem:h_max}, setting $Z = XY$.
\end{proof}

\

\begin{proof}[{\bf Proof of Theorem \ref{clt}}]
	By Lemma \ref{lem:Stein}, it is sufficient to show that the appropriate sequence of random variables $\overline{{\bf Z}}^{N}$ satisfies
	\[
	\sup_{\{f :  ||f||, ||f''||\leq  2, ||f'||\leq \sqrt{2/\pi} \}
	}
	\left|
	\E [f'(\overline{{\bf Z}}^{N}) - \overline{{\bf Z}}^{N} f(\overline{{\bf Z}}^{N})]\right| \rightarrow 0.
	\]

	Observe that
	\begin{align*}
		\E\left[\overline{{\bf Z}} f\left( \overline{{\bf Z}} \right)\right] & =  \E\left[\frac{1}{a^{1/2}}\sum_{\alpha}Z_{\alpha}\cdot f\left( \overline{{\bf Z}}  \right)\right]  \\
		&= \E\left[\frac{1}{a^{1/2}}\sum_{\alpha}Z_{\alpha}\left(f\left( \overline{{\bf Z}}\right)-f\left(\bar{{\bf Z}}_{-\mathcal{A}(\alpha,N)} \right)\right)\right] \\
		&+ \E\left[\frac{1}{a^{1/2}}\sum_{\alpha}Z_{\alpha}\cdot f\left( \bar{{\bf Z}}_{-\mathcal{A}(\alpha,N)} \right)\right].
	\end{align*}

	The first step is to show that
	\[
	\left\vert \E\left[\frac{1}{a^{1/2}}\sum_{\alpha}Z_{\alpha}\cdot f\left( \bar{{\bf Z}}_{-\mathcal{A}(\alpha,N)} \right)\right] \right\vert = o(1),
	\]
	by employing condition \eqref{three}.

	In order to do this, we can expand the term to
	\begin{align*}
		\left|{\rm E}\left[\frac{1}{a_{N}^{1/2}}\sum_{\alpha\in\Lambda}Z_{\alpha}\cdot f\left(    \bar{{\bf Z}}_{-\mathcal{A}(\alpha,N)}   \right)\right]\right| & =  \left|{\rm E}\left[\frac{1}{a_{N}^{1/2}}
		\sum_{\alpha\in\Lambda}Z_{\alpha}\cdot f\left(\frac{1}{a_{N}^{1/2}}\sum_{\eta\notin\mathcal{A}
  \left(\alpha,N\right)}Z_{\eta}\right)\right]\right|\\
		& \leq  \underbrace{\left|{\rm E}\left[\frac{1}{a_{N}^{1/2}}\sum_{\alpha\in\Lambda}Z_{\alpha}\cdot f\left(0\right)\right]\right|}_{=0\mbox{ since }{\rm E}\left[Z_{\alpha}\right]=0.}\\
		&   +\left|{\rm E}\left[\frac{1}{a_{N}^{1/2}}\sum_{\alpha\in\Lambda}Z_{\alpha}\cdot\left(\frac{1}{a_{N}^{1/2}}\sum_{\eta\notin\mathcal{A}
  \left(\alpha,N\right)}Z_{\eta}\right)\cdot f'\left( \widehat{\bar{{\bf Z}}}_{-\mathcal{A}(\alpha,N)} \right)\right]\right|
	\end{align*}
	where $\widehat{\bar{{\bf Z}}}_{-\mathcal{A}(\alpha,N)}$ is an intermediate value between
	$\bar{{\bf Z}}_{-\mathcal{A}(\alpha,N)}$ and $0$.

	To bound the second term, we apply Lemma \ref{lem:XY_h} to conclude that
	\begin{align*}
	\left|\frac{{\rm E}\left[\sum_{\alpha\in\Lambda;\eta\notin\mathcal{A}
 \left(\alpha,N\right)}Z_{\alpha}Z_{\eta} f'\left(  \widehat{\bar{{\bf Z}}}_{-\mathcal{A}(\alpha,N)}   \right)\right]}{a_{N}}\right|
	\leq
	\sqrt{\frac{2}{\pi}}\left|\frac{{\rm E}\left[\sum_{\alpha\in\Lambda;\eta\notin\mathcal{A}
 \left(\alpha,N\right)}Z_{\alpha}Z_{\eta}\cdot \sign\left({\rm E}\left[Z_{\alpha}Z_{\eta}| \ Z_{\eta}\right]\right)\right] }{a_{N}}\right|.
	\end{align*}
	Thus, it is sufficient that
	\begin{equation}
		\label{threeprime}
		{\rm E}\left[\sum_{\alpha\in\Lambda;\eta\notin\mathcal{A}
  \left(\alpha,N\right)}Z_{\alpha}Z_{\eta}\cdot \sign\left({\rm E}\left[Z_{\alpha}Z_{\eta}| Z_{\eta}\right]\right)\right] =o(a_{N})
	\end{equation}
 or
 $$
 {\rm E}\left[\sum_{\alpha\in\Lambda;\eta\notin\mathcal{A}
  \left(\alpha,N\right)}\left\vert{\rm E}\left[Z_{\alpha}Z_{\eta}| Z_{\eta}\right]\right\vert\right] =o(a_{N})
	$$
	to ensure that
	\[
	\left|\frac{\E\left[\sum_{\alpha\in\Lambda;\eta\notin\mathcal{A}
 \left(\alpha,N\right)}Z_{\alpha}\cdot Z_{\eta}\cdot f'\left( \widehat{\bar{{\bf Z}}}_{-\mathcal{A}(\alpha,N)}  \right)\right]}{a_{N}}\right| = o(1),
	\]
	which is ensured by (\ref{three}) (noting that $ \widehat{\bar{{\bf Z}}}_{-\mathcal{A}(\alpha,N)} $ is a function of ${\bf Z}_{-\mathcal{A}(\alpha,N))}$).

	Next, the second step of the proof
	is to apply a similar reasoning as in \cite*{ross2011fundamentals} with an $o(1)$ adjustment (from the first step above), to write
	\begin{align*}
		\left| \E \left[f'(\overline{{\bf Z}}) - \overline{{\bf Z}} f(\overline{{\bf Z}})\right]\right|
		& \leq   \left| \E  \left[ \frac{1}{a^{1/2}} \sum_{\alpha} Z_\alpha (f(\overline{{\bf Z}}) - f( \bar{{\bf Z}}_{-\mathcal{A}(\alpha,N)}  ) - (\overline{{\bf Z}}- \bar{{\bf Z}}_{-\mathcal{A}(\alpha,N)}) f'(\overline{{\bf Z}})\right]\right|\\
		& + \left| \E  \left[ f'(\overline{{\bf Z}} ) \left( 1- \frac{1}{a^{1/2}}\sum_{\alpha} Z_\alpha (\overline{{\bf Z}} -  \bar{{\bf Z}}_{-\mathcal{A}(\alpha,N)})\right)\right]\right| + o(1),
	\end{align*}
	and then to show that the right hand side of this expression goes to 0.
	
	By a Taylor series approximation and given the bound on the derivatives of $f$,
	it follows that
	\begin{align*}
		\left| \E \left[f'(\overline{{\bf Z}} ) -  \overline{{\bf Z}} f(\overline{{\bf Z}})\right]\right|
		& \leq    \frac{||f''||}{2a^{1/2}} \sum_{\alpha} \E  \left[ \left|Z_\alpha \right| \left( \overline{{\bf Z}} - \bar{{\bf Z}}_{-\mathcal{A}(\alpha,N)}\right)^2\right] \\ &+ \left| \E  \left[ f'(\overline{{\bf Z}} ) \left( 1- \frac{1}{a^{1/2}}\sum_{\alpha} Z_\alpha (\overline{{\bf Z}} - \bar{{\bf Z}}_{-\mathcal{A}(\alpha,N)})\right)\right]\right| + o(1).
	\end{align*}
	
	Let us denote the first two terms on the right hand side as $A_1$ and $A_2$ respectively.
	We bound each, and show that each is $o(1)$, which then completes the proof. 	
	\begin{align*}
		A_1
		&=    \frac{||f''||}{2a^{3/2}} \sum_{\alpha} \E  \left[ \left|Z_\alpha\right| \left(\sum_{\eta\in \mathcal{A}
  (\alpha,N)} Z_\eta\right)^2\right]
		= \frac{||f''||}{2a^{3/2}} \sum_{\alpha; \eta\in \mathcal{A}
  (\alpha,N),\gamma\in\mathcal{A}
  (\alpha,N)} \E  \left[|Z_\alpha| Z_\eta Z_\gamma \right] = o(1),
	\end{align*}
	where the last equality follows from (\ref{one}).

	Next,
	\begin{align*}
		A_2
		& = \left| \E  \left[ f'(\overline{{\bf Z}}) \left( 1- \frac{1}{a^{1/2}}\sum_{\alpha} Z_\alpha (\overline{{\bf Z}}-\bar{{\bf Z}}_{-\mathcal{A}(\alpha,N)})\right)\right]\right|  = \frac{1}{a} \left| \E  \left[ f'(\overline{{\bf Z}})\left( a - \sum_{\alpha, \eta\in \mathcal{A}
  (\alpha,N)} Z_\alpha Z_\eta\right)\right]\right|\\
		& \leq  \frac{||f'||}{a}  \E  \left| \left( a - \sum_{\alpha, \eta\in \mathcal{A}
  (\alpha,N)} Z_\alpha Z_\eta \right)\right|  =  \frac{||f'||}{a}  \E  \left| \left( \sum_{\alpha, \eta\in \mathcal{A}
  (\alpha,N)} Z_\alpha Z_\eta - \E\left[ Z_\alpha Z_\eta \right]\right)\right|\\
		& \leq   \frac{\sqrt{2}}{a\sqrt{\pi}}  \left( \var \left[\sum_{\alpha, \eta\in \mathcal{A}
  (\alpha,N)}Z_\alpha Z_\eta \right]\right)^{1/2}  =   \frac{\sqrt{2}}{a\sqrt{\pi}} \left( \sum_{\alpha,\alpha',\eta\in\mathcal{A}
  \left(\alpha,N\right),\eta'\in\mathcal{A}
  \left(\alpha',N\right)} \cov\left(Z_{\alpha} Z_{\eta},Z_{\alpha'}Z_{\eta'}\right)\right)^{1/2},
	\end{align*}
	where the last inequality follows by Cauchy-Schwarz.  The final expression is $o(1)$ by (\ref{two}).\end{proof}

\

\begin{proof}[{\bf Proof of Corollary \ref{sparseCLT}}]
	We apply Theorem \ref{clt} to the case in which $\mathcal{A}(\alpha, N)=\{ \alpha \}$.   \eqref{one} becomes
	\[
	\sum_{\alpha} \E  \left[|Z_\alpha| ^3 \right]= o\left(\left(\sum_{\alpha } \var \left(Z_{\alpha}\right)\right)^{3/2}\right)
	\]
	which becomes\footnote{Recall that it is assumed that $\E  \left[|Z_\alpha| ^3 \right]/\E \left[Z_{\alpha}^2\right]^{3/2}$ is bounded above (and necessarily below via Jensen's Inequality).}
	\[
	\sum_{\alpha} \var \left(Z_{\alpha}\right)^{3/2} = o\left(\left(\sum_{\alpha } \var \left(Z_{\alpha}\right)\right)^{3/2}\right),
	\]
	which is satisfied directly, given that $\sum_{\alpha } \var \left(Z_{\alpha}\right)$ is growing without bound.
	
	(i) and (ii) imply \eqref{two} and (\ref{three}), respectively, (noting that the sign is always nonnegative by the supposition of the corollary).

We now show that for Bernoulli random variables with uniformly vanishing means, (i) holds whenever  (ii)  holds.  Observe that \begin{align*}
	{\rm cov}\left(Z_{\alpha}^{2},Z_{\eta}^{2}\right) & ={\rm cov}\left(\left(X_{\alpha}-\mu_{\alpha}\right)^{2},\left(X_{\eta}-\mu_{\eta}\right)^{2}\right)\\
	& ={\rm cov}\left(X_{\alpha}^{2}-2X_{\alpha}\mu_{\alpha}+\mu_{\alpha}^{2},X_{\eta}^{2}-2X_{\eta}\mu_{\eta}+\mu_{\eta}^{2}\right)\\
	& ={\rm cov}\left(X_{\alpha}^{2},X_{\eta}^{2}\right)-2\mu_{\alpha}{\rm cov}\left(X_{\alpha},X_{\eta}^{2}\right)-2\mu_{\eta}{\rm cov}\left(X_{\alpha}^{2},X_{\eta}\right)+4\mu_{\alpha}\mu_{\eta}{\rm cov}\left(X_{\alpha},X_{\eta}\right).
\end{align*}
Because they are Bernoulli, ${\rm cov}\left(X_{\alpha}^{k},X_{\eta}^{k'}\right)={\rm cov}\left(X_{\alpha},X_{\eta}\right)$
for any $k,k'>0$. Since the means tend to zero, this means
\[
{\rm cov}\left(Z_{\alpha}^{2},Z_{\eta}^{2}\right)={\rm cov}\left(X_{\alpha},X_{\eta}\right)\left(1+o\left(1\right)\right).
\]
Therefore satisfying (ii) implies (i) (noting also that  $a_N\geq 1$ so $a_N^2\geq a_N$).%
\end{proof}

\clearpage
\newpage

\setcounter{page}{1}

\begin{center}
{\bf {\large {\sc
Supplementary Online Appendix for:    \\
``A Network Formation Model Based on Subgraphs,''  by Chandrasekhar and Jackson}}}
\end{center}

\section{Proof of Proposition \ref{prop:LT_SUGM1-2}}\label{sec: networks_normal}
Let the moment (normalized) be
\[
\widehat{M}\left(\beta\right)=R_{n}S\left(g\right)-\E_{\beta}\left[R_{n}S\left(g\right)\right],
\]
where $R_{n}={\rm diag}\left\{ n^{h_{L}},n^{h_{T}}\right\} $ properly
normalizes the moments. So, for example, for links we have
\begin{align*}
\widehat{M}_{L}^{n}\left(\beta\right) & =\frac{n^{h_{L}}}{\binom{n}{2}}\sum_{i<j}\left\{ g_{ij}-{\rm E}_{\beta}g_{ij}\right\} =\frac{n^{h_{L}}}{\binom{n}{2}}\sum_{i<j}\left\{g_{ij}-q_{L}\left(\beta\right)\right\}.
\end{align*}
The objective function is
\[
\widehat{Q}_{n}\left(g,\beta\right):=\widehat{M}^{n}\left(\beta\right)'\widehat{M}^{n}\left(\beta\right).
\]
And we need
\[
\bar{Q}_{n}\left(\beta\right)=\E\left[\widehat{M}^{n}\left(\beta\right)\right]'\E\left[\widehat{M}^{n}\left(\beta\right)\right],
\]
which is the non-stochastic analogue.

\subsection{Identification}

We   prove identification for sequences of parameters, in the
 sense of identifiable uniqueness (Lemma 3.1 of \cite{potsherp1997}). See also ``Assumption ID'' in  \cite{andrews1990generic} and \cite{potscher1991basic}. The parameters $\beta^{n}_0$ are   identifiably unique in the sense
that for any $\varepsilon>0$
\[
{\rm lim}{\rm inf}_{n\rightarrow 0}\left[{\rm inf}_{\beta\in\mathcal{B}:\ \delta\left(\beta,\beta^n_0\right)>\varepsilon}\left|\overline{Q}^{n}\left(\beta\right)-\overline{Q}^{n}\left(\beta^n_0\right)\right|\right]>0.
\]

We take the usual Euclidean metric $\lVert \bnMD - b_0 \rVert$ to calculate the distance between two vectors $\bnMD, b_0$. Note that, since in our setting while $b_0$ is constant\footnote{It is possible to further allow $b_0 = b_0^n$ to depend on $n$ with restrictions on the uniform boundedness of these parameters from above and below. We do not pursue this here for clarity's sake.}  in $n$, $\beta^n_0$---the subgraph probability vector---has entries that tend to zero at hypothesized rates.

It is useful to note that in our setting, not only will we show that $\lVert \bnMD - b_0 \rVert \cvgto 0$ but in fact for a metric $\delta(\cdot,\cdot)$, defined below, we have $\delta(\betanMD, \beta_0^n) \cvgto 0$. In fact, the former follows from the latter mechanically as long as the parameter space is compact which we discuss below.

To see why this is useful, first consider the degenerate estimator $\betanMD = 0$ and observe $\lVert 0 - \beta^n_0 \rVert \cvgto 0$. That is, for a sequence of models in which the probability of any given subgraph tends to zero---mechanically true in any sparse random graph model---by definition the zero vector is a consistent estimator for the probability parameters, though this is uninformative.

The right metric for this sequence is to
set\footnote{We take $0/0=0$.}
\begin{equation}
\label{delta}
\delta(x,y) := \max_\ell \left[\frac{|x_\ell-y_\ell |}{\max(|x_\ell |,|y_\ell |)}\right],
\end{equation}
then the requirement becomes
\[
\delta(\betanMD, \beta^n_0) = \max_\ell \frac{ \left\vert \betanMDell-\beta^{n}_{0,\ell} \right\vert }{\max(\left\vert \betanMDell\right\vert ,\left\vert \beta^{n}_{0,\ell} \right\vert )} \cvgto 0.
\]
This requires that $\betanMDell$ and $\beta^n_{0,\ell}$ be proportional to each other far enough along the sequence.  Thus, if $\beta^n_0$ approaches 0,
saying that $\betanMDell$ is a good estimate of it under this metric also requires that $\betanMDell$ approach 0 at the same rate,  which is a much stronger
conclusion than just requiring that the two parameters converge in the usual Euclidean metric.

Returning to our degenerate estimator $\betanMDell = 0$, note
\[
\delta(0,\beta^n_0) = \max_\ell \frac{n^h_\ell \cdot \left\vert0-\beta^n_{0,\ell}\right\vert}{b_{0,\ell}}
 = \max_\ell \frac{\left\vert 0 - b_{0,\ell} \right\vert}{b_{0,\ell}} = 1
 \] which does not tend to zero; so in   the $\delta$ metric this an inconsistent estimator.

 Finally, if  $\delta(\betanMD,\beta^n_0) \cvgto 0$ then  $|\bnMD - b_0| \cvgto 0$ so any proof of consistency in $\beta$-space implies  and so the results in the paper follow as corollaries to the results below. To see this observe
\begin{align*}
\delta\left(\betanMD,\beta^{n}_0\right) & =\max_{\ell}\frac{\left|\betanMDell-\beta_{0,\ell}^{n}\right|}{\max\left(\left|\betanMDell\right|,\left|\beta_{0,\ell}^{n}\right|\right)}=\max_{\ell}\frac{n^{h_{\ell}}\left|\betanMDell-\beta_{0,\ell}^{n}\right|}{n^{h_{\ell}}\max\left(\left|\betanMDell\right|,\left|\beta_{0,\ell}^{n}\right|\right)}\\
 & =\max_{\ell}\frac{\left|\bnMDell-b_{0,\ell}\right|}{\max\left(\left|\bnMDell\right|,\left|b_{0,\ell}\right|\right)}\geq\max_{\ell}\frac{\left|\bnMDell-b_{0,\ell}\right|}{\overline{D}}.
\end{align*}
since by assumption $b_{\ell}$ lives in a compact set with maximum
$\overline{D}$. Since $\delta\left(\betanMD,\beta^{n}_0\right)\cvgto0$
then so must $\max_{\ell}\frac{\left|\bnMDell-b_{0,\ell}\right|}{\overline{D}}$,
proving the result.

\begin{prop}
\label{prop:identifiablyunique}
Consider a links and triangles SUGM with associated parameters
$\beta_{0,L}^n,\beta_{0,T}^n= \left(\frac{b_{0,L}}{n^{h_L}}, \frac{b_{0,T}}{n^{h_T}} \right)$   with  \(	h_{L}\in\left(\frac{1}{2},2\right)\text{ and }h_{T}\in\left[h_{L}+1,3h_{L}\right], \text{ with } h_T<3\). Then $\beta^n_{0,L},{\beta}^n_{0,T}$ are identifiably unique.
\end{prop}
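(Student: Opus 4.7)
The plan is to exploit the near-linear structure of the scaled moment map after reparametrizing via $b_\ell := n^{h_\ell}\beta_\ell$, and to partition the parameter space into a compact region (handled by continuity and injectivity of a limit map) plus an unbounded tail (handled by monotone lower bounds on $q_L$ and $q_T$).

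First, since $\overline{Q}^n(\beta_0^n)=0$, it suffices to bound $|R_n(q(\beta)-q(\beta_0^n))|^2$ uniformly from below over $\delta(\beta,\beta_0^n)>\varepsilon$. I would use the identity $\delta(\beta,\beta_0^n)=\delta(b,b_0)$ (the scaling cancels in numerator and denominator), with $b_0=(b_{0,L},b_{0,T})$ the fixed constants from the hypothesis, and study the scaled moment map $G_n(b):=R_n q(\beta(b))$. Starting from $q_L=\beta_T+(1-\beta_T)\tilde q_L$, $q_T=\beta_T+(1-\beta_T)\tilde q_L^3$, and $\tilde q_L=\beta_L+(1-\beta_L)(1-(1-\beta_T)^{n-3})$ from the proof of Proposition~\ref{prop:Links-Triangles-SUGM}, together with the expansion $(1-\beta_T)^{n-3}=1-(n-3)\beta_T(1+o(1))$ valid under $h_T>1$, I would show that on any compact $[0,M]^2$, uniformly in $b$,
\[
G_n(b)\ \longrightarrow\ G(b):=(b_L+c\,b_T,\ b_T),\qquad c:=\mathbf{1}\{h_T=h_L+1\}.
\]
The cubic contribution $n^{h_T}(1-\beta_T)\tilde q_L^3=O(n^{h_T-3h_L})$ vanishes by $h_T<3h_L$, and the crossover $n^{h_L-h_T}b_T$ vanishes by $h_L<h_T$. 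The limit $G$ is linear with determinant one, hence an injective bijection of $\mathbb{R}^2$.

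Second, fix $M$ large, to be chosen. On the compact set $K_M=\{b\in[0,M]^2:\delta(b,b_0)\geq\varepsilon\}$, injectivity of $G$ and $b_0\notin K_M$ give $\kappa:=\min_{b\in K_M}|G(b)-G(b_0)|>0$; combined with uniform convergence on $[0,M]^2$, this yields $|G_n(b)-G_n(b_0)|\geq\kappa/2$ for all large $n$. For the tail where $b_L\vee b_T>M$, I would invoke the elementary monotonicity bounds $q_L(\beta)\geq\beta_L$ and $q_T(\beta)\geq\beta_T$, so $n^{h_L}q_L\geq b_L$ and $n^{h_T}q_T\geq b_T$; since $G_n(b_0)\to G(b_0)$ stays bounded, choosing $M>2|G(b_0)|+1$ forces $|G_n(b)-G_n(b_0)|\geq M/2$ on the tail. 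Combining the two cases and taking $\liminf_n$ delivers the identifiable uniqueness bound.

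The main obstacle I expect is the uniform-in-$b$ Taylor control in the first step: the approximations $(1-\beta_T)^{n-3}\approx 1-(n-3)\beta_T$ and the bound on $\tilde q_L^3$ must hold with error terms that are uniform over $b\in[0,M]^2$. This is most delicate in the borderline case $h_T=h_L+1$, where the incidental-link term $(n-3)\beta_T$ enters $n^{h_L}\tilde q_L$ at exactly the same order as $\beta_L$ and thereby genuinely mixes $b_L$ and $b_T$ in $G$, so one must verify that the mixing coefficient converges to precisely $1$. Once the uniform linear limit is in hand, the assumption $\underline{D}<b_{0,L},b_{0,T}$ keeps $b_0$ in the open positive orthant, guaranteeing that $\delta(\cdot,b_0)\geq\varepsilon$ carves out a genuinely $b_0$-separated compact set, and the rest of the argument reduces to a standard compactness-plus-continuity lower bound.
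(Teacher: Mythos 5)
Your argument is correct, and it reaches the conclusion by a genuinely different route than the paper's own proof. The paper works with the partial derivatives of $(q_L,q_T)$ in $(\beta_L,\beta_T)$, establishes their orders ($\partial q_L/\partial\beta_L=1-o(1)$, $\partial q_L/\partial\beta_T=\Theta(n)$, $\partial q_T/\partial\beta_L=\Theta(n^{-2h_L})$, $\partial q_T/\partial\beta_T=\Theta(1)$), and then runs a two-case contradiction argument on which component of $b-b_0$ is large: if $|b_T-b_{0,T}|$ is bounded away from zero the triangle moment separates, and otherwise $|b_L-b_{0,L}|$ must be, in which case the link moment separates because the cross term $(b_T-b_{0,T})\Theta(n^{h_L+1-h_T})$ is then negligible. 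You instead identify the explicit uniform limit $G_n\to G$ with $G(b)=(b_L+c\,b_T,\ b_T)$, $c=\mathbf{1}\{h_T=h_L+1\}$, and convert the problem into injectivity of a unimodular linear map plus a compactness/continuity lower bound and a monotone tail bound. Your limit computation is right: $n^{h_L}\beta_T=b_Tn^{h_L-h_T}\to 0$, $n^{h_L}\bigl(1-(1-\beta_T)^{n-3}\bigr)=b_T(n-3)n^{h_L-h_T}+O(n^{h_L+2-2h_T})$ converges uniformly on $[0,M]^2$ (the error exponent is negative because $h_T\geq h_L+1$), and $n^{h_T}\widetilde{q}_L^{\,3}=O(n^{h_T-3h_L})\to 0$; so the borderline case $h_T=h_L+1$ you flag does produce mixing coefficient exactly $1$ and the limit map remains invertible. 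What your approach buys is a cleaner, global statement (it covers the whole nonnegative orthant of $b$-values via the tail bound $n^{h_L}q_L\geq b_L$, $n^{h_T}q_T\geq b_T$, whereas the paper implicitly restricts the infimum to $b\in[\underline{D},\overline{D}]^2$) and it avoids the subsequence extraction; what the paper's derivative bookkeeping buys is that the same computations are reused verbatim for the Lipschitz/stochastic-equicontinuity step in the consistency proof and for the gradient limits in the asymptotic normality proof, which your linear-limit formulation would have to redo.
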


\begin{proof}[{\bf Proof of Proposition \ref{prop:identifiablyunique}}]

Write\footnote{We allow the constants to depend on $n$ to capture that some applications have both rates and constants that adjust with scale, and we may want to fit across data of networks
of varying sizes.  But this is largely semantic, as estimating any particular network has only one $b$, and one can ignore the superscripts on the $b$s if one likes.}
\[
\beta^n = \left(\frac{b_L}{n^{h_L}}, \frac{b_T}{n^{h_T}}\right) \ \ \ \ \  \beta^n_0 = \left(\frac{b_{0,L}}{n^{h_L}}, \frac{b_{0,T}}{n^{h_T}}\right),
\]
where $b_L,b_T,b_{0,L},b_{0,T}$ lie in $[\underline{D}, \overline{D}]$

Let $r^n_L=1/n^{h_L}$ and $r^n_T=1/n^{h_T}$.

First, note that $ 1- (1-\beta_{T}^n)^{x}$ is the probability that some link is formed as part of at least one triangle out of $x$ possible triangles that could have it as an edge (independently of whether it also forms directly).

Next, note that the probability that a link forms conditional on some particular triangle that it could be a part of {\sl not forming} is\footnote{That is,
consider a given pair of nodes $i,j$ and a third node $k$. Consider the probability that link $ij$ is formed conditional on triangle $ijk$ not forming directly as a triangle.}
\begin{equation}
\label{tildeq2}
\widetilde{q}_L^n \left(\beta_{L}^n,\beta_{T}^n\right)=   \beta_L^n  + (1 -\beta_L^n )\left(1- (1-\beta_{T}^n)^{n-3}\right).
\end{equation}
In what follows, we omit the notation indicating the dependence of $\widetilde{q}_L^n $ on $\left(\beta_{L}^n,\beta_{T}^n\right)$.
So, we can write the probability of some triangle forming as
\begin{equation}
\label{ETdef2}
q_T^n:= \E_{\beta_L^n,\beta_T^n}\left[ S_{T}(g)  \right]   = \beta_T^n + (1-\beta_T^n) (\widetilde{q}_L^{n})^3,
\end{equation}
where the first expression $\beta_T^n$ is the probability that the triangle is directly generated, and then the second expression $(1 -\beta_T^n )(\widetilde{q}_L^{n})^3$ is the probability that it was not generated directly, but instead all three of the edges formed on their own (which happen independently, conditional on the triangle not forming, which has
probability $(\widetilde{q}_L^{n})^3$).

It is useful to note that since $\beta_L^n= o(1)$, $(1 -\beta_L^n )\rightarrow 1$ and
since $h_T>1$,
 $|(1-\beta_{T}^n)^{n-3} - (1- \frac{b_{0,T}}{n^{h_T-1}})|\rightarrow 0$.
Thus,
\[
\widetilde{q}_L^n = \Theta\left(\frac{1}{n^{h_L}} + \frac{1}{n^{h_T-1}}\right) = \Theta\left( \frac{1}{n^{h_L}}\right)
\]
where the second equality follows since $h_T\geq h_L+1$.

Next, note that the probability that a link forms is
\begin{equation}
\label{ELdef}
q_L^n:= \E_{\beta_L^n,\beta_T^n}\left[ S_{L}(g)  \right] = \beta_L^n  + (1 -\beta_L^n )\left(1- (1-\beta_{T}^n)^{n-2}\right),
\end{equation}
where the first expression $\beta_L^n$ is the probability that the link is directly generated, and then the second expression $(1 -\beta_L^n )\left(1- (1-\beta_{T}^n)^{n-2}\right)$ is the probability that it was not generated directly, but instead appeared as an edge in some triangle (and there are $n-2$ such possible triangles).

It is also useful to write this in a different way:
\begin{equation}
\label{ELdef3}
q_L^n:= \E_{\beta_L^n,\beta_T^n}\left[ S_{L}(g)  \right] = \beta_T^n + (1-\beta_T^n) \widetilde{q}_L^{n},
\end{equation}
noting that a link could form as part of a triangle that it is part of, or else form conditional upon that triangle not forming.

The following derivative expressions allow us to calculate rates on changes in expected counts of links and triangles as a function of parameters. In turn, this is useful in an approximation calculation below.
\begin{equation}
\label{dql}
\frac{ \partial \widetilde{q}_L^{n}}{\partial \beta_L^n}   = (1-\beta_T^n)^{n-3} \ \ \ \ \ \ \  \frac{ \partial \widetilde{q}_L^{n}}{\partial \beta_T^n}   =(n-3)(1-\beta_L^n) (1-\beta_T^n)^{n-4}.
\end{equation}

\[
\frac{ \partial q_L^n}{\partial \beta_L^n}   =  (1-\beta_T^n)^{n-2}.
\]

\[
\frac{ \partial q_T^n}{\partial \beta_L^n}   =  3(1-\beta_T^n) (\widetilde{q}_L^{n})^2   \frac{ \partial \widetilde{q}_L^{n}}{\partial \beta_L^n} = 3 (\widetilde{q}_L^{n})^2 (1-\beta_T^n)^{n-2}.
\]

\[
\frac{ \partial q_L^n}{\partial \beta_T^n}   = 1 -\widetilde{q}_L^{n} + (1-\beta_T^n)  \frac{ \partial \widetilde{q}_L^{n}}{\partial \beta_T^n} = 1 -\widetilde{q}_L^{n} + (n-3)(1-\beta_L^n) (1-\beta_T^n)^{n-3}.
\]

\[
\frac{ \partial q_T^n}{\partial \beta_T^n}   = 1 -(\widetilde{q}_L^{n})^3 + 3(1-\beta_T^n) (\widetilde{q}_L^{n})^2    \frac{ \partial \widetilde{q}_L^{n}}{\partial \beta_T^n} = 1 -(\widetilde{q}_L^{n})^3 + 3 (\widetilde{q}_L^{n})^2(n-3)(1-\beta_L^n) (1-\beta_T^n)^{n-3}.
\]

Given that $\beta_L^n=o(1)$ (since $h_L>0$), $\beta_T^n=o(1/n)$ (since $h_T>1$), and $\widetilde{q}_L^{n}=\Theta\left(\frac{1}{n^{h_L}}\right)$
the above expressions imply that:

\begin{equation}
\label{dqll}
\frac{ \partial q_L^n}{\partial \beta_L^n}   =  1-o(1),
\end{equation}

\begin{equation}
\label{dqtl}
\frac{ \partial q_T^n}{\partial \beta_L^n}   =  \Theta\left( \frac{ 1}{n^{2h_L}}\right),
\end{equation}

\begin{equation}
\label{dqlt}
\frac{ \partial q_L^n}{\partial \beta_T^n}   = n-2-o(1),
\end{equation}

\begin{equation}
\label{dqtt}
\frac{ \partial q_T^n}{\partial \beta_T^n}   = \Theta\left( \max[1,n^{1-2h_L}]\right).
\end{equation}

Note that (\ref{dqll})-(\ref{dqtt}) hold for any parameters $h_L>0$ and $3h_L> h_T\geq h_L+1$, and thus uniformly for any compact set of $\beta^n$ that have $h_L,h_T$ satisfying these inequalities. So
as long as we restrict attention to $\beta^n$ in that compact set, we have the same order derivatives and so then we approximate:
\begin{equation}
\label{almost2}
\frac{\E_{\beta^n}\left[ {S}_{L}(g) \right] - \E_{\beta^n_{0}}\left[ {S}_{L}(g)  \right]}{r^n_L} \approx  n^{h_L}\left[\frac{b_L-b_{0,L}}{n^{h_L}} + (n-2) \frac{b_T-b_{0,T}}{n^{h_T}}  \right]
\end{equation}
\[
\approx  {b_L-b_{0,L}}  +  ({b_T-b_{0,T}}) \Theta(n^{h_L+1-h_T}) ,
\]
and
\begin{equation}
\label{almost}
\frac{\E_{\beta^n}\left[ {S}_{T}(g) \right] - \E_{\beta^n_{0}}\left[ {S}_{T}(g)  \right]}{r^n_T} \approx  n^{h_T}\left[\frac{b_L-b_{0,L}}{n^{h_L}} \Theta(1/n^{2h_L}) +  \frac{b_T-b_{0,T}}{n^{h_T}} \Theta\left( \max[1,n^{1-2h_L}]\right)\right]
\end{equation}
\[
\approx (b_L-b_{0,L})\Theta(n^{h_T-3h_L}) + ({b_T-b_{0,T}}) \Theta\left( \max[1,n^{1-2h_L}]\right).
\]

To establish identifiable uniqueness (given the additive separability of $\overline{Q}^n(\beta)$ across $L,T$) it is sufficient to argue that for any $\varepsilon>0$ there exists $\phi>0$ such that   for large enough $n$, if
$\delta ((\beta^n_L, {\beta}_T^n), (\beta^n_{0,L}, {\beta}_{0,T}^n)) >\varepsilon$, then at least one of
the following inequalities holds:
\begin{equation}
\label{phiL}
\left|\frac{\E_{\beta^n}\left[ {S}_{L}(g) \right] - \E_{\beta^n_{0}}\left[ {S}_{L}(g)  \right]}{r^n_L} \right|>\phi
\end{equation}
or
\begin{equation}
\label{phiT}
\left|\frac{\E_{\beta^n}\left[ {S}_{T}(g) \right] - \E_{\beta^n_{0}}\left[ {S}_{T}(g)  \right]}{r^n_T} \right|> \phi.
\end{equation}

Note that $\delta ((\beta^n_L, {\beta}_T^n), (\beta^n_{0,L}, {\beta}_{0,T}^n)) >\varepsilon$
translates into
$|b_L- b_{0,L}|> c\varepsilon$ and/or
$|b_T- b_{0,T}|> c\varepsilon$ for some $c>0$.
If the second inequality holds, then by (\ref{almost}) it follows that (\ref{phiT}) holds.
If (\ref{phiT}) does not hold for any $\phi$, then by (\ref{almost}) it must be that $|b_L- b_{0,L}|> c\varepsilon$ while
$|b_T- b_{0,T}|<\delta^n$ for a sequence $\delta^n\rightarrow 0$.  In that case, noting that since $h_T\geq h_L-1$ (and so the second term of (\ref{almost2}) is of order at most 1 times $\delta^n$ while the first term is at least $c\varepsilon$ in magnitude), then by (\ref{almost2}) it follows that (\ref{phiL}) holds.
\end{proof}

\

\subsection{Consistency}

\begin{prop}\label{lem:SUGM-consistency} Consider a links and triangles SUGM with associated parameters
	$\beta_{0,L}^n,\beta_{0,T}^n= \left(\frac{b_{0,L}}{n^{h_L}}, \frac{b_{0,T}}{n^{h_T}} \right)$   with \(	h_{L}\in\left(\frac{1}{2},2\right)\text{ and }h_{T}\in\left[h_{L}+1,3h_{L}\right], \text{ with } h_T<3\). Then $\delta(\betanMD,\beta^n_0) \cvgto 0$ and therefore $\lVert \betanMD - b_0\rVert \cvgto 0 $.
\end{prop}

\begin{proof}[{\bf Proof of Proposition \ref{lem:SUGM-consistency}}] The proof follows from checking the conditions of Lemma 3.1 of \cite{potsherp1997} (see also \cite{jenishp2009}) or equivalently \cite{andrews1990generic}, Lemma 6.
In terms of those conditions here: $\mathcal{B}$ is compact, the weighting function is the identity matrix so it is positive semi-definite, and the moment function is continuous in $\beta$, and identifiable uniqueness was demonstrated in Proposition \ref{prop:identifiablyunique}. Uniform convergence is what remains to be verified.

Observe is that this requires showing
\[
\sup_{\beta}\left|\widehat{M}^{n}\left(\beta\right)-\E\widehat{M}^{n}\left(\beta\right)\right|=o_{p}\left(1\right)
\]
as
\begin{align*}
\sup_{\beta}\left|\widehat{Q}_{n}\left(g,\beta\right)-\bar{Q}_{n}\left(\beta\right)\right| & \leq\sup_{\beta}\left|\widehat{M}^{n}\left(\beta\right)'\widehat{M}^{n}\left(\beta\right)-\E\left[\widehat{M}^{n}\left(\beta\right)\right]'E\left[\widehat{M}^{n}\left(\beta\right)\right]\right|\\
 & \leq\sup_{\beta}\left|\left\{ \widehat{M}^{n}\left(\beta\right)-\E\left[\widehat{M}^{n}\left(\beta\right)\right]\right\} '\widehat{M}^{n}\left(\beta\right)\right|  \\
 &+\sup_{\beta}\left|\E\left[\widehat{M}^{n}\left(\beta\right)\right]'\left\{ \widehat{M}^{n}\left(\beta\right)-\E\left[\widehat{M}^{n}\left(\beta\right)\right]\right\} \right|\\
 & \leq2K\cdot\sup_{\beta}\left|\widehat{M}^{n}\left(\beta\right)-\E\left[\widehat{M}^{n}\left(\beta\right)\right]\right|
\end{align*}
for a constant $K$,
recalling we have assumed $\underline{D}_{L}<b_{L}<\overline{D}_{L}$ and $\underline{D}_{T}<b_{T}<\overline{D}_{T}$.

Thus, we show that
\(
\sup_{\beta}\left|\widehat{M}^{n}\left(\beta\right)-\E\widehat{M}^{n}\left(\beta\right)\right|=o_{p}\left(1\right).
\)
It is enough to show pointwise convergence coupled with stochastic equicontinuity.   Pointwise convergence follows along the lines of the proof of asymptotic normality in Section \ref{sec:app-normal} below, as that shows that the Central Limit Theorem applies, which then implies convergence.

Stochastic equicontinuity requires that for any $\epsilon > 0$, there exists $\eta > 0$ such that
\[
{ \limsup_n} \ \Pr\left\{ \sup_{\beta}\sup_{\beta'\in\delta\left(\beta,\beta'\right)<\eta}\left|\widehat{M}^{n}\left(\beta\right)-\widehat{M}^{n}\left(\beta'\right)\right|>\epsilon\right\} <\epsilon
\]
as in \cite{andrews1990generic}. A sufficient condition is a Lipschitz
condition: for every $\beta, \beta'$,
\[
\left|\widehat{M}\left(\beta\right)-\widehat{M}\left(\beta'\right)\right|=O_{p}\left(1\right)\cdot\delta\left(\beta,\beta'\right),
\]
which we verify next.

Let $\Delta=h_{T}-h_{L}$.  It is also useful to note (see the proof of Proposition \ref{prop:identifiablyunique}) that
\begin{align*}
\left|q_{L}^{n}\left(\beta\right)-q_{L}^{n}\left(\beta'\right)\right| & \leq\left(1+o\left(1\right)\right)\left|\beta_{L}^{n}-\beta_{L}^n{}'\right|+\Theta\left(n\right)\left|\beta^n_{T}-\beta_{T}^n{}'\right|\\
\end{align*}
and
\begin{align*}
\left|q_{T}^{n}\left(\beta\right)-q_{T}^{n}\left(\beta'\right)\right| & \leq\Theta\left(n^{-2h_{L}}\right)\left|\beta_{L}-\beta_{L}'\right|+\Theta\left(1\right)\left|\beta_{T}-\beta_{T}'\right|.
\end{align*}
Returning to the moments computation:
\begin{align*}
\left|\widehat{M}\left(\beta\right)-\widehat{M}\left(\beta'\right)\right| &= \left| (R_n S(g) - \E_\beta [R_n S(g)]) - (R_n S(g) - \E_{\beta'} [R_n S(g)])\right| \\
& = \left|   \E_\beta [R_n S(g)]  - \E_{\beta'} [R_n S(g)]\right|  \\
& \leq n^{h_{L}}\left|q_{L}\left(\beta\right)-q_{L}\left(\beta'\right)\right|+n^{h_{T}}\left|q_{T}\left(\beta\right)-q_{T}\left(\beta'\right)\right|\\
 & \leq\left(1+o\left(1\right)\right)\left|\beta_{L}-\beta'_{L}\right|n^{h_{L}}+\Theta\left(n\right)\left|\beta_{T}-\beta_{T}'\right|n^{h_{L}}\\
 & +\Theta\left(n^{-2h_{L}}\right)\left|\beta_{L}-\beta_{L}'\right|n^{h_{T}}+\Theta\left(1\right)\left|\beta_{T}-\beta_{T}'\right|n^{h_{T}}\\
 & \leq\Theta\left(1\right)\delta_{L}\left(\beta_{L},\beta_{L}'\right)+\Theta\left(n^{1-\Delta}\right)\delta_{T}\left(\beta_{T},\beta_{T}'\right)\\
 & +\Theta\left(n^{h_{T}-3h_{L}}\right)\delta_{L}\left(\beta_{L},\beta_{L}'\right)+\Theta\left(1\right)\delta_{T}\left(\beta_{T},\beta_{T}'\right)\\
 & \leq\Theta\left(1\right)\delta\left(\beta,\beta'\right),
\end{align*}
where the last inequality uses that $\Delta\geq 1$ and $h_{T}\leq 3h_{L}$.  Thus, the result follows. \end{proof}

\subsection{Asymptotic Normality}\label{sec:app-normal}
In what follows, begin with the restrictions required for identification and consistency which are  \(	h_{L}\in\left(\frac{1}{2},2\right)\text{ and }h_{T}\in\left[h_{L}+1,3h_{L}\right], \text{ with } h_T<3\). Asymptotic normality will require further  tightening of the restriction as will be seen below.

\subsubsection{Asymptotic Normality of Link and Triangle Shares}

Let $Y_L := \sum_{i<j}g_{ij}$ denote the sum of links which takes the place of the sum of the random variables in our general CLT.

\begin{lem}\label{lem:links-normality}
	Assume the rate requirements for identifiable uniqueness and consistency.
	Then
	\[
	a_{L}^{-1/2}\left(Y_{L}-{\rm E}\left[Y_{L}\right]\right)\rightsquigarrow\mathcal{N}\left(0,1\right)
	\]
	if	
	\(
	h_{L}\in\left(\frac{2}{3},2\right)\text{ and }h_{T}\in\left[h_{L}+1,3h_{L}\right], \text{ with } h_T<3.
	\)
	
\end{lem}

\begin{proof}
We apply the main theorem where $\alpha$ indexes a link $ij$. We
define the affinity set $\mathcal{A}
\left(\alpha,N\right):=\left\{ \eta:\ \eta\cap\alpha\neq\emptyset\right\} $,
so $\mathcal{A}
\left(ij,\binom{n}{2}\right)=\left\{ ij\right\} \cup\left\{ ik:\ k\neq i,j\right\} \cup\left\{ kj:\ k\neq i,j\right\} $.
Therefore dependency neighborhoods include $ij$ and all links that involve $i$ or $j$.

Condition \eqref{three} is obvious from the definition of $\mathcal{A}
\left(ij,\binom{n}{2}\right)$,
because if $ij$ and $kl$ do not share nodes, no triangle nor link
can generate both. Thus they are independent and the left-hand side
term is 0.

Next, we verify Condition \eqref{one} as follows,
First, we note that with nonnegative and positvely associated Bernoulli random variables, instead of working with
$\E  \left[|Z_\alpha| Z_{\eta} Z_{\gamma} \right]$,  we
can substitute $\E  \left[X_\alpha X_{\eta} X_{\gamma} \right]$.\footnote{\label{fn:new} First, note that with nonnegative random variables,  $\E  \left[|Z_\alpha| Z_{\eta} Z_{\gamma} \right] \leq \E  \left[|Z_\alpha| X_{\eta} X_{\gamma} \right]$.  For Bernoulli random variables, $\E  \left[|Z_\alpha| X_{\eta} X_{\gamma} \right]$ becomes
$(1-\mu_\alpha) \Pr  \left[X_\alpha = 1 | X_{\eta} X_{\gamma} =1\right]\Pr  \left[ X_{\eta} X_{\gamma} =1\right] + \mu_\alpha \Pr  \left[X_\alpha = 0 | X_{\eta} X_{\gamma} =1\right]\Pr  \left[ X_{\eta} X_{\gamma} =1\right] $.
This becomes
$\Pr  \left[X_\alpha = 1 | X_{\eta} X_{\gamma} =1\right]\Pr  \left[ X_{\eta} X_{\gamma} =1\right] + \mu_\alpha \Pr  \left[ X_{\eta} X_{\gamma} =1\right] \left( \Pr  \left[X_\alpha = 0 | X_{\eta} X_{\gamma} =1\right] - \Pr  \left[X_\alpha = 1 | X_{\eta} X_{\gamma} =1\right] \right) $  or
$\E  \left[X_\alpha  X_{\eta} X_{\gamma} \right] +\mu_\alpha \Pr  \left[ X_{\eta} X_{\gamma} =1\right] \left( \Pr  \left[X_\alpha = 0 | X_{\eta} X_{\gamma} =1\right] - \Pr  \left[X_\alpha = 1 | X_{\eta} X_{\gamma} =1\right] \right) $.
Given the positive association and Bernoulli structure,
$\mu_\alpha \Pr  \left[ X_{\eta} X_{\gamma} =1\right] \leq
\E  \left[X_\alpha  X_{\eta} X_{\gamma} \right] $.
Thus, the whole expression is no more than
$2\E  \left[X_\alpha  X_{\eta} X_{\gamma} \right] $.
}
\[
{\rm E}\left|X_{\alpha}X_{\eta}X_{\gamma}\right|={\rm P}\left(X_{\alpha}X_{\eta}X_{\gamma}=1\right)\ \text{ s.t. }\ \eta,\gamma\in\mathcal{A}
\left(\alpha,N\right).
\]
We have three cases where all indices are distinct and two cases where
at least two indices are identical. Enumerating them, we have
\begin{enumerate}
	\item $ij,jk,il$ (a line) - there are $O\left(n^{4}\right)$ of these.
	\item $ij,ik,il$ (a star) - there are $O\left(n^{4}\right)$ of these.
	\item $ij,jk,ik$ (a triangle) - there are $O\left(n^{3}\right)$ of these.
	\item $ij,ij,ik$ or $ij,jk,jk$ (two repeat) - there are $O\left(n^{3}\right)$
	of these.
	\item $ij,ij,ij$ (all repeat) - there are $O\left(n^{2}\right)$ of these.
\end{enumerate}
From the proof of identification, recall $q_{L}^n$ is the probability
of a link forming in the graph, which can be due to a link forming
directly or as a part of a triangle. Also recall that $\widetilde{q}_{L}^n$
is the probability of a link forming if a particular triangle that
it could be a part of does not form. Finally, let $\widetilde{q}^n_{L}{}'$
denote the probability that a link forms conditional on two triangles,
that it could be part a part of, not forming. Note that we have, and
will continue to, suppress the dependence on $n$ unless explicitly
needed.

We can construct loose upper bounds on the probabilities of the various
structures:
\begin{enumerate}
	\item Line: $(\beta_{0,T}^n)^{2}+2\left(1-\beta_{0,T}^n\right)\beta_{0,T}^n\widetilde{q}_{L}^n+\left(1-\beta_{0,T}^n\right)^{2}(\widetilde{q}_{L}^n)^{2}\widetilde{q}^n_{L}{}'\leq(\beta_{0,T}^n)^{2}+2\beta_{0,T}^nq_{L}^n+(q_{L}^n)^{3}.$

	\item Star:
	\[
	(\beta_{0,T}^n)^{3}+3\left(1-\beta_{0,T}^n\right)(\beta_{0,T}^n)^{2}+3\left(1-\beta_{0,T}^n\right)^{2}\beta_{0,T}^n \widetilde{q}^n_{L}{}'+\left(1-\beta_{0,T}^n\right)^{3}\left(\widetilde{q}^n_{L}{}'\right)^{3}\leq4(\beta_{0,T}^n)^{2}+3\beta_{0,T}^nq_{L}^n+(q_{L}^n)^{3}.
	\]
	\item Triangle: $\beta_{0,T}^n+\left(1-\beta_{0,T}^n\right)\left(\widetilde{q}^n_{L}{}'\right)^3\leq\beta_{0,T}^n+(q_{L}^n)^{3}.$
	\item Two repeat: $\beta_{0,T}^n+\left(1-\beta_{0,T}^n\right)\left(\widetilde{q}^{n}_{L}{}'\right)^{2}\leq\beta_{0,T}^n+(q_{L}^n)^{2}.$
	\item All repeat: $q_{L}^n$.
\end{enumerate}

Then it follows from (1), (4), and (5), that
\begin{align}\label{eq:triple-bound}
{\rm E}\left|X_{\alpha}X_{\eta}X_{\gamma}\right|
\leq
\Theta\left(n^{4}\left((\beta_{0,T}^n)^{2}+\beta_{0,T}^nq_{L}^n+(q_{L}^n)^{3}\right)+n^{3}\left(\beta_{0,T}^n+(q_{L}^n)^{2}\right)+n^{2}q_{L}^n\right),
\end{align}
where we omit the dominated term from triangles.

For $k\neq i$, by binomial approximation
and bounds on lower order terms, the dominant term is the order of triangle formation and so it follows that
\[
{\rm cov}\left(X_{ij},X_{jk}\right)=\beta_{0,T}^n\left(1-\beta_{0,T}^n\right)\left(1-\widetilde{q}_{L}^n\right)^{2}\leq O \left(\beta_{0,T}^n\right).
\]
Then it follows that
\[
a_{N}=\Theta\left(n^{2}q_{L}^n+n^{3}\beta_{0,T}^n\right).
\]

For the sufficient condition for  \eqref{one} we need to compare this to the bound on ${\rm E}\left|X_{\alpha}X_{\eta}X_{\gamma}\right|$ from  \eqref{eq:triple-bound}
and show
\[
{\rm E}\left|X_{\alpha}X_{\eta}X_{\gamma}\right| = o\left(a_{N}^{3/2}\right)=o\left(n^{3}(q_{L}^n)^{3/2}+n^{9/2}(\beta_{0,T}^n)^{3/2}\right),
\]
and so we need to show
\[
n^{4}\left((\beta_{0,T}^n)^{2}+\beta_{0,T}^n q_{L}^n+(q_{L}^n)^{3}\right)+n^{3}\left(\beta_{0,T}^n+(q_{L}^n)^{2}\right)+n^{2}q_{L} = o\left(n^{3}(q_{L}^n)^{3/2}+n^{9/2}(\beta_{0,T}^n)^{3/2}\right).
\]

This imposes a number of constraints, and omitting the parts that are obviously
satisfied (e.g., $n^{4}(\beta_{0,T}^n)^{2}+n^{3}(q_{L}^n)^{2}=o\left(n^{9/2}(\beta_{0,T}^n)^{3/2}+n^{3}(q_{L}^n)^{3/2}\right)$),
this reduces to
\begin{equation}
	\label{newone}
	n^{4}\left(\beta_{0,T}^nq_{L}^n+(q_{L}^n)^{3}\right)+n^{3}\left(\beta_{0,T}^n\right)+n^{2}q_{L}^n
 =o\left(n^{3}(q_{L}^n)^{3/2}+n^{9/2}(\beta_{0,T}^n)^{3/2}\right).
\end{equation}
Recall in addition $h_{L}>1/2$ and $h_{T}\in[h_{L}+1,3h_{L}]$ from
the identification and consistency requirements. Noting that, as in
the proof of identification, (working there with $\widetilde{q}_{L}^n$
which is of the same order)
\[
q_{L}^{n}=\Theta\left(n^{-h_{L}}+n^{-\left(h_{T}-1\right)}\right)=\Theta\left(n^{-h_{L}}\right)
\]
since $h_{T}-1\geq h_{L}$.
The condition (\ref{newone}) is then satisfied if
we can show that
\begin{equation}
	\label{newtwo}
	n^{ 4-h_{T}-h_{L}} + n^{4-3h_{L}} + n^{3-h_{T}}+ n^{2-h_{L}} < n^{3-\left(3/2\right)h_{L}} + n^{9/2-\left(3/2\right)h_{T}}.
\end{equation}
Since   $h_{T}\geq h_{L}+1$,
\[
n^{ 4-h_{T}-h_{L}} + n^{4-3h_{L}} + n^{3-h_{T}}+ n^{2-h_{L}} \leq  n^{ 3-3h_{L}} + n^{4-3h_{L}} + n^{2-h_{L}}+ n^{2-h_{L}}
\]
and thus to show (\ref{newtwo}) it is enough to show that
\[
\max\left\{4-3h_{L},2-h_{L}\right\} <
3-\left(3/2\right)h_{L},
\]
which holds since $2/3< h_L < 2$ (which exactly correspond to the crossing points).

Next we turn to Condition \eqref{two}. We will show that this is implied by the above
restrictions. To do this, we compute terms of the form
\[
{\rm cov}\left(\left(g_{ij}-q_{L}^n\right)\left(g_{ik}-q_{L}^n\right),\left(g_{rs}-q_{L}^n\right)\left(g_{st}-q_{L}^n\right)\right)
\]
since $\eta\in\mathcal{A}
\left(\alpha,N\right)$ and $\eta'\in\mathcal{A}
\left(\alpha',N\right)$;
here we allow for the cases that $k=i$ and $r=t$. Via a detailed expansion, and given the nonnegative correlation of the presence of links,
one can show that
\[
{\rm cov}\left(\left(g_{ij}-q_{L}^n\right)\left(g_{ik}-q_{L}^n\right),\left(g_{rs}-q_{L}^n\right)\left(g_{st}-q_{L}^n\right)\right)\leq{\rm E}\left[g_{ij}g_{jk}g_{rs}g_{st}\right].
\]
It is easy to see that if $\left\{ i,j,k\right\} \cap\left\{ r,s,t\right\} =\emptyset$
then the covariance is zero since the events are independent. Thus,
we are summing over the cases in which the intersection is non-empty.
The cases with intersection of two or more nodes are handled as we
already did above, noting that the condition here is less restrictive
($a_{N}>1$ so $a_{N}^{2}>a_{N}^{3/2}$).

So we restrict attention to the case where there is only one node
of intersection. In this case the intersection could come from:
\begin{enumerate}
	\item $s=j$, so two-stars joined at the center,
	\item $r=i$, so a line,
	\item $s=i$, so the center of one star is attached to the leaf of the other.
\end{enumerate}
These exhaust all configurations up to a relabeling.

Consider the event $g_{ij}g_{jk}g_{rs}g_{st}=1$. Assume we are in
case 1. This has the highest probability relative to the other two
cases, so we can construct a crude bound using this to finish the
result. The probability is of order no more than
\[
(\beta_{0,T}^n)^{3}+(\beta_{0,T}^n)^{2}+(\beta_{0,T}^n)^{2}q_{L}^n+\beta_{0,T}^n(q_{L}^n)^{2}+(q_{L}^n)^{4}.
\]
Therefore we check
\[
n^{5}\left(\beta_{0,T}^{3}+\beta_{0,T}^{2}+\beta_{0,T}^{2}q_{L}+\beta_{0,T}q_{L}^{2}+q_{L}^{4}\right)=o\left(n^{4}q_{L}^{2}+n^{6}\beta_{0,T}^{2}\right).
\]
The relevant rate on the right-hand side is $n^{4-2h_{L}}$, since
$h_{T}\geq h_{L}+1$. Dividing both sides by $n^{4}$, the inequality boils down to four conditions:
\begin{enumerate}
	\item $1-3h_{T}<-2h_{L}$ or $1+2h_{L}<3h_{T}$, which
	is implied by $h_T\geq h_L+1$;
	\item $1-2h_{T}<-2h_{L}$ or $h_{L}+\frac{1}{2}<h_{T}$ which is also implied by $h_T\geq h_L+1$;
	\item $1-2h_{T}-h_{L}<-2h_{L}$ or $1+h_{L}< 2h_{T}$, which is also implied by $h_T\geq h_L+1$;
	\item $1-4h_{L}<-2h_{L}$ or $h_{L}>1/2$, which is implied by  $h_L> 2/3$.
\end{enumerate}
This concludes the proof.
\end{proof}

Let $Y_T := \sum_{i<j<k}g_{ijk}$ denote the sum of triads.

\begin{lem}\label{lem:triangles-normality}
	Assume the rate requirements for identifiable uniqueness and consistency.
	Then
	\[
	a_{T}^{-1/2}\left(Y_{T}-{\rm E}\left[Y_{T}\right]\right)\rightsquigarrow\mathcal{N}\left(0,1\right)
	\]
	if	
	\(
h_{L}\in\left(\frac{2}{3},2\right)\text{ and }h_{T}\in\left[h_{L}+1,3h_{L}\right], \text{ with } h_T<3.
\)
\end{lem}

\begin{proof}
For this proof we appeal to Corollary \ref{sparseCLT}. Note that $\E[Z_\alpha Z_\eta \vert Z_\eta ] \geq 0$ is always satisfied (regardless of the subgraph types in question) in our subgraph model since one subgraph's presence either has no effect on another or can help generate the other incidentally.  Thus, we verify that the other conditions are satisfied.

 Here we set $\mathcal{A}
 \left(ijk,\binom{n}{3}\right)=\left\{ ijk\right\} $.

We begin with (ii) since it implies (i) for Bernoullis, which is equivalent to showing
\[
\sum_{\alpha\neq\eta}{\rm cov}\left(X_{\alpha},X_{\eta}\right)=o\left(N\cdot{\rm var}\left(X_{\alpha}\right)\right).
\]
Applying the calculation in the proof of Proposition \ref{prop:identifiablyunique},
\[
{\rm var}\left(X_{\alpha}\right)=q_{T}^n\left(1-q_{T}^n\right)=\Theta\left(\beta_{0,T}^n\right).
\]
We calculate the covariances for the various cases of
$\alpha,\eta$ and check when they are of lesser order. We have two
relevant cases: where the two indices intersect on one node and when
they intersect on two nodes. By independence if they do not intersect
at all, the covariance is zero.
\begin{enumerate}
	\item $\left|\alpha\cap\eta\right|=1$: ${\rm cov}\left(X_{\alpha},X_{\eta}\right)=\Theta\left(\beta_{0,T}^n(q_{L}^n)^{4}\right)$
	and there are $O\left(n^{5}\right)$ of these.
	
	The triangles are node-adjacent. Since at least one link needs to
	form together, this can only happen if the joint node is part of a
	triangle and neither of the triangles formed directly. This gives
	$(\beta_{0,T}^n)^{4}(\widetilde{q}_{L}^n)^{4}=O\left(\beta_{0,T}^n(q_{L}^n)^{4}\right)$.
	\item $\left|\alpha\cap\eta\right|=2$: ${\rm cov}\left(X_{\alpha},X_{\eta}\right)=\Theta\left(\beta_{0,T}^n(q_{L}^n)^{2}+(q_{L}^n)^{5}\right)$
	and there are $O\left(n^{4}\right)$ of these.
	
	The triangles are edge-adjacent. This is because we need the common
	link from each triangle to have formed together and not have already
	formed independently in both cases, which can happen only if exactly
	one of the triangles formed directly and the other did not, or else
	neither triangle formed and all of the links have to form. So, this
	is of order $\beta_{0,T}^n(\widetilde{q}_{L}^n)^{2}+\left(\widetilde{q}^n_{L}{}'\right)^{5}\leq\beta_{0,T}^n(q_{L}^n)^{2}+(q_{L}^n)^{5}$.
	
\end{enumerate}
Each of these must be of order $o\left(n^{3}\beta_{0,T}\right)$ for
the result to hold. The conditions therefore are
\begin{enumerate}
	\item $5-h_{T}-4h_{L}<3-h_{T}$ or $1/2<h_{L}$, which is satisfied since $h_L>2/3$.
	\item $4-h_{T}-2h_{L}<3-h_{T}$ or $1/2<h_{L}$ which is the same as above.
	\item $4-5h_{L}<3-h_{T}$ or $h_{T}<5h_{L}-1$. But notice that $3h_{L}<5h_{L}-1$
	so long as $h_{L}>1/2$, and so this is implied by $h_{T}<3h_{L}$
\end{enumerate}
This completes the proof.\end{proof}

\subsubsection{Joint Asymptotic Normality}

Let $Y = (Y_L, Y_T)'$, the vector of the sums of links and triangles.

\begin{lem}\label{lem:joint_normality}
	If 	\(
	h_{L}\in\left(\frac{2}{3},2\right)\text{ and }h_{T}\in\left[h_{L}+1,3h_{L}\right], \text{ with } h_T<3.
	\), then
	\[
	a^{-1/2}\left(Y-{\rm E}\left[Y\right]\right)\rightsquigarrow\mathcal{N}\left(0,I\right)
	\]
	where $a$ is the variance-covariance matrix defined below.
\end{lem}
\begin{proof}  We will apply the Cram\'{e}r-Wold device to show joint normality through showing all weighted (normalized) sums are normally distributed. Specifically, Lemma 2.1 of \cite{biscio2018note} contains a useful generalization which we use.
	
	Let $\Lambda=\left\{ ijk:\ i,j,k\in\left[1:n\right]\right\} \cup\left\{ ij:\ i,j\in\left[1:n\right]\right\} .$
	This consists of a set of $\binom{n}{2}+\binom{n}{3}$ terms. Notice
	that the set has two types of random variables coming from links and
	triangles. We now alter the affinity set for the sake of links (for
	triangles they may remain the same). Specifically
	\[
	\mathcal{A}
 \left(ij;N\right):=\left\{ ik:\ \forall k\right\} \cup\left\{ jk:\ \forall k\right\} \cup\left\{ irs\text{ and }jrs:\ \forall r,s\right\} .
	\]

Let
\[
a=\left(\begin{array}{cc}
	a_{L} & a_{LT}\\
	a_{LT} & a_{T}
\end{array}\right).
\]
where the two diagonal variance terms have been studied
\begin{enumerate}
	\item $a_L = \sum_{ij,rs\in\mathcal{A}
 \left(ij,N\right)}{\rm cov}\left(X_{ij},X_{rs}\right)$,
	\item $a_T = \sum_{ijk}{\rm var}\left(X_{ijk}\right)$, and
	\item $a_{LT} = \sum_{ij.rst\in\mathcal{A}
 \left(ij,N\right)}{\rm cov}\left(X_{ij},X_{rst}\right)$.
\end{enumerate}

 We need to check that for every $w\in \mathbb{R}^2$,
 \[\left(w'a_{N}w\right)^{-1/2}w'(Y - \E[Y]) \wkto \mathcal{N}(0,1)\] which is Lemma 2.1 of \cite{biscio2018note}. But this  reduces to checking the conditions of   Theorem \ref{clt} for these now $w$-weighted sums.

 We need to calculate growth rates for the new covariance term:
	\[
a_{LT}	= \sum{\rm cov}\left(X_{ij},X_{rst}\right)=\Theta\left(n^{4}\beta_{0,T}^nq_{L}^n+n^{4}(\beta_{0,T}^n)^{2}+n^{3}\beta_{0,T}^n\right).
	\]
	We have for any weighted sum of $a_L$, $a_T$, and $a_{LT}$ ($w$ is fixed in $n$  so does not matter) the order
	\[
	\left(n^{2}q_{L}^n+n^{3}\beta_{0,T}^n\right)+\left(n^{4}\beta_{0,T}^nq_{L}^n+n^{4}(\beta_{0,T}^n)^{2}+n^{3}\beta_{0,T}^n\right)+\left(n^{3}\beta_{0,T}^n\right)
	\]
	or collecting terms and dropping the obviously dominated ones
	\[
	n^{4}\beta_{0,T}^nq_{L}^n+n^{3}\beta_{0,T}^n+n^{2}q_{L}^n
	\]
	where, notice, the latter two terms were the rates of $a_L$ and $a_T$ and the possible new component is given by the first term.
	
Again Condition \eqref{three} follows directly, so we check the other two.
	
	Condition \eqref{one} is as follows. We examine the new terms not covered by the prior two
	lemmas and appeal to the sufficient condition as shown in footnote \ref{fn:new}. These are of the form
	\[
	{\rm E}\left|X_{\alpha}X_{\eta}X_{\gamma}\right|={\rm P}\left(X_{\alpha}X_{\eta}X_{\gamma}=1\right)
	\]
	which we now bound. The loose bounds across the two cases are
	\begin{enumerate}
		\item one link and two triangles: 	
		This must constitute \emph{edge adjacent} triangles. Otherwise we
		automatically have independence.
		This leaves 4 nodes, so order $n^{4}$ terms with a bound on probability
		$(\beta_{0,T}^n)^{2}$ which is notably loose.
		\item one link, one triangle, and a second link:
		This has a loose upper bound of $(\beta_{0,T}^n)^{2}+\beta_{0,T}^n(q_{L}^n)^{2}$.
		This leaves 4 nodes again so order $n^{4}$ of these.
		
	\end{enumerate}
	This exhausts the list.
	
	So we compare
	\begin{enumerate}
		\item $n^{4}(\beta_{0,T}^n)^{2}<n^{6}(\beta_{0,T}^n)^{3/2}(q_{L}^n)^{3/2}+n^{9/2}(\beta_{0,T}^n)^{3/2}+n^{3}(q_{L}^n)^{3/2}$
		\begin{enumerate}
			\item $4-2h_{T}<6-\frac{3}{2}h_{T}-\frac{3}{2}h_{L}$ so $h_{L}<\frac{4}{3}+h_{T}$
			which holds for every $h_{T}\geq h_{L}+1$.
			\item $4-2h_{T}<9/2-\frac{3}{2}h_{T}$ so this is always true.
			\item $4-2h_{T}<3-\frac{3}{2}h_{L}$ so $\frac{1}{2}+\frac{3}{4}h_{L}<h_{T}$
			which already holds.
		\end{enumerate}
		\item $n^{4}\left((\beta_{0,T}^n)^{2}+\beta_{0,T}^n(q_{L}^n)^{2}\right)<n^{6}(\beta_{0,T}^n)^{3/2}(q_{L}^n)^{3/2}+n^{9/2}(\beta_{0,T}^n)^{3/2}+n^{3}(q_{L}^n)^{3/2}$.
		\begin{enumerate}
			\item $4-h_{T}-2h_{L}<6-\frac{3}{2}h_{T}-\frac{3}{2}h_{L}$ so $h_{T}<4+h_{L}$
			but this holds precisely because $h_{L}<2$ and $h_{T}\leq3h_{L}$.
			\item $4-h_{T}-h_{L}<9/2-\frac{3}{2}h_{T}$ is mechanical.
			\item $4-h_{T}-h_{L}<3-\frac{3}{2}h_{L}$ follows from $1+\frac{1}{2}h_{L}<h_{T}$
			which is true by assumption.
		\end{enumerate}
	\end{enumerate}
	As a consequence, we have
	\[
	\sum_{\alpha;\eta,\gamma\in\Delta\left(\alpha,N\right)}{\rm E}\left|X_{\alpha}X_{\eta}X_{\gamma}\right|=o\left(a^{3/2}\right)
	\]
	since we have controlled the within-link and within-triangle terms
	in the prior two Lemmas and the cross-term above.

	Next we need to verify Condition \eqref{two}. The condition will be met with the exact same rates. To see this, first observe  that we only  need to consider terms of the form
	\[
	{\rm cov}\left(\left(g_{ij}-q_{L}^n\right)\cdot\left(g_{ik}-q_{L}^n\right),\left(g_{rs}-q_{L}^n\right)\cdot\left(X_{rvw}-q_{T}^n\right)\right)\leq{\rm E}\left[g_{ij}g_{ik}g_{rs}g_{rv}g_{rw}\right]
	\]
	and
	\[
	{\rm cov}\left(\left(g_{ij}-q_{L}^n\right)\cdot\left(X_{ikl}-q_{T}^n\right),\left(g_{rs}-q_{L}^n\right)\cdot\left(X_{rvw}-q_{T}^n\right)\right)\leq{\rm E}\left[g_{ij}g_{ik}g_{kl}g_{rs}g_{rv}g_{rw}\right].
	\]
	That is,  the reference nodes in this condition must be pairs because triples have dependency neighborhoods that are singletons. Also observe that the neighbors considered must have at least one triangle because the all-links case has already been covered in Lemma \ref{lem:links-normality}.
	
	We have two cases to consider: a 7-node case with $\{i,j,k\} \cap \{r,s,v,w\} \neq \emptyset$ and an 8-node case with $\{i,j,k,l\} \cap \{r,s,v,w\}\neq \emptyset$.
	
	Let us begin with the 7-node case. Here we can have one or two intersections (we have already calculated cases with 4 or fewer nodes, meaning three or more intersections).
 Begin with a single node in common. One can check that amongst all configurations (which intersects a two-star $ijk$ with a triangle with a leaf ($ruv$ and $rs$)),  an upper bound on the probability of the structure forming is of order $(\beta_{0,T}^n)^2 q_L^n$. There are  order $n^6$ such potential structures. We need to compare this to the square of the weighted sum of variance-covariance terms, which by the above is
 \[
\left( n^{4}\beta_{0,T}^nq_{L}^n+n^{3}\beta_{0,T}^n+n^{2}q_{L}^n\right)^2.
 \]
	Observe that the first term is the only one to consider---the other two have been studied,
	\[
	6 - 2h_T - h_L < 4 - 2h_L\iff 2  + h_L < 2 h_T
	\]
	follows directly from  $h_T \geq h_L + 1$ so the result follows in this case.
	
	Next we can look at two nodes in common. This involves a number of configurations of one or two triangles and a collection of leafs and/or stars. Here we have order $n^5$ free nodes and we can check a loose upper bound on the probability of formation is of order $\beta_{0,T}^n (q_L^n)^2$. As such we see
	\[
	5 - h_T - 2h_L < 4 - 2h_L
	\]
	which is implied by $h_T > 1$, which is satisfied in our setting since $h_T \geq h_L+1 > 1+2/3$.

	This covers all cases that have not previously been calculated. So then we turn to the 8-node case. We can have one or two nodes in common before we repeat calculations already covered. If we have one node in common, the loose upper bound is probability of order $(\beta_{0,T}^n)^2 (q_L^n)^2$ and there are of course $n^7$ such potential collections of nodes. The condition to check is then
	\[
	7-2h_T - 2h_L < 4-2h_L \iff 3 < 2h_T.
	\]
	But this is mechanical since $h_T \geq 1+2/3$ in any case.
	
	The final case to consider is with two nodes in common in this case of 8 nodes with two in common. In this case, a loose upper bound is $(\beta_{0,T}^n)^2 q_L^n$ and there are $n^6$. But we have studied this above and the restriction is already satisfied.
	
	Therefore, with $\Lambda$ and $\mathcal{A}
 (\alpha,N)$ for $\alpha \in \Lambda$ defined as above, we have shown that Conditions \eqref{one} and \eqref{two} are satisfied. Since the remaining conditions were already discussed, the sum is asymptotically normally distributed. But, since the weights were arbitrary (indeed fixed in $n$ and therefore did not contribute to any of the dependency calculations)	the result held irrespective of $(w_1,w_2) \in \mathbb{R}^2$, so joint normality follows.
\end{proof}

\subsubsection{Asymptotic Normality of SUGM Estimators}
It is useful to define the variance-covariance matrix of the moments and a rate matrix
\[
V_{n}=\left(\begin{array}{cc}
\var(n^{h_L} S_L) & \cov(n^{h_L} S_L,n^{h_T} S_T)\\
\cov(n^{h_L} S_L,n^{h_T} S_T) & \var(n^{h_T} S_T)
\end{array}\right)
\ \ \ \mbox{ and  } R_{n}=\left(\begin{array}{cc}
n^{h_{L}} & 0\\
0 & n^{h_{T}}
\end{array}\right).\]

\begin{prop} \label{prop:LT_SUGM1} Consider a links and triangles SUGM with associated parameters
$\beta_{0,L}^n,\beta_{0,T}^n= \left(\frac{b_{0,L}}{n^{h_L}}, \frac{b_{0,T}}{n^{h_T}} \right)$ with $0\leq \underline{D}<b_{0,L},b_{0,T}<\overline{D}$
such that
\[
h_L \in (2/3,2) \mbox{ and } h_T \in [h_L + 1, 3h_L] \text{ with }h_T < 3.
\]
Consider the minimum distance estimator $\betanMD$ using moments $S = (S_L(g),S_T(g))$. Then
\[\delta \left(\betanMD,\beta_{0}^{n}\right) \cvgto 0\]
and\footnote{The expression for $V_n$ is different when $h_T = h_L+1$, and is given in the proof of the proposition.}
\[
V^{-1/2}_n R_n \left(\betanMD-\beta_{0}^{n}\right)\wkto\mathcal{N}\left(0,I \right).
\]
\end{prop}

\

\begin{proof}[{\bf Proof of Proposition \ref{prop:LT_SUGM1}}]

The proof of the result follows the outline of  standard results  on asymptotic normality of parameter estimates (e.g., \cite{newey1994large}).

It is
convenient to normalize
things via a change of variables via the diagonal normalizing matrix $R_n={\rm diag}\{n^{h_L},n^{h_T}\}$ to a parameter vector
\(
b:=R_n\beta^n,
\)
so that the magnitude of the parameter vector does not change with $n$.
Observe that $\delta\left(\betanMD,\beta_{0}^{n}\right)\cvgto 0$ if and only if $\bnMD\cvgto b_{0}$, and consistency in the $\delta$-metric holds by Proposition \ref{lem:SUGM-consistency}.

It is also useful to then define the expected and empirical moment functions in terms of this rescaled parameter
\[
\widehat{M}_{L}\left(b\right)=\left[\frac{n^{h_{L}}}{\binom{n}{2}}\sum_{i<j}g_{ij}-\Q_{L}\left(b_{L},b_{T}\right)\right]
\]
and
\[
\widehat{M}_{T}\left(b\right)=\left[\frac{n^{h_{T}}}{\binom{n}{3}}\sum_{i<j<k}g_{ij}g_{jk}g_{ik}-\Q_{T}\left(b_{L},b_{T}\right)\right]
\]
where $\Q_L(b)=n^{h_{L}}q_L^n$ and $\Q_T(b)=n^{h_T}q_T^n$ are the normalized expectations given parameter $(b_{L},b_{T})=R_n\beta^n$.

Recall  $\Delta=h_{T}-h_{L}$.
We treat two separate cases, $\Delta>1$ and $\Delta=1$.   The second case allows links to generate triangles at a similar rate as triangles, and so is a more complex case to treat, and so each step of the argument involves different arguments for the two cases.

From the first order condition of GMM estimation, we take a mean value expansion around the true normalized parameter $b_{0}$  by applying the mean-value theorem, and then solve for $\bnMD-b_0$.\footnote{This is valid because $b_{0}$ is assumed to lie in the interior of
$B$, a compact set, which then implies the sequence of $\mathcal{B}^{n}$
under consideration.} Note that the mean value $\bar{b}$ is evaluated component by component
in the matrix $\nabla\widehat{M}\left(\bar{b}\right)$. This abuse
of notation is standard (e.g., \cite{newey1994large}).
\[
R_{n}\left(\betanMD-\beta_{0}^n\right)=\left(\bnMD-b_{0}\right)=-\left[\nabla\widehat{M}\left(\bnMD\right)'\nabla\widehat{M}\left(\bar{b}\right)\right]^{-1}\nabla\widehat{M}\left(\bnMD\right)'\widehat{M}\left(b_{0}\right).
\]
Below we will show that for $\Delta > 1$
\[
-\left[\nabla\widehat{M}\left(\bnMD\right)'\nabla\widehat{M}\left(\bar{b}\right)\right]^{-1}\nabla\widehat{M}\left(\bnMD\right)'\cvgto I
\]
and by Lemma \ref{lem:joint_normality}, for
\[
V_{n}=\left(\begin{array}{cc}
\var(n^{h_L} S_L) & \cov(n^{h_L} S_L,n^{h_T} S_T)\\
\cov(n^{h_L} S_L,n^{h_T} S_T) & \var(n^{h_T} S_T)
\end{array}\right)
\]
it follows that
\[
V^{-1/2}_n\widehat{M}\left(b_{0}\right)\rightsquigarrow\mathcal{N}\left(0,I\right).
\]
Therefore by Slutzky's theorem, it follows that
\[
V^{-1/2}_n R_n\left(\betanMD-\beta_{0}^n\right)\rightsquigarrow\mathcal{N}\left(0,I\right).
\]

Thus, to complete the proof for the case of  $\Delta > 1$, it  suffices to show that
\[
-\left[\nabla\widehat{M}\left(\bnMD\right)'\nabla\widehat{M}\left(\bar{b}\right)\right]^{-1}\nabla\widehat{M}\left(\bnMD\right)'\cvgto I
\]

For the case of $\Delta=1$ we will end up with a different expression for the limit of
\[
-\left[\nabla\widehat{M}\left(\bnMD\right)'\nabla\widehat{M}\left(\bar{b}\right)\right]^{-1}\nabla\widehat{M}\left(\bnMD\right)'
\]
and so will have a different covariance and normalization.

To find the limit of these gradient terms, we need to compute
$\nabla \Q$, where we define
\begin{align*}
\Q_{L}\left(b\right) & :=n^{h_{L}}\left[q_{L}^n\left(\beta^n\right)\right]\\
 & =n^{h_{L}}\left[\beta_{L}^n+\left(1-\beta_{L}^n\right)\left[1-\left(1-\beta_{T}^n\right)^{n-2}\right]\right]\\
 & =b_{L}+\left(n^{h_{L}}-b_{L}\right)\left(n-2\right)\cdot\frac{b_{T}}{n^{h_{T}}}\\
 & =b_{L}+\frac{b_{T}}{n^{h_{T}-h_{L}-1}}+o\left(1\right).
\end{align*}

Similarly
\begin{align*}
\Q_{T}\left(b\right) & :=n^{h_{T}}\left[q_{T}^n\left(\beta^n\right)\right]\\
 & =n^{h_{T}}\left[\beta_{T}^n+\left(1-\beta_{T}^n\right)\left\{ \beta_{L}^n+\left(1-\beta_{L}^n\right)\left[1-\left(1-\beta_{T}^n\right)^{n-2}\right]\right\} ^{3}\right]\\
 & =b_{T}+\left(n^{h_{T}}-b_{T}\right)\left\{ \frac{b_{L}}{n^{h_{L}}}+\frac{b_{T}}{n^{h_{T}-1}}-\frac{b_{T}}{n^{h_T + h_L-1}}b_{L}\right\} ^{3}\\
 & \approx b_{T}+\left\{ \underbrace{ \frac{b_{L}}{n^{h_{L}-h_{T}/3}}+\frac{b_{T}}{n^{2h_{T}/3-1}}-\frac{b_{T}}{n^{h_T + h_L-1-h_{T}/3}}b_{L} }_{=:x}\right\}^{3}
\end{align*}
where we dropped the final term that is of lesser order.

Note that the third term will always be of lesser order, so
\[
x=O\left(\left\{ \frac{b_{L}}{n^{h_{L}-h_{T}/3}}+\frac{b_{T}}{n^{2h_{T}/3-1}}\right\} \right).
\]
Also notice
\[
h_{L}-h_{T}/3<2h_{T}/3-1\iff \Delta>1.
\]
Thus, if $\Delta>1$ only the first term in $x$ matters, while if $\Delta=1$ then the two terms are of the same order.

Finally it will be useful to write
\[
\nabla \Q_{L}=\left(\begin{array}{c}
1+o\left(1\right)\\
n^{1-\Delta}+o\left(n^{1-\Delta}\right)
\end{array}\right)
\]
and
\[
\nabla \Q_{T}=\left(\begin{array}{c}
3x^{2}\left(\frac{1}{n^{h_{L}-h_{T}/3}}(1+o(1))\right)\\
1+3x^{2}\left(\frac{1}{n^{2h_{T}/3-1}}(1+o(1))\right)
\end{array}\right).
\]

Consider the case where $\Delta>1$. Then
\[
\nabla\Q_{L}=\left(\begin{array}{c}
1+o\left(1\right)\\
n^{1-\Delta}+o\left(n^{1-\Delta}\right)
\end{array}\right)=\left(\begin{array}{c}
1+o\left(1\right)\\
o\left(1\right)
\end{array}\right)
\]
and
\[
\nabla\Q_{T}=\left(\begin{array}{c}
o\left(1\right)\\
1+o\left(1\right)
\end{array}\right),
\]
since $3h_{L}>h_{T}$ and $h_{T}>\frac{3}{2}$.

Now consider the case where $\Delta=1$. Again
\[
\nabla\Q_{T}=\left(\begin{array}{c}
o\left(1\right)\\
1+o\left(1\right)
\end{array}\right)
\]
but in this case
\[
\nabla\Q_{L}=\left(\begin{array}{c}
1+o\left(1\right)\\
n^{1-\Delta}+o\left(n^{1-\Delta}\right)
\end{array}\right)=\left(\begin{array}{c}
1+o\left(1\right)\\
1+o\left(1\right)
\end{array}\right).
\]

Notice that $\Q(b)$ is a continuously differentiable function of $b \in B$, where $B$ is compact, and $\nabla_b \Q(b)$ has a bounded derivative. This allows us to write
\[
\nabla \widehat{M}(\bnMD) = \nabla \widehat{M}(b_0)+o_p(1) = -\nabla \Q(b_0) + o_p(1).
\]
We explicitly compute the inverse of $\nabla \Q(b)' \nabla \Q(b)$ below, which exists.

If $\Delta > 1$  we can write
\[
-\nabla\widehat{M}\left(b\right)=\left(\begin{array}{cc}
\frac{\partial \Q_{L}}{\partial b_{L}} & \frac{\partial \Q_{L}}{\partial b_{T}}\\
\frac{\partial \Q_{T}}{\partial b_{L}} & \frac{\partial \Q_{T}}{\partial b_{T}}
\end{array}\right)=\left(\begin{array}{cc}
1+o\left(1\right) & o\left(1\right)\\
o\left(1\right) & 1+o\left(1\right)
\end{array}\right)
\]
and if $\Delta = 1$ we can write
\[
-\nabla\widehat{M}\left(b\right)=\left(\begin{array}{cc}
\frac{\partial \Q_{L}}{\partial b_{L}} & \frac{\partial \Q_{L}}{\partial b_{T}}\\
\frac{\partial \Q_{T}}{\partial b_{L}} & \frac{\partial \Q_{T}}{\partial b_{T}}
\end{array}\right)=\left(\begin{array}{cc}
1+o\left(1\right) & 1+ o\left(1\right)\\
o\left(1\right) & 1+o\left(1\right)
\end{array}\right).
\]
We can also compute
\begin{align*}
\nabla\widehat{M}\left(b\right)'\nabla\widehat{M}\left(b\right) & =\left(\begin{array}{cc}
\frac{\partial \Q_{L}}{\partial b_{L}} & \frac{\partial \Q_{T}}{\partial b_{L}}\\
\frac{\partial \Q_{L}}{\partial b_{T}} & \frac{\partial \Q_{T}}{\partial b_{T}}
\end{array}\right)\left(\begin{array}{cc}
\frac{\partial \Q_{L}}{\partial b_{L}} & \frac{\partial \Q_{L}}{\partial b_{T}}\\
\frac{\partial \Q_{T}}{\partial b_{L}} & \frac{\partial \Q_{T}}{\partial b_{T}}
\end{array}\right)\\
 & =\left(\begin{array}{cc}
\left(\frac{\partial \Q_{L}}{\partial b_{L}}\right)^{2}+\left(\frac{\partial \Q_{T}}{\partial b_{L}}\right)^{2} & \frac{\partial \Q_{L}}{\partial b_{L}}\frac{\partial \Q_{L}}{\partial b_{T}}+\frac{\partial \Q_{L}}{\partial b_{T}}\frac{\partial \Q_{T}}{\partial b_{T}}\\
\frac{\partial \Q_{L}}{\partial b_{L}}\frac{\partial \Q_{L}}{\partial b_{T}}+\frac{\partial \Q_{T}}{\partial b_{L}}\frac{\partial \Q_{T}}{\partial b_{T}} & \left(\frac{\partial \Q_{L}}{\partial b_{T}}\right)^{2}+\left(\frac{\partial \Q_{T}}{\partial b_{T}}\right)^{2}
\end{array}\right)
\end{align*}
and so the inverse is
\[
\left[\nabla\widehat{M}\left(b\right)'\nabla\widehat{M}\left(b\right)\right]^{-1}=\frac{1}{{\rm det}\left[\nabla\widehat{M}\left(b\right)'\nabla\widehat{M}\left(b\right)\right]}\left(\begin{array}{cc}
\left(\frac{\partial \Q_{L}}{\partial b_{T}}\right)^{2}+\left(\frac{\partial \Q_{T}}{\partial b_{T}}\right)^{2} & -\left[\frac{\partial \Q_{L}}{\partial b_{L}}\frac{\partial \Q_{L}}{\partial b_{T}}+\frac{\partial \Q_{L}}{\partial b_{T}}\frac{\partial \Q_{T}}{\partial b_{T}}\right]\\
-\left[\frac{\partial \Q_{L}}{\partial b_{L}}\frac{\partial \Q_{L}}{\partial b_{T}}+\frac{\partial \Q_{T}}{\partial b_{L}}\frac{\partial \Q_{T}}{\partial b_{T}}\right] & \left(\frac{\partial \Q_{L}}{\partial b_{L}}\right)^{2}+\left(\frac{\partial \Q_{T}}{\partial b_{L}}\right)^{2}
\end{array}\right).
\]

The determinant is given by
\begin{align*}
{\rm det}\left[\nabla\widehat{M}\left(b\right)'\nabla\widehat{M}\left(b\right)\right] & =\left[\left(\frac{\partial \Q_{L}}{\partial b_{T}}\right)^{2}+\left(\frac{\partial \Q_{T}}{\partial b_{T}}\right)^{2}\right]\left[\left(\frac{\partial \Q_{L}}{\partial b_{L}}\right)^{2}+\left(\frac{\partial \Q_{T}}{\partial b_{L}}\right)^{2}\right]\\
 & -\left[\frac{\partial \Q_{L}}{\partial b_{L}}\frac{\partial \Q_{L}}{\partial b_{T}}+\frac{\partial \Q_{T}}{\partial b_{L}}\frac{\partial \Q_{T}}{\partial b_{T}}\right]^{2}\\
 & =\left(\frac{\partial \Q_{L}}{\partial b_{L}}\frac{\partial \Q_{T}}{\partial b_{T}}-\frac{\partial \Q_{T}}{\partial b_{L}}\frac{\partial \Q_{L}}{\partial b_{T}}\right)^{2}.
\end{align*}

If $\Delta > 1$ then the determinant is $1+o\left(1\right).$ If $\Delta =1$ it is the same.

So the inverse is, if $\Delta > 1,$
\[
\left[\nabla\widehat{M}\left(b\right)'\nabla\widehat{M}\left(b\right)\right]^{-1}=\frac{1}{1+o\left(1\right)}\left(\begin{array}{cc}
1+o\left(1\right) & o\left(1\right)\\
o\left(1\right) & 1+o\left(1\right)
\end{array}\right).
\]

We can compute the final object in the case $\Delta > 1$ as
\begin{align*}
-\left[\nabla\widehat{M}\left(b_0\right)'\nabla\widehat{M}\left(b_0\right)\right]^{-1}\nabla\widehat{M}\left(b_0\right)' & =\frac{1}{1+o\left(1\right)}\left(\begin{array}{cc}
1+o\left(1\right) & o\left(1\right)\\
o\left(1\right) & 1+o\left(1\right)
\end{array}\right)\left(\begin{array}{cc}
1+o\left(1\right) & o\left(1\right)\\
o\left(1\right) & 1+o\left(1\right)
\end{array}\right)\\
 & =\left(\begin{array}{cc}
1+o\left(1\right) & o\left(1\right)\\
o\left(1\right) & 1+o\left(1\right)
\end{array}\right)  \cvgto I,
\end{align*}
which completes the argument for the case of $\Delta>1$.

Meanwhile  if $\Delta = 1$ then the inverse is
\[
\left[\nabla\widehat{M}\left(b\right)'\nabla\widehat{M}\left(b\right)\right]^{-1}=\frac{1}{1+o\left(1\right)}\left(\begin{array}{cc}
2+o\left(1\right) & -1+o\left(1\right)\\
-1+o\left(1\right) & 1+o\left(1\right)
\end{array}\right).
\]
Therefore,
\begin{align*}
-\left[\nabla\widehat{M}\left(b_0\right)'\nabla\widehat{M}\left(b_0\right)\right]^{-1}\nabla\widehat{M}\left(b_0\right)' & =\frac{1}{1+o\left(1\right)}\left(\begin{array}{cc}
2+o\left(1\right) & -1+o\left(1\right)\\
-1+o\left(1\right) & 1+o\left(1\right)
\end{array}\right)\left(\begin{array}{cc}
1+o\left(1\right) & o\left(1\right)\\
1+o\left(1\right) & 1+o\left(1\right)
\end{array}\right)\\
 & =\left(\begin{array}{cc}
1+o\left(1\right) & -1+o\left(1\right)\\
o\left(1\right) & 1+o\left(1\right)
\end{array}\right)  \cvgto \left(\begin{array}{cc}
1 &  {-1}\\
0 & 1
\end{array}\right)
\end{align*}

Now consider the case with $\Delta = 1$. In this case since $n^{2-h_L} = n^{3-h_T}$, it follows from our calculations above that
\begin{align*}
\sqrt{n^{2-h_{L}}}R_{n}\left(\betanMD-\beta_{0}\right) & =\sqrt{n^{2-h_{L}}}\left(\begin{array}{cc}
1 &  {-1}\\
0 & 1
\end{array}\right)\left(\begin{array}{c}
\frac{n^{h_{L}}}{\binom{n}{2}}\sum g_{ij}-\Q_{L}\left(b_{0}\right)\\
\frac{n^{h_{T}}}{\binom{n}{3}}\sum g_{ij}g_{ik}g_{jk}-\Q_{T}\left(b_{0}\right)
\end{array}\right)\\
 & =\sqrt{n^{2-h_{L}}}\left(\begin{array}{c}
\{\frac{n^{h_{L}}}{\binom{n}{2}}\sum g_{ij}-\Q_{L}\left(b_{0}\right)\}- \{\frac{n^{h_{T}}}{\binom{n}{3}}\sum g_{ij}g_{ik}g_{jk}-\Q_{T}\left(b_{0}\right)\}\\
\frac{n^{h_{T}}}{\binom{n}{3}}\sum g_{ij}g_{ik}g_{jk}-\Q_{T}\left(b_{0}\right).
\end{array}\right),
\end{align*}
which still jointly converge to a mean zero random variable, but with a different variance-covariance matrix:
\begin{align*}
V_{n}=\left(\begin{array}{cc}
\var(n^{h_L} S_L -  n^{h_T}S_T) & \cov(n^{h_L} S_L - n^{h_T}S_T,n^{h_T} S_T)\\
\cov(n^{h_L} S_L -  n^{h_T}S_T,n^{h_T} S_T) & \var(n^{h_T} S_T)
\end{array}\right)
\end{align*}
for the $\Delta=1$ case.
\end{proof}

\clearpage
\newpage

\setcounter{table}{0}
\renewcommand{\thetable}{D.\arabic{table}}
\setcounter{figure}{0}
\renewcommand{\thefigure}{D.\arabic{figure}}

\section{Simulations}\label{sec:sim_consistency}

We demonstrate that $\betanMD$ is consistent and  asymptotically normal distributed and also investigate the quality of estimated standard errors. We show that the minimum distance estimator performs well in the links and triangles model---estimates are centered on the true value of the parameters, the distribution indeed looks asymptotically normal, and the constructed 95\% confidence interval has proper coverage.

We set $h_L = 1$, $h_T = h_L+1$, and $b_{0,L} = b_{0,T}$, which we vary in order to generate the requisite average degrees described below.

\subsection{Consistency}

First we present the point estimates of our minimum distance and direct estimators over a wide range of average degrees. Consistent with the theory, the estimators coincide at low densities and as we look at increasingly dense graphs, the direct estimator misattributes direct link formation to triangles and inherits a bias, whereas the minimum distance estimator does not. Figure \ref{fig:consistency} presents the results.

\begin{figure}[!h]
\centering
		\subfloat[$n = 200, \ \beta_L$]{
	\includegraphics[trim = 0in 0in 0in 0.2in, clip = true, scale=0.35]{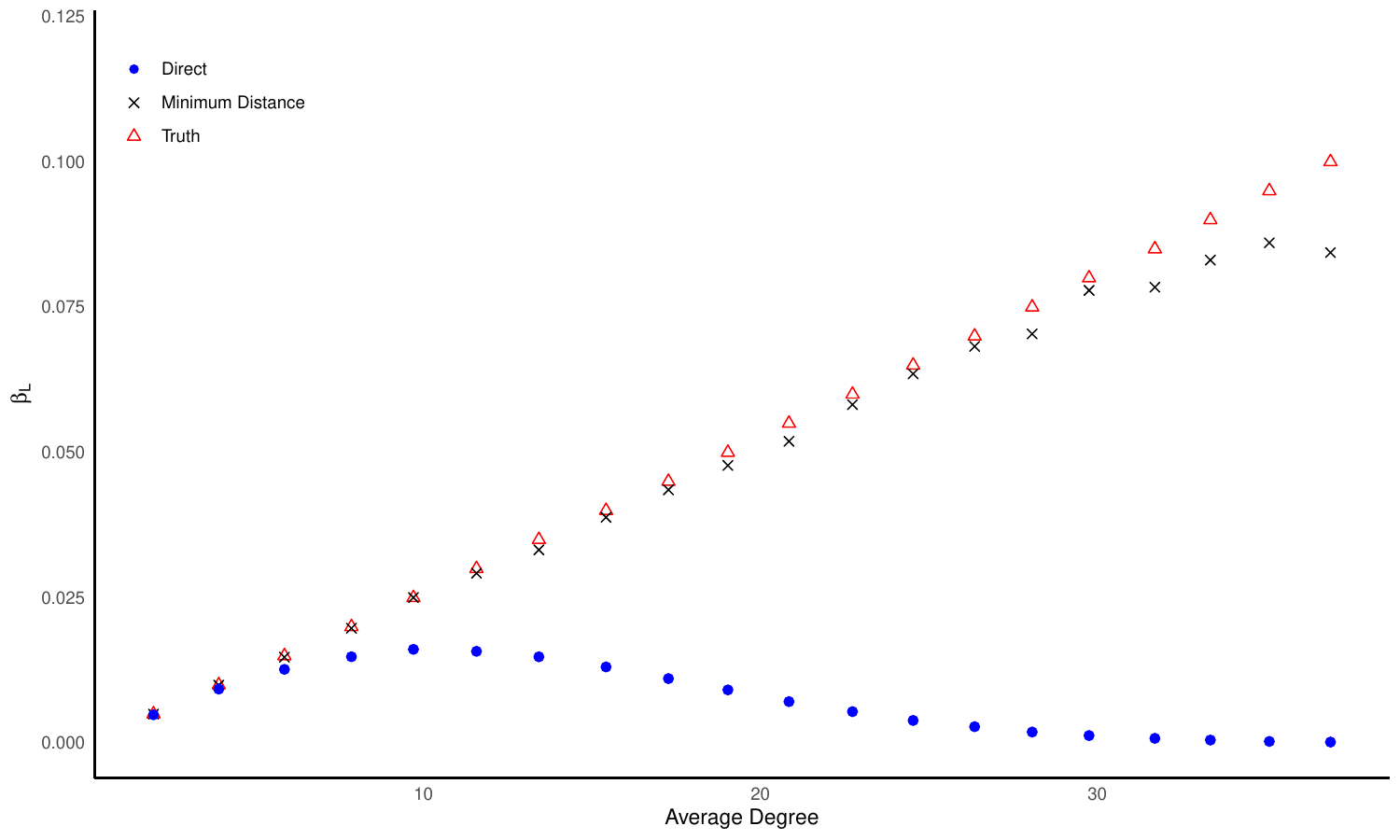}
}
	\subfloat[$n = 200, \ \beta_T$]{
	\includegraphics[trim = 0in 0in 0in 0.2in, clip = true, scale=0.35]{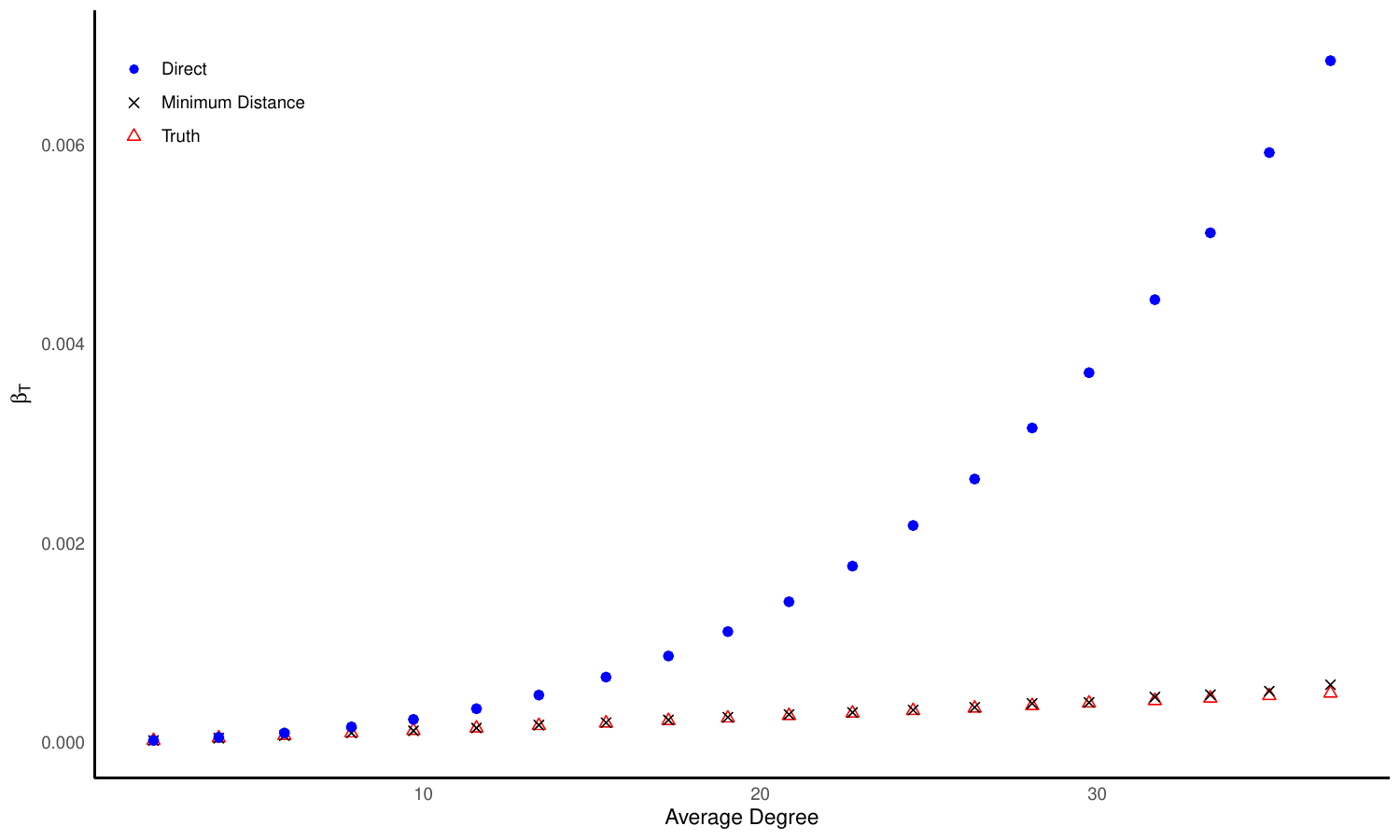}
}

		\subfloat[$n = 500, \ \beta_L$]{
	\includegraphics[trim = 0in 0in 0in 0.2in, clip = true, scale=0.35]{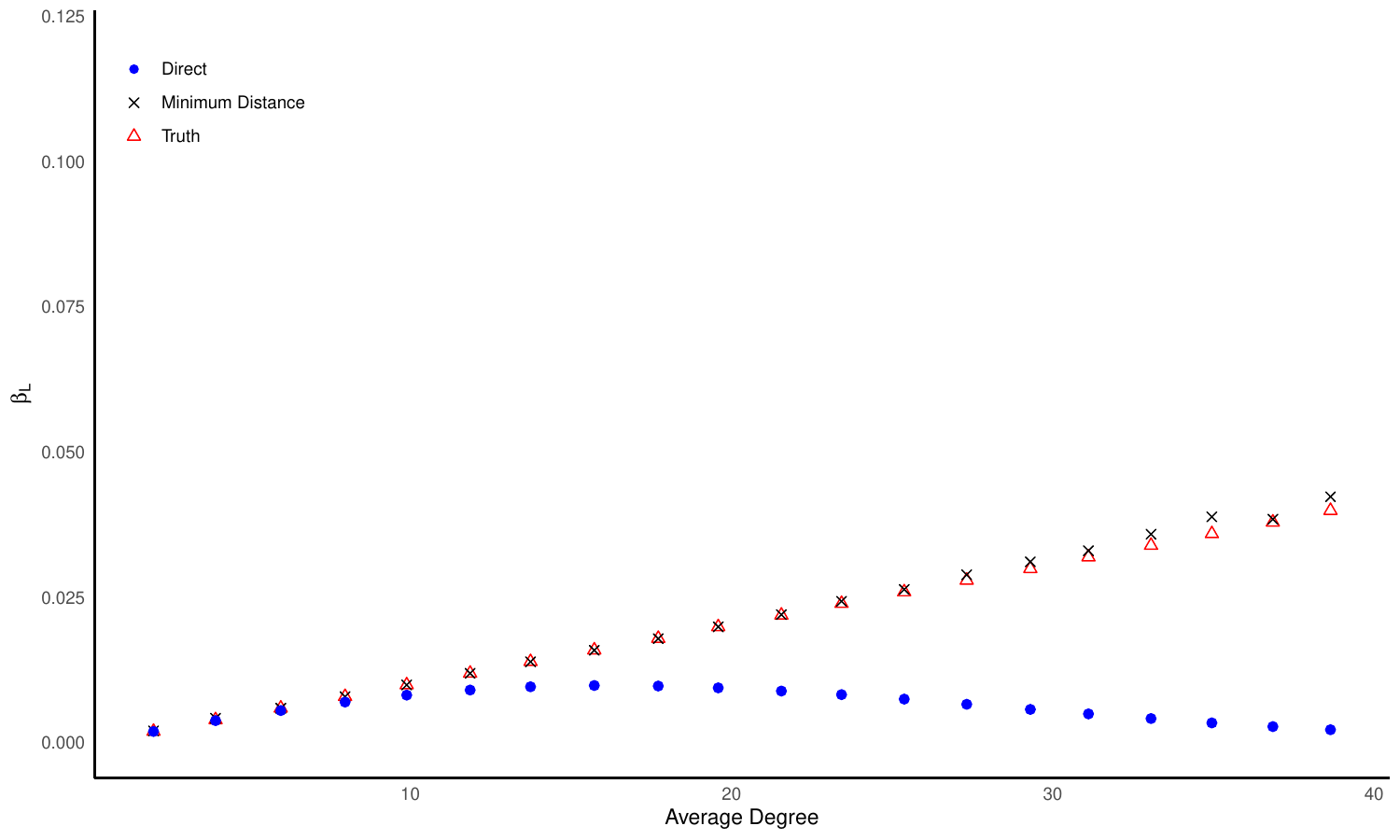}
}
	\subfloat[$n = 500, \ \beta_T$]{
	\includegraphics[trim = 0in 0in 0in 0.2in, clip = true, scale=0.35]{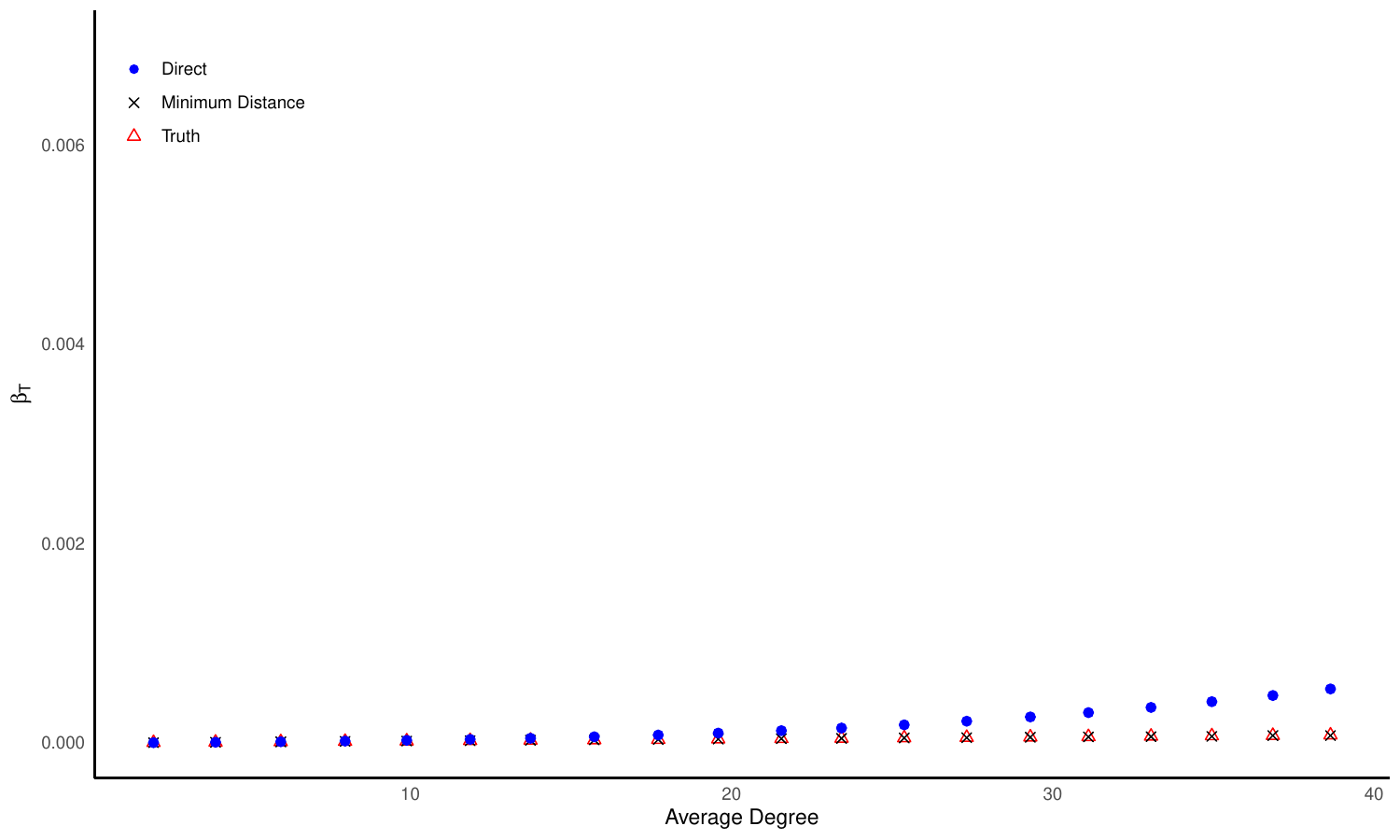}
}

		\subfloat[$n = 1000, \ \beta_L$]{
	\includegraphics[trim = 0in 0in 0in 0.2in, clip = true, scale=0.35]{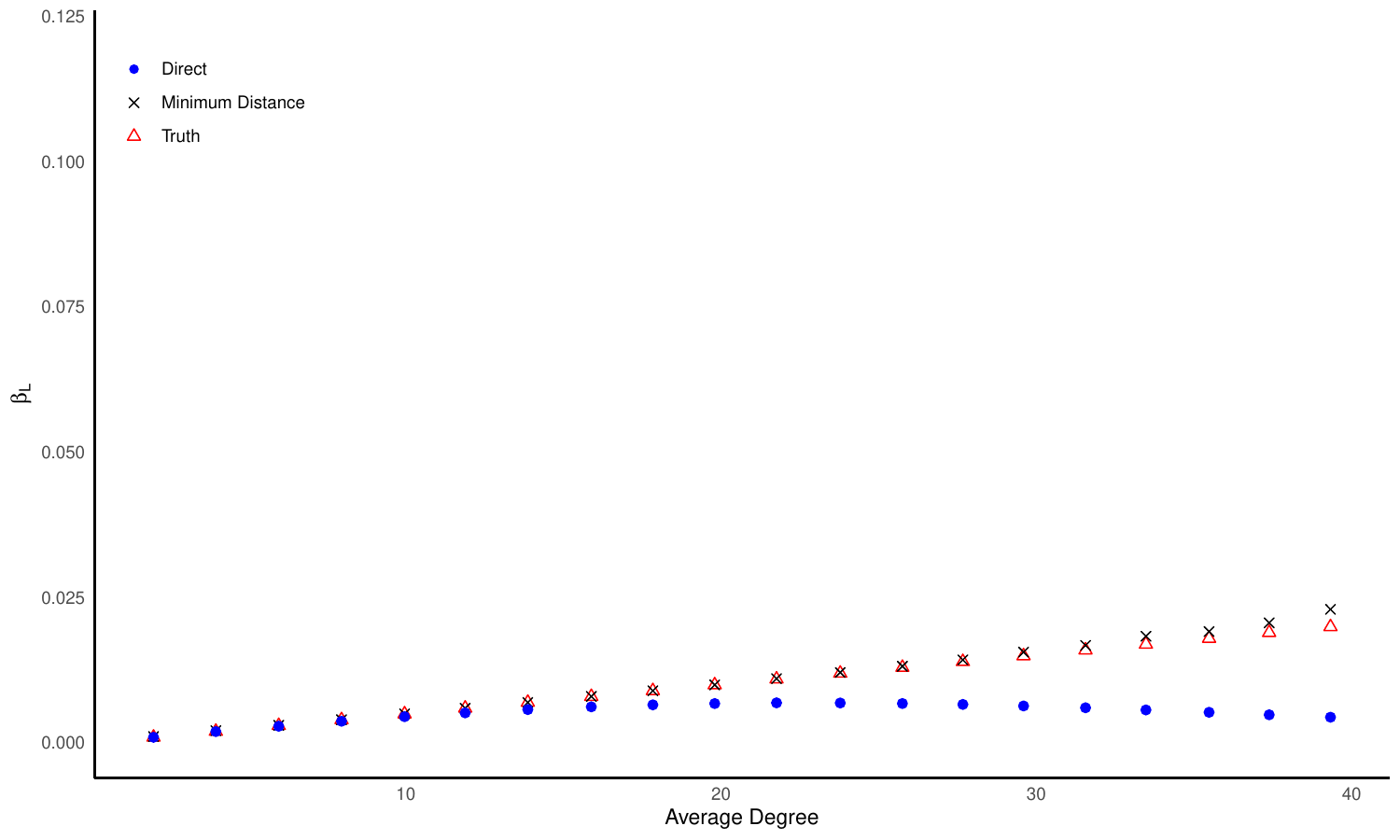}
}
	\subfloat[$n = 1000, \ \beta_T$]{
	\includegraphics[trim = 0in 0in 0in 0.2in, clip = true, scale=0.35]{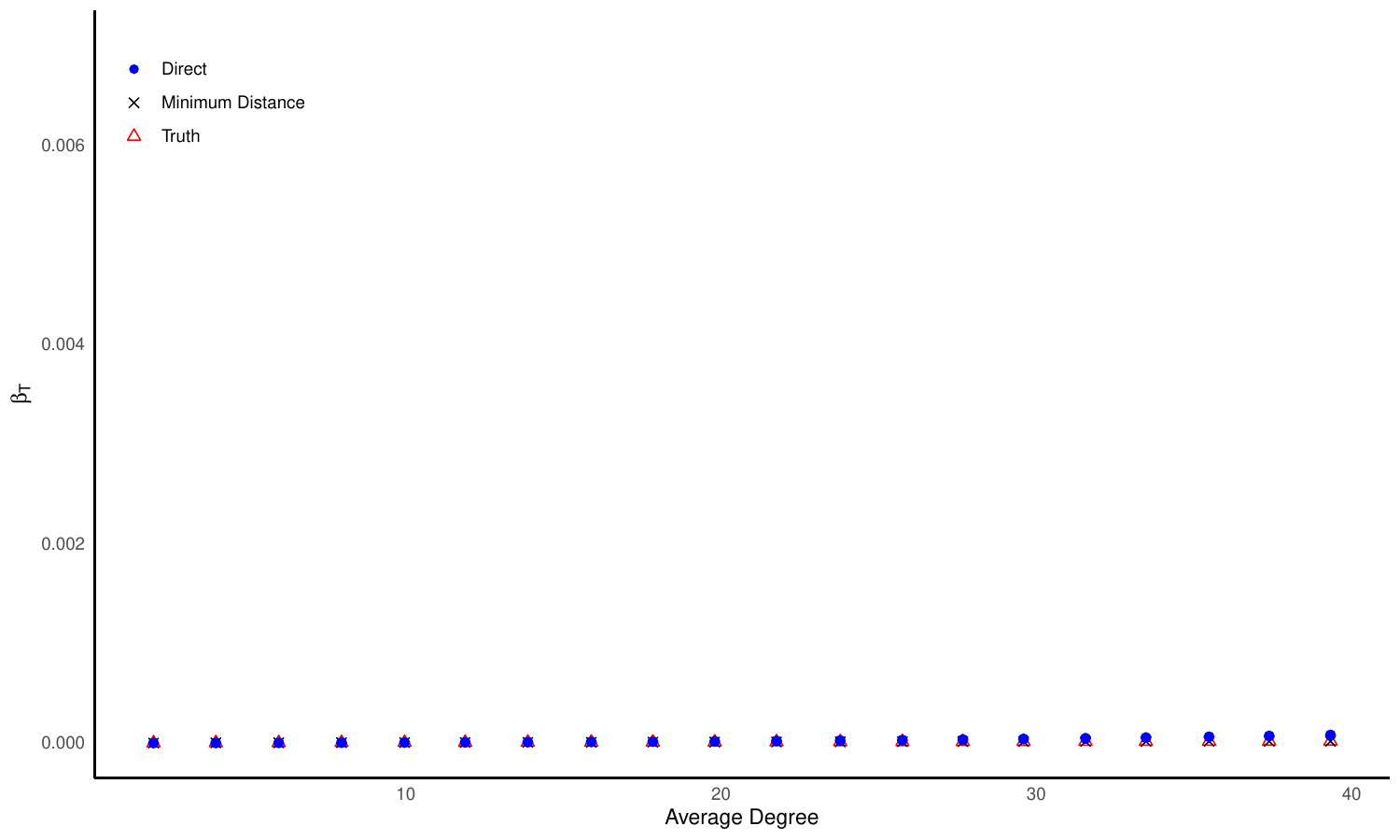}
}
	\caption{Estimates from the mimimum distance $\betanMD$ and direct-count $\betanDC$ estimators for $n \in \{200,500,1000\}$ with 100 simulations used.}
\label{fig:consistency}
\end{figure}

\subsection{Asymptotic Normality}

Next, we turn to the asymptotic distribution of the estimator and show that it is normal. We display the (standardized) parameter estimates $\betanMD$ and a standard normal distribution for lower and higher degrees in the range consistent with data and across a broad range of network sizes. The results, displayed in Figures \ref{fig:asymptotic_normality-6} and \ref{fig:asymptotic_normality-14} clearly show that the normal approximation is good.

\begin{figure}[!h]
		\subfloat[$n = 200, \ \beta_L$]{
	\includegraphics[trim = 0in 0in 0in 0.27in, clip = true, scale = 0.3]{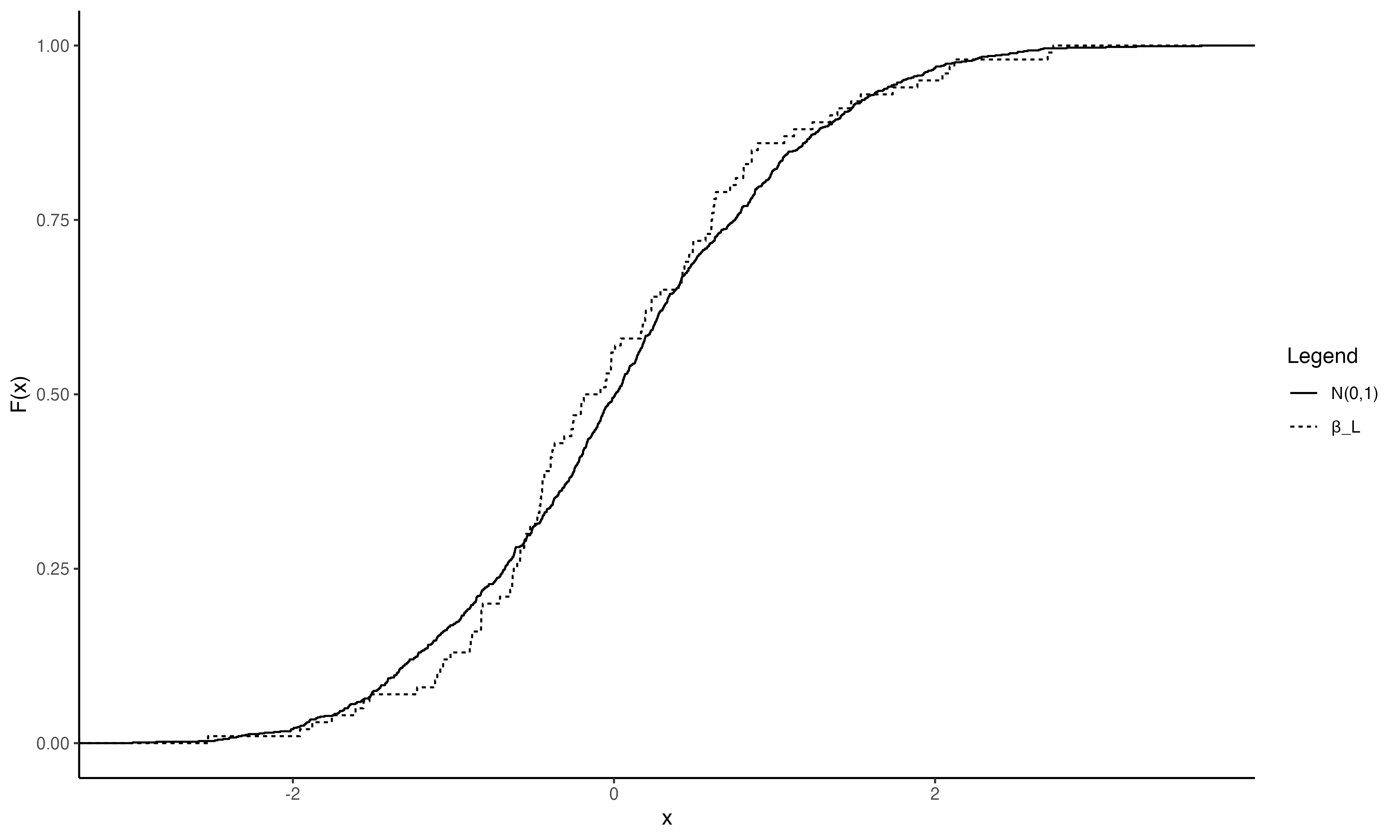}
	}
	\subfloat[$n = 200, \ \beta_T$]{
	\includegraphics[trim = 0in 0in 0in 0.27in, clip = true, scale = 0.3]{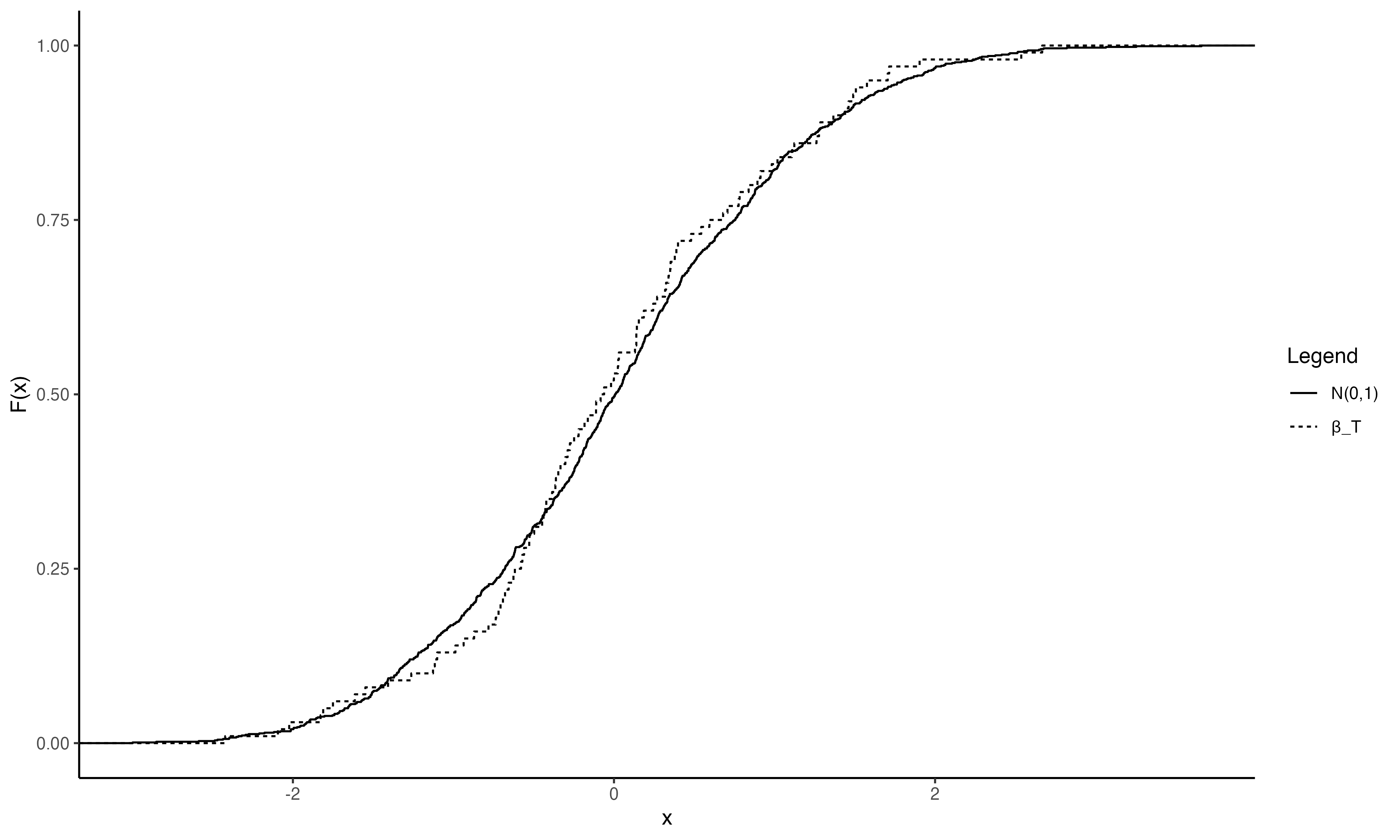}
	}

	\subfloat[$n = 500, \ \beta_L$]{
	\includegraphics[trim = 0in 0in 0in 0.27in, clip = true, scale = 0.3]{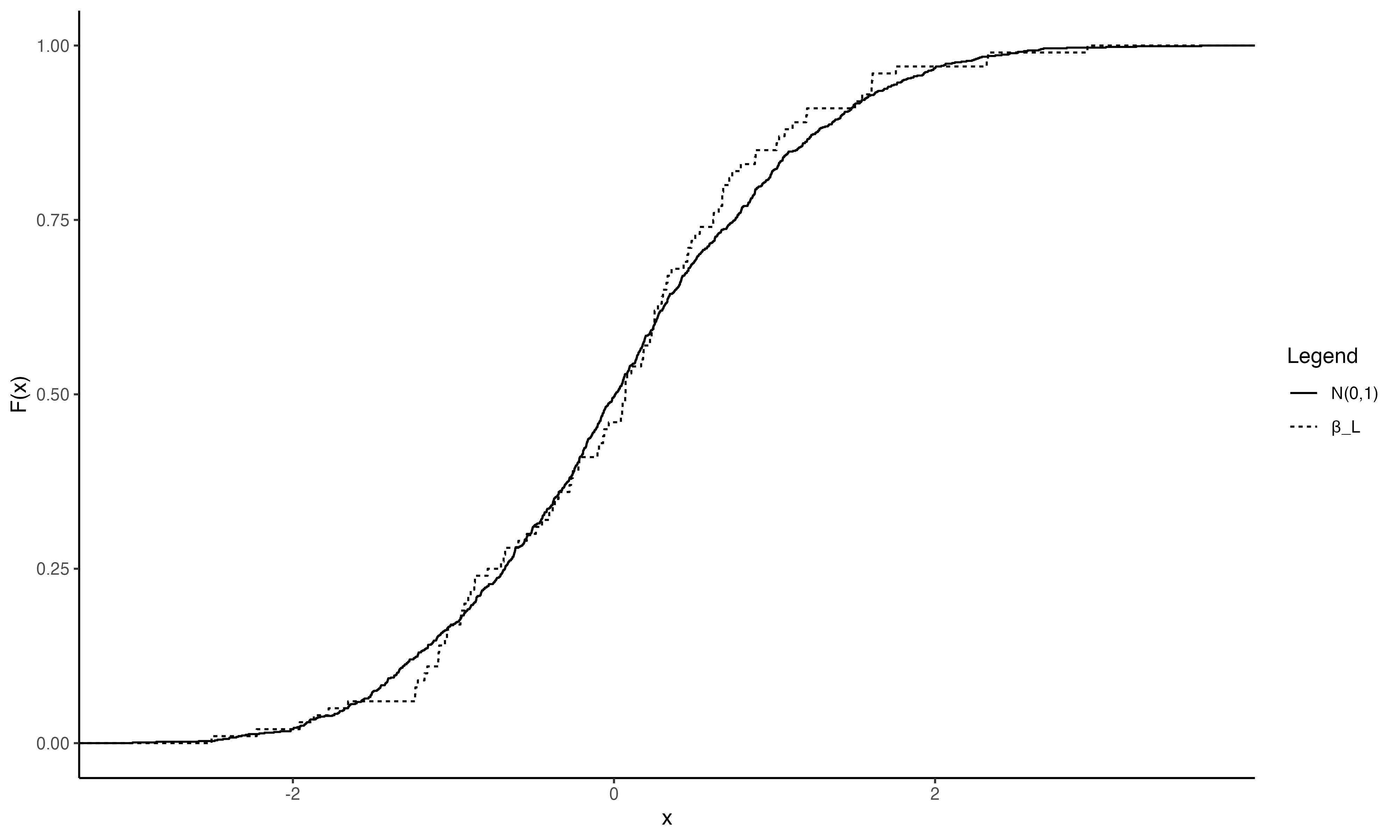}
	}
	\subfloat[$n = 500, \ \beta_T$]{
	\includegraphics[trim = 0in 0in 0in 0.27in, clip = true, scale = 0.3]{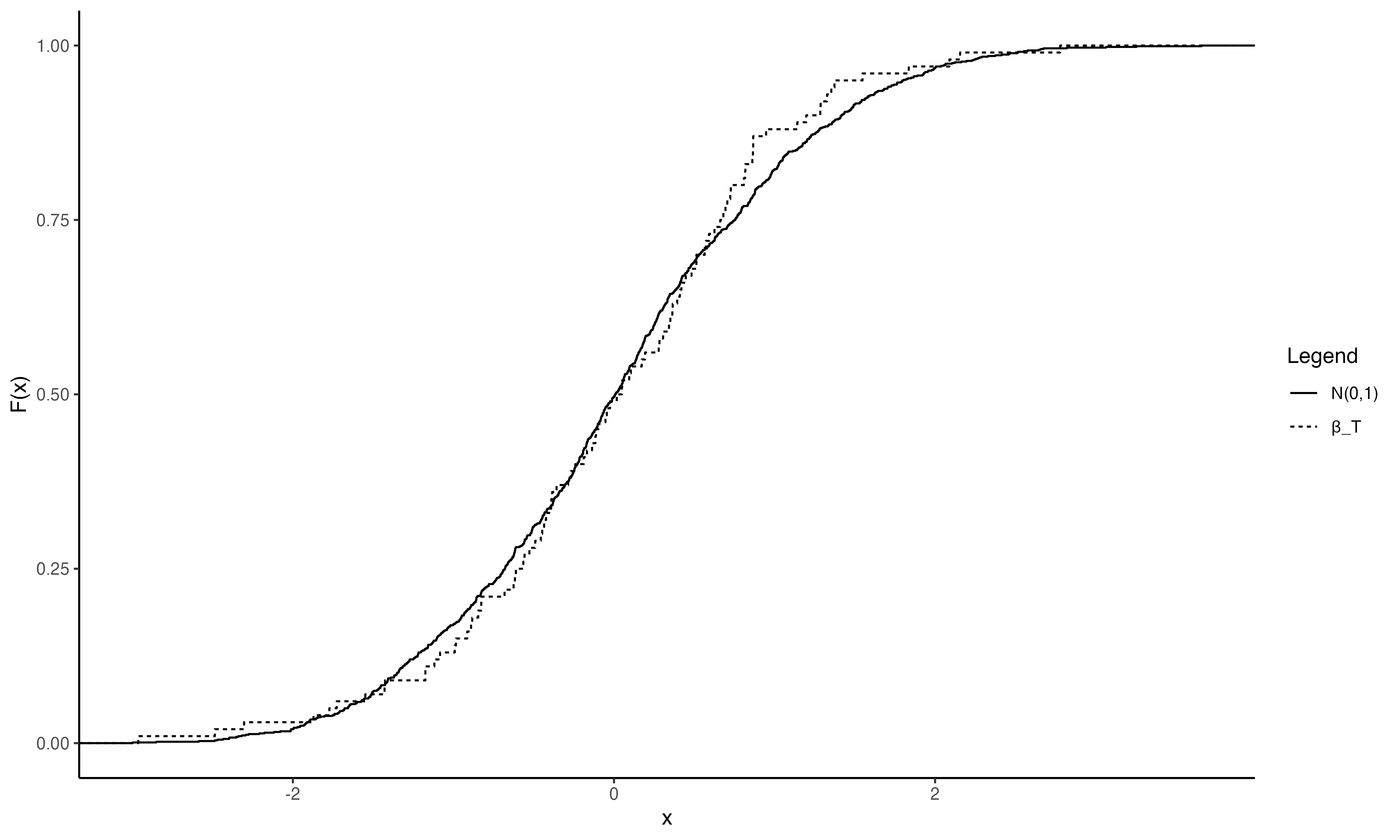}
}

	\subfloat[$n = 1000, \ \beta_L$]{
	\includegraphics[trim = 0in 0in 0in 0.27in, clip = true, scale = 0.3]{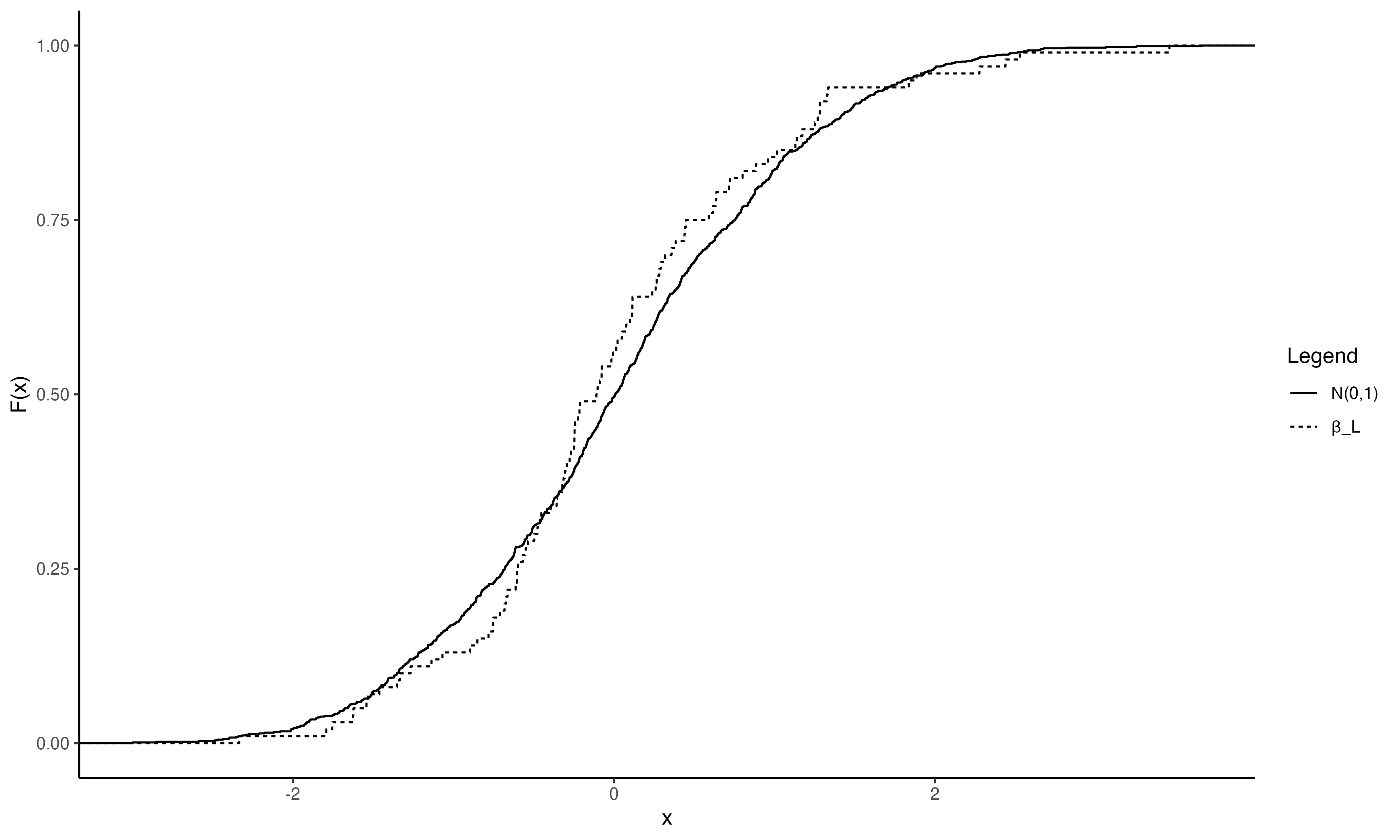}
}
\subfloat[$n = 1000,  \ \beta_T$]{
	\includegraphics[trim = 0in 0in 0in 0.27in, clip = true, scale = 0.3]{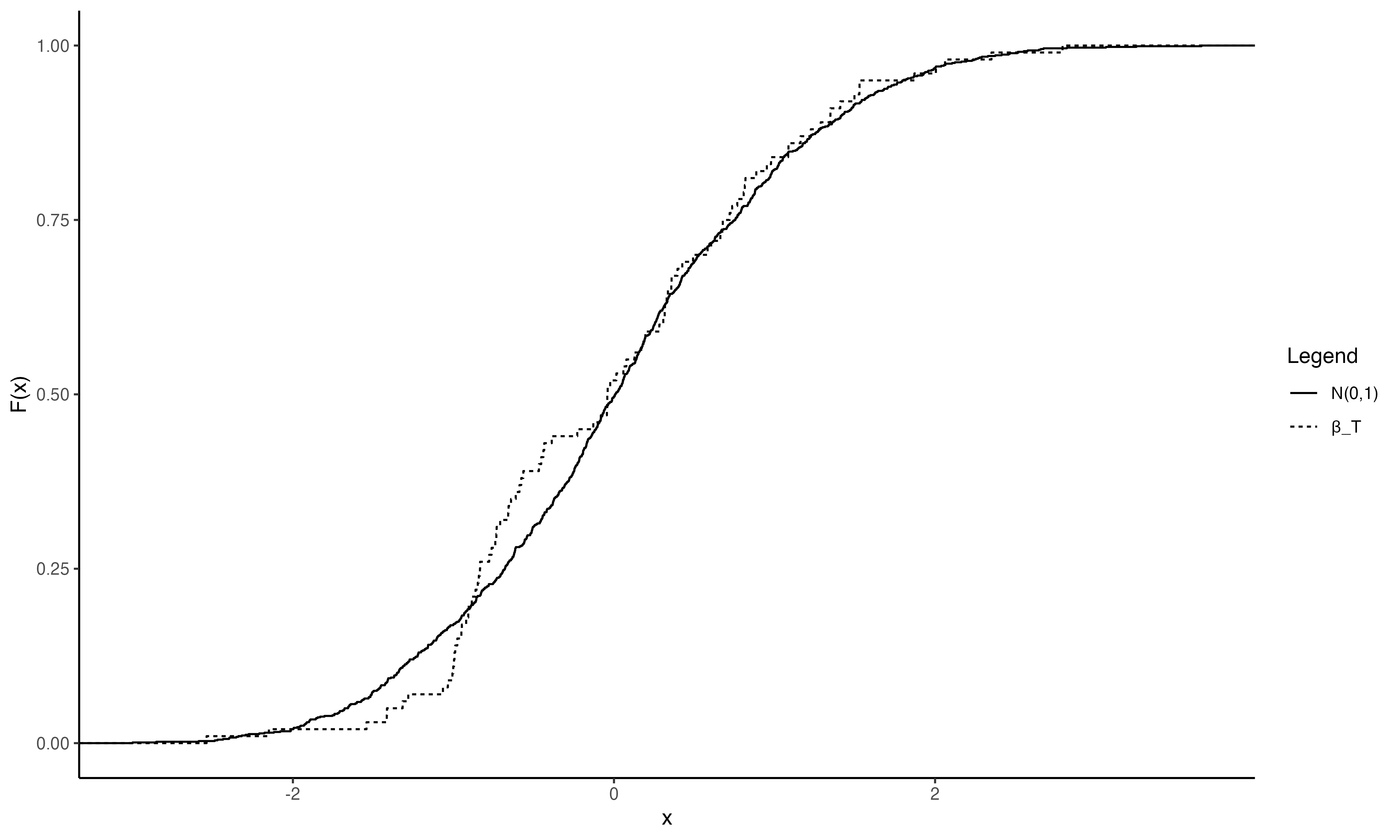}
}

	\caption{CDFs of standardized parameter estimates $\betanMD$ for average degrees of 5.9 with $n \in \{200, 500, 1000\}$ with 100 simulations.}
\label{fig:asymptotic_normality-6}
\end{figure}

\begin{figure}[!h]
		\subfloat[$n = 200, \ \beta_L$]{
	\includegraphics[trim = 0in 0in 0in 0.27in, clip = true, scale = 0.3]{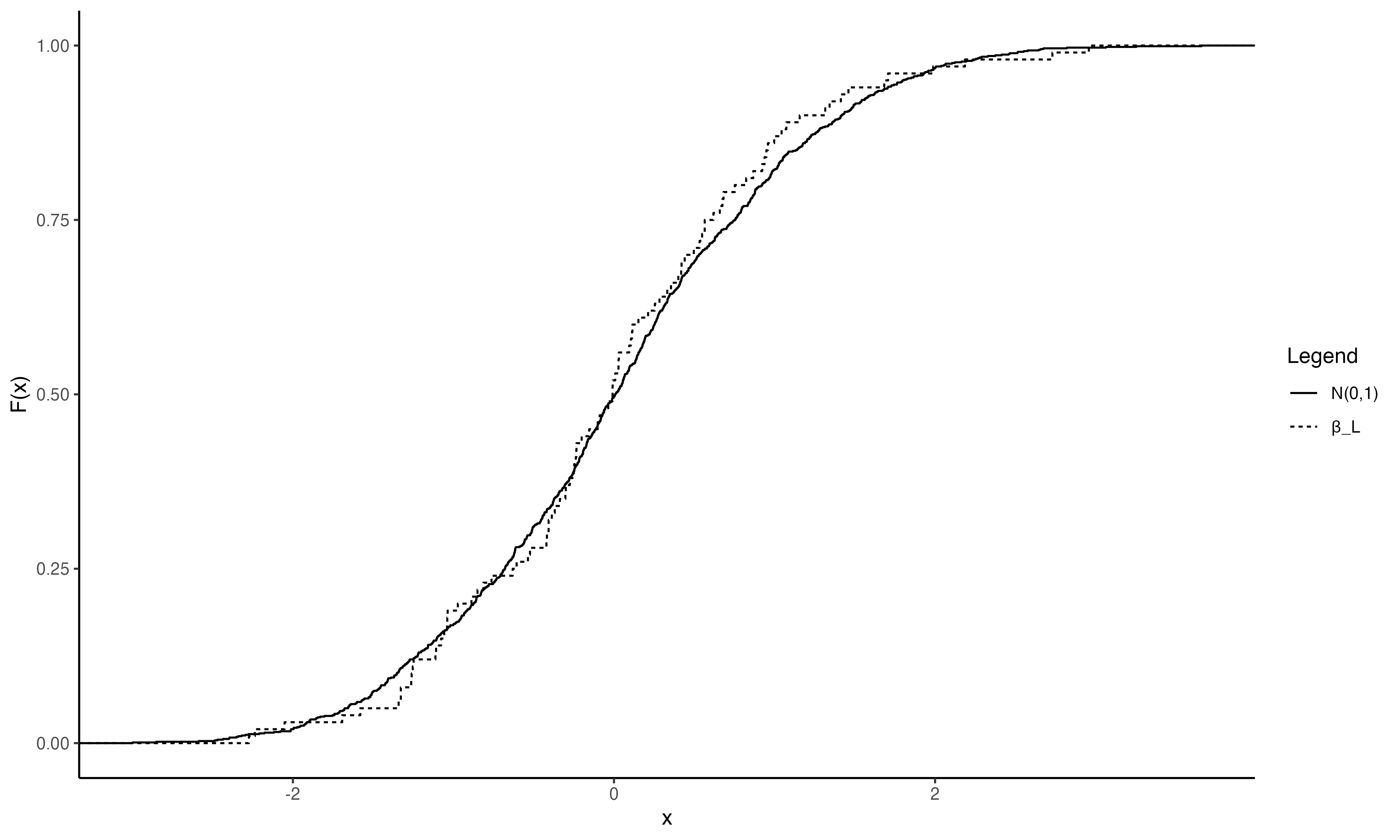}
	}
	\subfloat[$n = 200, \ \beta_T$]{
	\includegraphics[trim = 0in 0in 0in 0.27in, clip = true, scale = 0.3]{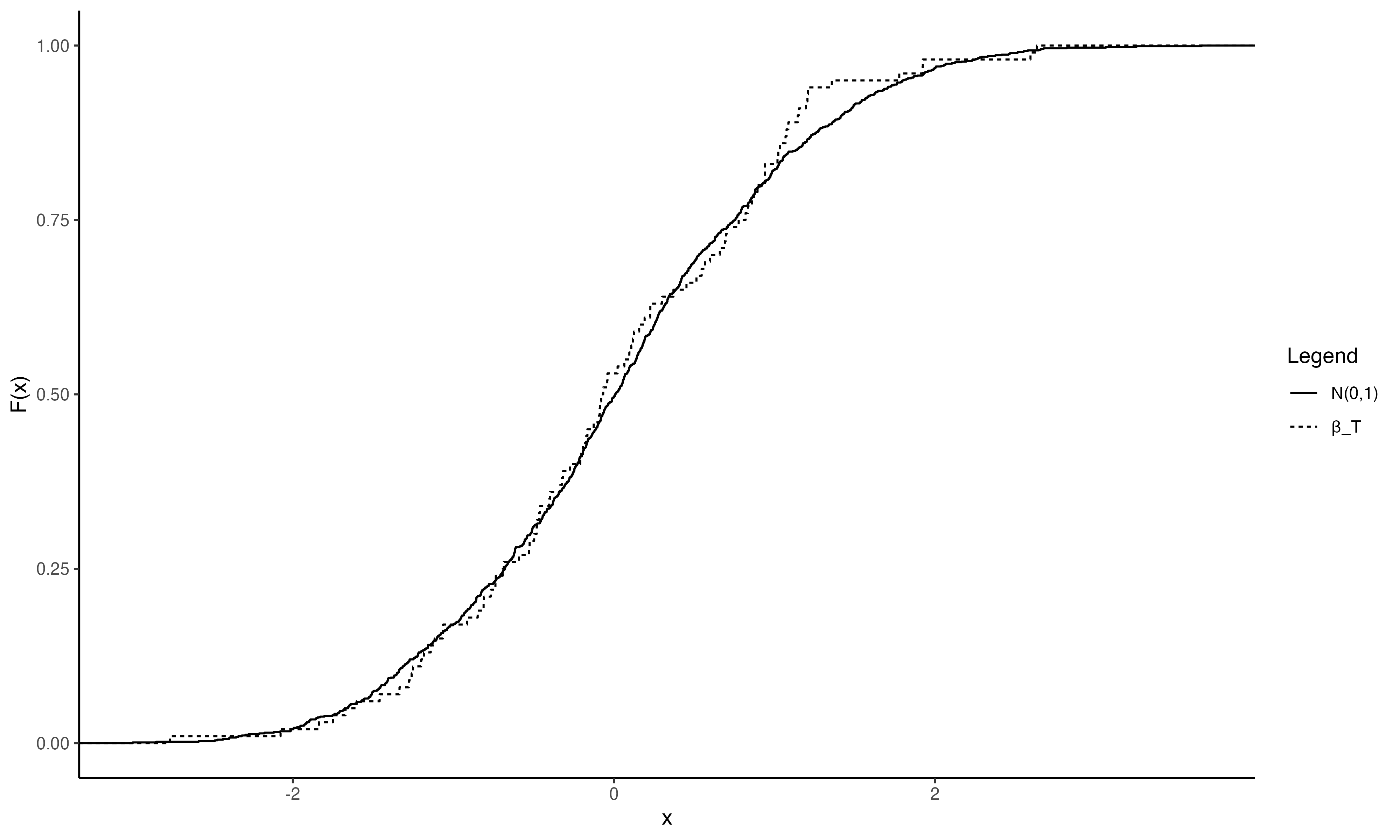}
	}

	\subfloat[$n = 500, \ \beta_L$]{
	\includegraphics[trim = 0in 0in 0in 0.27in, clip = true, scale = 0.3]{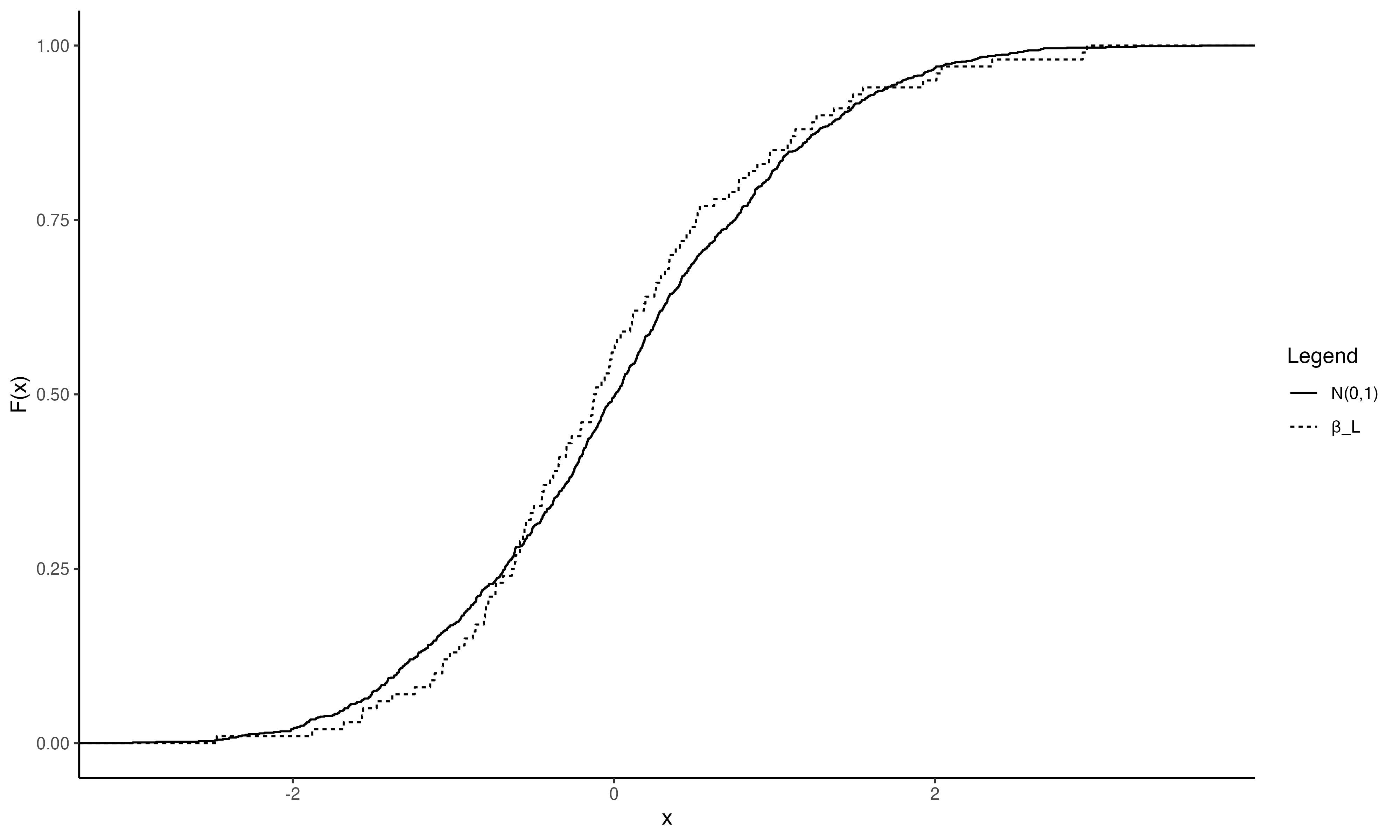}
	}
	\subfloat[$n = 500, \ \beta_T$]{
	\includegraphics[trim = 0in 0in 0in 0.27in, clip = true, scale = 0.3]{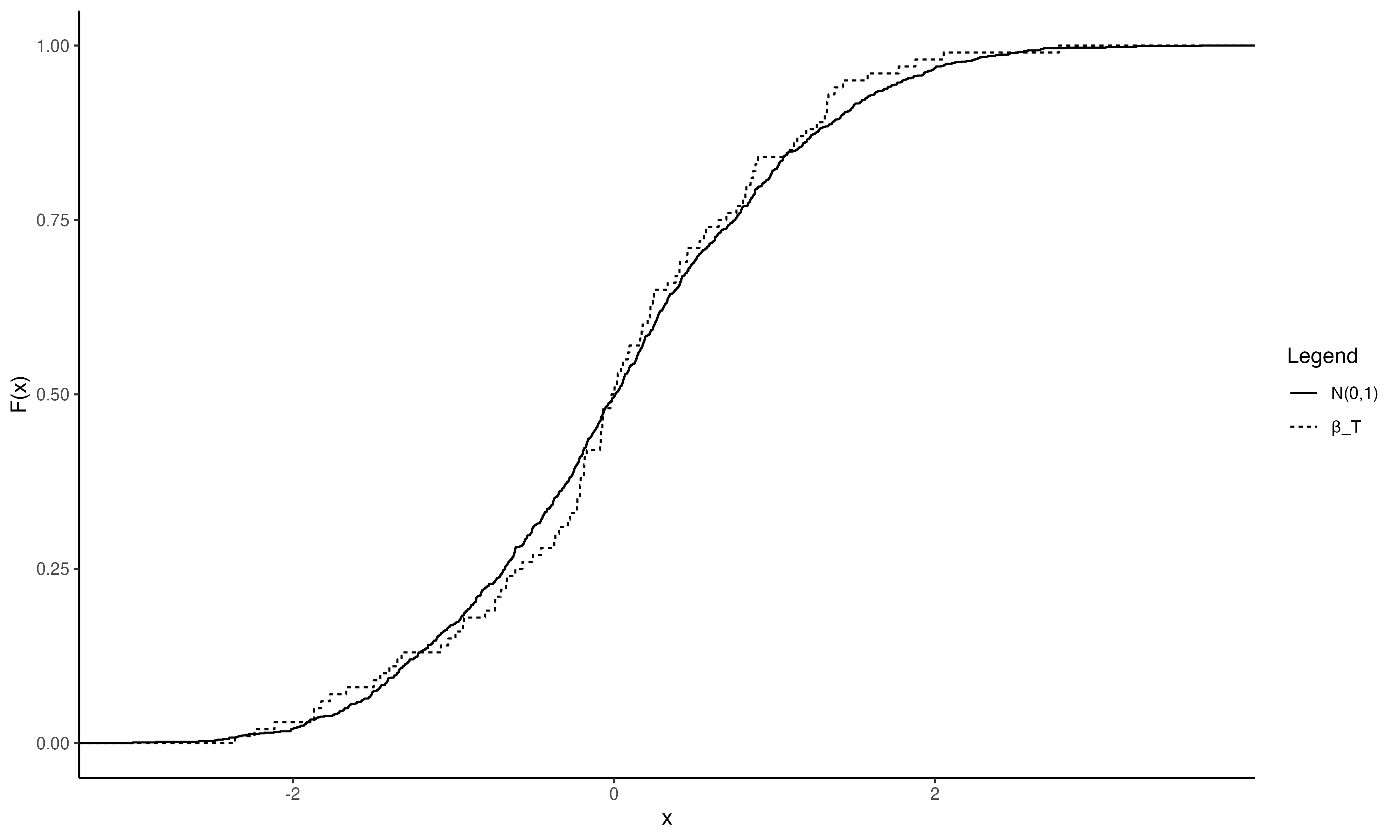}
}

	\subfloat[$n = 1000, \ \beta_L$]{
	\includegraphics[trim = 0in 0in 0in 0.27in, clip = true, scale = 0.3]{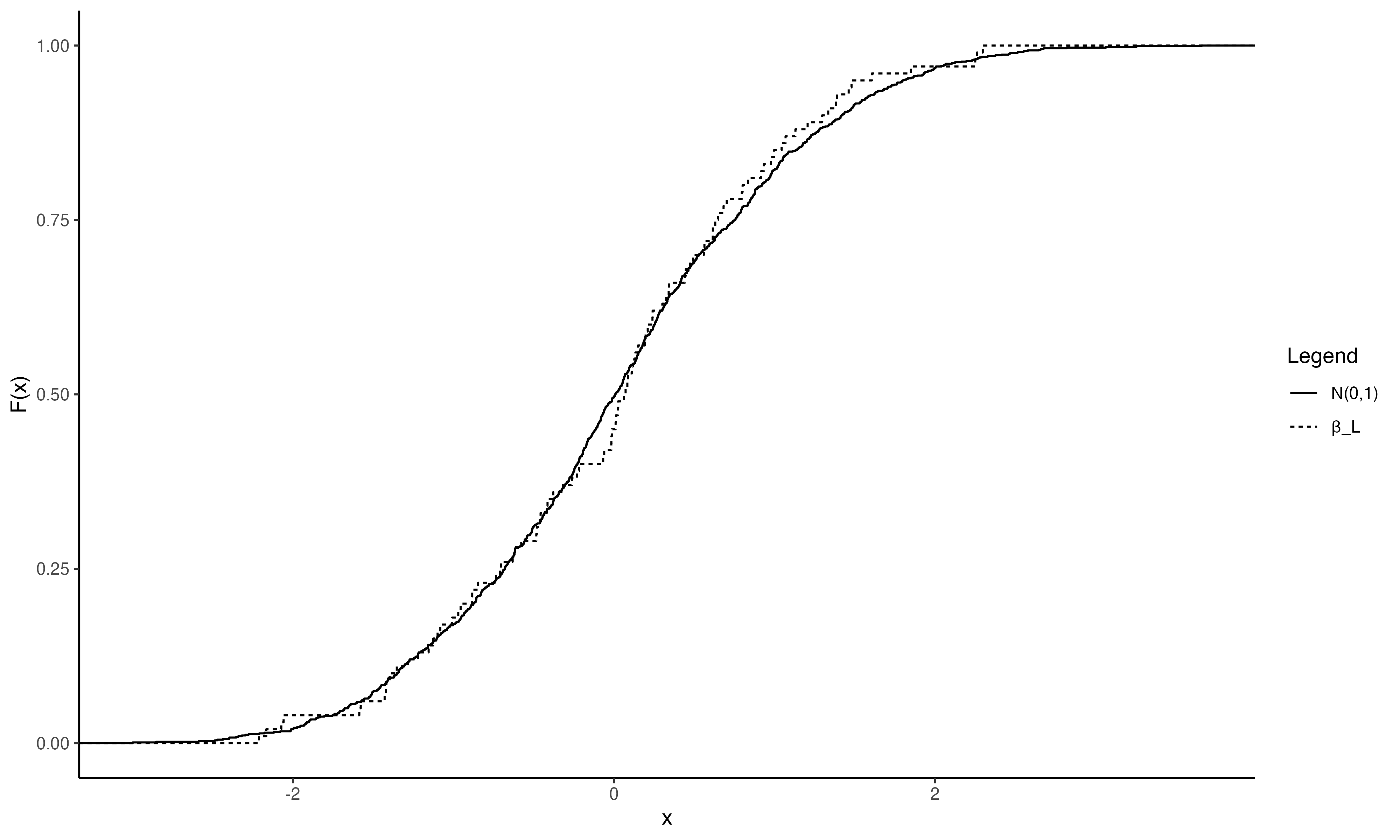}
}
\subfloat[$n = 1000,  \ \beta_T$]{
	\includegraphics[trim = 0in 0in 0in 0.27in, clip = true, scale = 0.3]{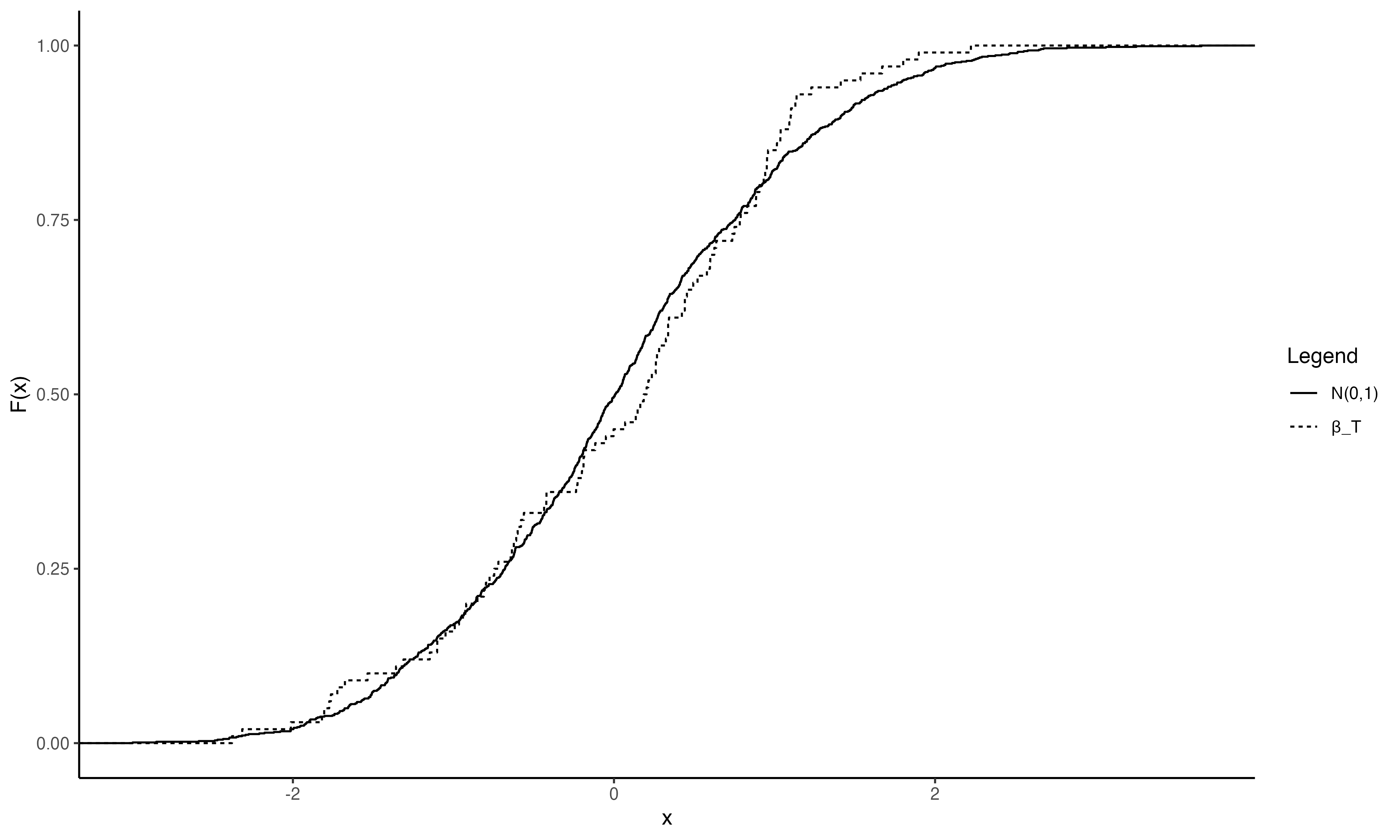}
}

	\caption{CDFs of standardized parameter estimates $\betanMD$ for average degrees of 13.5 with $n \in \{200, 500, 1000\}$ with 100 simulations.}
\label{fig:asymptotic_normality-14}
\end{figure}

\clearpage
\subsection{Coverage}

Having seen that both the parameter estimates are accurate and appear to be asymptotically normally distributed, consistent with theory, finally we investigate the analytic, estimated standard errors. We construct these using the minimum distance estimator of Proposition \ref{prop:LT_SUGM1-2}, $\betanMD$. We omit for brevity but note that by simply simulating the model from the estimated parameters and then calculating standard errors through this process has excellent coverage properties, but developing the theory of a bootstrap is beyond the scope of the present paper and remains a topic for future work.

\begin{table}[!h]
\centering
\caption{Coverage for $\beta_L$}
\centering
\fontsize{10}{12}\selectfont
\begin{threeparttable}
\begin{tabular}[t]{rrrrrrrr}
\toprule
Avg. Degree & 100 & 150 & 200 & 250 & 300 & 350 & 400\\
\midrule
7.914 & 0.945 & 0.945 & 0.980 & 0.975 & 0.985 & 1.000 & 0.995\\
9.890 & 0.935 & 0.955 & 0.955 & 0.970 & 0.960 & 0.965 & 1.000\\
11.804 & 0.870 & 0.935 & 0.900 & 0.960 & 0.970 & 0.955 & 0.960\\
13.779 & 0.885 & 0.920 & 0.930 & 0.945 & 0.935 & 0.925 & 0.980\\
15.647 & 0.830 & 0.875 & 0.880 & 0.930 & 0.915 & 0.945 & 0.945\\
\bottomrule
\end{tabular}
\begin{tablenotes}[para]
\item \textit{Note: }
\item Note: Coverage of the 95\% confidence interval for the link probability parameter
           implied by Proposition 3, for various average degrees for various network sizes.
           For each network size we and each degree we conduct 200 simulations.
\end{tablenotes}
\end{threeparttable}
\end{table}

\begin{table}[!h]
\centering
\caption{Coverage for $\beta_T$}
\centering
\fontsize{10}{12}\selectfont
\begin{threeparttable}
\begin{tabular}[t]{rrrrrrrr}
\toprule
Avg. Degree & 100 & 150 & 200 & 250 & 300 & 350 & 400\\
\midrule
7.914 & 0.930 & 0.930 & 0.930 & 0.930 & 0.940 & 0.960 & 0.945\\
9.890 & 0.925 & 0.890 & 0.950 & 0.945 & 0.930 & 0.920 & 0.930\\
11.804 & 0.930 & 0.940 & 0.930 & 0.970 & 0.965 & 0.925 & 0.950\\
13.779 & 0.905 & 0.915 & 0.945 & 0.960 & 0.930 & 0.950 & 0.940\\
15.647 & 0.910 & 0.940 & 0.955 & 0.945 & 0.920 & 0.950 & 0.955\\
\bottomrule
\end{tabular}
\begin{tablenotes}[para]
\item \textit{Note: }
\item Note: Coverage of the 95\% confidence interval for the triangle probability parameter
           implied by Proposition 3, for various average degrees for various network sizes.
           For each network size we and each degree we conduct 200 simulations.
\end{tablenotes}
\end{threeparttable}
\end{table}

To take stock at an aggregate level, in this range the overall average coverage is 0.941 and 0.936 for links and triangles respectively.

\clearpage

\end{document}